\numberwithin{equation}{section}
\theoremstyle{plain}
\newtheorem{thm}{Theorem}[section]
\newtheorem{cor}[thm]{Corollary}
\newtheorem{lem}[thm]{Lemma}
\newtheorem{prop}[thm]{Proposition}
\newtheorem{conj}[thm]{Conjecture}
\theoremstyle{definition}
\newtheorem{defin}[thm]{Definition}
\newtheorem{rem}[thm]{Remark}
\DeclareMathOperator*{\Res}{Res}
\numberwithin{equation}{section}
\def\mainmatter{\def\baselinestretch{1}\normalfont \setlength{\parskip}{0.5em}}
\newcommand{\bC}{\mathbb{C}}
\newcommand{\bZ}{\mathbb{Z}}
\newcommand{\bP}{\mathbb{P}}
\newcommand{\cS}{\mathcal{S}}
\newcommand{\cP}{\mathcal{P}}
\newcommand{\cO}{\mathcal{O}}
\newcommand{\cA}{\mathcal{A}}
\newcommand{\cB}{\mathcal{B}}
\newcommand{\cR}{\mathcal{R}}
\newcommand{\cW}{\mathcal{W}}
\newcommand{\sQ}{\mathscr{Q}}
\newcommand{\sA}{\mathscr{A}}
\renewcommand{\section}{\@startsection
{section}
{1}
{\z@}
{-\baselineskip}
{0.8\baselineskip}
{\centering\scshape\large}} 
\renewcommand{\subsection}{\@startsection
{subsection}
{2}
{\z@}
{-0.8\baselineskip}
{0.5\baselineskip}
{\normalfont \bf \normalsize}} 
\renewcommand{\subsubsection}{\@startsection
{subsubsection}
{3}
{\z@}
{-0.8\baselineskip}
{0.5\baselineskip}
{\normalfont \bf \normalsize}} 
\begin{document}
\title{Deformation and quantisation condition \\
of the $\sQ$-top recursion}

\author{Kento Osuga}
\address{Graduate School of Mathematical Sciences, University of Tokyo,\\
3-8-1 Komaba, Meguro, Tokyo, 153-8914, Japan}
\email{osuga@ms.u-tokyo.ac.jp}

\begin{abstract}
We consider a deformation of a family of hyperelliptic refined spectral curves and investigate how deformation effects appear in the hyperelliptic refined topological recursion as well as the $\sQ$-top recursion. We then show a coincidence between a deformation condition and a quantisation condition in terms of the $\sQ$-top recursion on a degenerate elliptic curve. We also discuss a relation to the corresponding Nekrasov-Shatashivili effective twisted superpotential.
\end{abstract}

\newpage\maketitle
\setcounter{tocdepth}{1}\tableofcontents\mainmatter

\newpage
\section{Introduction}
The purpose of the present paper is twofold. One is to describe the so-called variational formula in the framework of the hyperelliptic refined topological recursion as well as the $\sQ$-top recursion proposed in \cite{KO22,O23}. The other is to reveal an intriguing coincidence between a deformation condition and a quantisation condition in terms of the $\sQ$-top recursion as an application of the variational formula. 

\subsection{Motivations and Backgrounds}

Since motivations and backgrounds of a refinement of topological recursion are discussed in \cite{KO22,O23} in detail, we only give a brief review of recent developments in this direction.

 As defined in \cite{O23} (and in \cite{KO22} for a special class of genus-zero curves), a \emph{hyperelliptic refined spectral curve} $\cS_{\bm\kappa,\bm\mu}$ consists of three data: a compactified and normalised Torelli-marked hyperelliptic curve $C=(\Sigma,x,y)$ of genus $\tilde g$\footnote{We abuse the terminology and include curves of $\tilde g=0,1$.}, complex parameters $\bm\kappa$ associated with the Torelli markings, and complex parameters $\bm\mu$ associated with non-ramified zeroes and poles of a differential $ydx$\footnote{Strictly speaking, $ydx$ has to be anti-symmetrised in terms of the hyperelliptic involution $\sigma$.}. We often drop `hyperelliptic' for brevity. Taking a refined spectral curve as initial data, the \emph{refined topological recursion} constructs an infinite sequence of multidifferentials $\omega_{g,n}$ on $\Sigma^n$ labeled by $n\in\bZ_{\geq0}$ and $g\in\frac12\bZ_{\geq0}$ --- $g$ is different from the genus of $\Sigma$. \cite{KO22,O23} proved or conjectured properties of $\omega_{g,n}$. Several results based on matrix models have also been discussed in e.g. \cite{CE06,BMS10,C10,CEM10,MS15}.
 
 The multidifferentials $\omega_{g,n}$ polynomially depend on the refinement parameter $\sQ$, up to $\sQ^{2g}$. It is easy to see that the $\sQ$-independent part precisely corresponds to the Chekhov-Eynard-Orantin topological recursion \cite{CE05,CEO06,EO07}. As shown in \cite{O23}, it turns out that the $\sQ$-top degree part also give rise to a self-closed recursion, and we call it the \emph{$\sQ$-top recursion}. That is, the Chekhov-Eynard-Orantin topological recursion and the $\sQ$-top recursion are a subsector of the full refined topological recursion, and we respectively denote differentials in each subsector by $\omega_{g,n}^{{\rm CEO}}$ and $\varpi_{g,n}$ to notationally distinguish from $\omega_{g,n}$.

For a family of hyperelliptic curves $C(\bm t)$ with some complex parameters $\bm t$, one can consider the corresponding family of refined spectral curves $\cS_{\bm\kappa,\bm\mu}(\bm t)$ (with mild restrictions, e.g. ramification points should not collide each other under deformation of parameters). As a consequence, $\omega_{g,n}$ also depend on the parameters $\bm t$, and one may ask: how do $\omega_{g,n}$ vary under a deformation with respect to $\bm t$? 

In the unrefined setting, this point has already been addressed in \cite{EO07,E17}, and we know how $\omega_{g,n}^{{\rm CEO}}(\bm t)$ varies which is known as the \emph{variational formula}\footnote{The variational formula in the unrefined setting is not limited to hyperelliptic curves.}. It can be thought of as a generalisation of the Seiberg-Witten relation \cite{SW94-1,SW94-2}. However, it turns out that there is a subtlety and difficulty when one tries to apply the original Eynard-Orantin proof to the refined setting. Thus, we provide an equivalent interpretation of the variational formula (Definition \ref{def:var}) which becomes easier to apply to the refined topological recursion. With this perspective, we are able to state a refined analogue of the variational formula.

\subsection{Summary of main results}
The first achievement of the present paper is to prove the variational formula for the refined topological recursion, when $\Sigma=\bP^1$ (Theorem \ref{thm:var}). However, since we have to fix several notations and technical aspects in order to remove the subtlety mentioned above, it is hard to state the variational formula here and we leave all the details to Section \ref{sec:Var}. Roughly speaking, it states that a certain deformation $\delta_{t}*\omega_{g,n}$ with respect to $t\in\bm t$ is related to an integral of $\omega_{g,n+1}$ as follows:
\begin{equation}
    \delta_{t}*\omega_{g,n}=\int_{p\in\gamma}\Lambda(p)\cdot\omega_{g,n+1},\label{intro:var}
\end{equation}
where $(\gamma,\Lambda)$ is defined in Definition \ref{def:pair}. Let us emphasise that, in contrast to the unrefined setting, the variational formula \eqref{intro:var} holds only when a refined spectral curve $\cS_{\bm\kappa,\bm\mu}(\bm t)$ satisfies a certain condition which we call the \emph{refined deformation condition} (Definition \ref{def:condi}). See Section \ref{sec:Var} for more details. Note that some properties of the refined topological recursion are still conjectural when $\Sigma\neq\bP^1$ \cite{O23}, hence the variational formula also remains conjectural in this case. We also note that \cite{CEM10} discuss a similar formula in a different refined setting.

Another achievement of the present paper is to uncover an intriguing coincidence between the refined deformation condition and what we call the $\sQ$-top quantisation condition defined as follows. It is shown in \cite{O23} that the $\sQ$-top recursion naturally constructs a second-order ordinary differential operator, called the \emph{$\sQ$-top quantum curve}. For a refined spectral curve $\cS_{\bm\kappa,\bm\mu}(\bm t)$  whose underlying curve is given by $y^2=Q_0(x)$, the associated $\sQ$-top quantum curve is written in the  following form
\begin{equation}
      \left( \epsilon_1^2\frac{d^2}{dx^2}-Q_0(x)-\sum_{k\in\bZ_{\geq1}}\epsilon_1^{k}\cdot Q_k(x)\right) \psi^{\sQ{\rm -top}}(x)=0,\label{intro:QC0}
     \end{equation}
where $\epsilon_1$ is a formal parameter, $Q_k(x)$ is a rational function of $x$ determined by $\{\varpi_h\}_h$ for $2h<k$, and the logarithmic derivative of $ \psi^{\sQ{\rm -top}}(x)$ is a formal sum of $\epsilon_1^{2g-1}\cdot\varpi_{g,1}$ over $g$.  In the context of topological recursion, one may sometime require a condition on quantisation that the set of poles of $Q_k(x)$ should be a subset of poles of $Q_0(x)$. Therefore, we say that a refined spectral curve $\cS_{\bm\kappa,\bm\mu}(\bm t)$ satisfies the \emph{$\sQ$-top quantisation condition}, if the $\sQ$-top quantum curve respects the pole structure of $Q_0(x)$ (Definition \ref{def:QQC}) --- existence of a quantum curve in the full refined setting is proven only for a special class of genus-zero curves \cite{KO22} and in this case one can analogously consider the \emph{refined quantisation condition}.

In order to deliver a clear picture about the coincidence between the refined deformation condition and the $\sQ$-top quantisation condition, let us focus on the following example. For $t\in\bC^*$, we consider a one-parameter family of curves $C_t=(\bP^1,x,y)$ where meromorphic functions $(x,y)$ satisfy:
\begin{equation}
    y^2-Q_0(x;t)=0,\qquad Q_0(x;t)=4\left(x-q_0\right)^2\cdot\left(x+2q_0\right),\qquad q_0=\sqrt{-\frac{t}{6}}.\label{singP1}
\end{equation}
This is the curve associated with the zero-parameter solution of the Painlev\'{e} I equation, and $t$ plays the role of the Painlev\'{e} time \cite{IS15}. Since $ydx$ has a simple zero at the preimages of $x=q_0$, the corresponding refined spectral curve $\cS_{\mu}(t)$ carries one parameter $\mu\in\bC$, and $\omega_{g,n}$ depend both on $t$ and $\mu$.

In this example, it turns out that $\cS_{\mu}(t)$ satisfies the refined deformation condition if and only if $\mu$ is set to a special value $\mu=\mu_0$. (Proposition \ref{prop:rdc2}). On the other hand, one can show that $Q_{k\geq2}(x;t,\mu)$ has a pole at $x=q_0$ for a generic $\mu$, which is a \emph{zero} of $Q_0(x;t)$. However, it turns out that when $\mu=\mu_0$, such poles disappear for all $k$, and thus, the $\sQ$-top quantisation condition is satisfied (Proposition \ref{prop:QQC}).  Therefore, we observe that \emph{the refined deformation condition and the $\sQ$-top quantisation condition precisely agree}, even though they originated from two different requirements. It is interesting to see whether this coincidence holds in other curves, e.g. curves discussed in \cite{IMS16} in relation to other Painlev\'{e} equations.

When $\mu=\mu_0$, the variational formula gives a relation between $Q_k(x;t,\mu_0)$ in \eqref{intro:QC0} and a derivative of $F^{\sQ\text{-top}}_g:=\varpi_{g,0}$ with respect to $t$ --- the former appears in the $\sQ$-top quantisation and the latter is a consequence of a deformation of a refined spectral curve:
 \begin{thm}[Theorem \ref{thm:QC}]\label{intro:thm}
     Consider the above family of refined spectral curves $\cS_{\mu_0}(t)$ satisfying the refined deformation condition and also the $\sQ$-top quantisation condition. Then, the associated $\sQ$-top quantum curve is given in the following form:
     \begin{equation}
        \left( \epsilon_1^2\frac{d^2}{dx^2}-4x^3-2tx-2\sum_{g\in\frac12\bZ_{\geq0}}\epsilon_1^{2g} \frac{\partial F^{\sQ{\rm -top}}_g}{\partial t}\right)\psi^{\sQ{\rm -top}}(x)=0.\label{intro:QC}
     \end{equation}
 \end{thm}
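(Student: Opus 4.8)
The plan is to pin down the coefficients $Q_k(x)$ in the $\sQ$-top quantum curve \eqref{intro:QC0} order by order in $\epsilon_1$, using the $\sQ$-top quantisation condition to constrain their pole structure and the variational formula (Theorem \ref{thm:var}) to evaluate the surviving constants. The first step is to make $Q_0$ explicit: expanding $Q_0(x;t)=4(x-q_0)^2(x+2q_0)=4x^3-12q_0^2x+8q_0^3$ and substituting $q_0^2=-t/6$ from \eqref{singP1} gives $Q_0(x;t)=4x^3+2tx+8q_0^3$. This already supplies the $4x^3+2tx$ of \eqref{intro:QC}, so the assertion of the theorem reduces to the claim that the constant $8q_0^3$ together with every higher correction $Q_{k\ge1}(x)$ reassembles into $2\sum_g\epsilon_1^{2g}\partial_tF_g^{\sQ\text{-top}}$; equivalently, that each $Q_{k\ge1}$ is $x$-independent with $Q_k=2\,\partial_tF^{\sQ\text{-top}}_{k/2}$, while $8q_0^3=2\,\partial_tF^{\sQ\text{-top}}_0$.

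Next I would show that every $Q_{k\ge1}$ is constant. By Proposition \ref{prop:QQC} the curve $\cS_{\mu_0}(t)$ satisfies the $\sQ$-top quantisation condition, so (Definition \ref{def:QQC}) $Q_k(x)$ has no pole at the zero $x=q_0$ of $Q_0$; since $Q_0$ is a polynomial, its only pole is at $x=\infty$, and hence each $Q_k$ is itself a polynomial in $x$. Writing $S:=d\log\psi^{\sQ{\rm -top}}=\sum_g\epsilon_1^{2g-1}\varpi_{g,1}$ and $\hat S:=S/dx$, the quantum curve is equivalent to the WKB relation $\epsilon_1^2\big(\hat S'+\hat S^2\big)=Q_0+\sum_{k\ge1}\epsilon_1^kQ_k$, which expresses each $Q_k$ through $\{\varpi_{h,1}\}_{2h<k}$ as anticipated below \eqref{intro:QC0}. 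A degree count at $x=\infty$, governed by the leading behaviour $\varpi_{0,1}=y\,dx$ with $y^2=Q_0$, then forces $\deg_xQ_k=0$, so $Q_k=c_k$ is a constant.

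Finally I would identify the constants through the variational formula. Specialising \eqref{intro:var} to the $\sQ$-top recursion with $n=0$ yields $\partial_tF^{\sQ\text{-top}}_g=\delta_t*\varpi_{g,0}=\int_{p\in\gamma}\Lambda(p)\,\varpi_{g,1}(p)$, where $(\gamma,\Lambda)$ is the pair of Definition \ref{def:pair} attached to the $t$-deformation of \eqref{singP1}. Summing against $\sum_g\epsilon_1^{2g}$ and comparing with the constants $c_k$ read off from the WKB relation, I would verify $c_{2g}=2\,\partial_tF^{\sQ\text{-top}}_g$, and at the bottom order $8q_0^3=2\,\partial_tF^{\sQ\text{-top}}_0$, which assembles \eqref{intro:QC}. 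Concretely this means computing $\Lambda$ from $\delta_t(y\,dx)$ via $\partial_tq_0=-1/(12q_0)$ in the rational uniformisation $x=w^2-2q_0$, $y=2w(w^2-3q_0)$ of \eqref{singP1}, and matching the resulting period against the constant part of $Q$.

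I expect this last step to be the main obstacle, for two linked reasons. First, the variational formula is available in the refined setting only under the refined deformation condition (Definition \ref{def:condi}), which by Proposition \ref{prop:rdc2} again forces $\mu=\mu_0$; thus the hypothesis that legitimises the integral representation of $\partial_tF^{\sQ\text{-top}}_g$ is exactly the condition that makes $Q_k$ constant, and the argument must invoke this coincidence consistently rather than treating the two conditions as independent. Second, the analytic evaluation of $\int_\gamma\Lambda\,\varpi_{g,1}$ is delicate: naive residues at $x=\infty$ vanish, so one must implement the contour and primitive of Definition \ref{def:pair} correctly — in practice, deforming at fixed uniformising coordinate $w$ rather than fixed $x$ — in order to recover the nonzero value $2\,\partial_tF^{\sQ\text{-top}}_g$.
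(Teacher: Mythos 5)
Your overall architecture matches the paper's proof in Appendix \ref{app:proofQC}: make $Q_0=4x^3+2tx+8q_0^3$ explicit, show each $Q_{k\geq1}$ is $x$-independent, then identify the constants with $2\,\partial_tF^{\sQ\text{-top}}_{k/2}$ via the variational formula. Your degree-count route to constancy is a workable variant, but as written it is underpowered: the ``leading behaviour $\varpi_{0,1}=y\,dx$'' is not enough, since the quantisation condition only constrains the \emph{location} of poles, not orders, so a polynomial $Q_k$ could a priori have any degree. The input you actually need is Theorem \ref{thm:RTR}: for $2g-2\geq0$ all poles of $\varpi_{g,1}$ lie in $\cR^*\cup\sigma(\cP_+^0)=\{0,-q_z\}$, so $\varpi_{g,1}=O(z^{-2})dz$ at $z=\infty$, together with the simple pole of $\varpi_{\frac12,1}$ there recorded in \eqref{mu}; only then is each $Q_k$ at most $O(x^{1/2})$ at infinity and hence constant. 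The paper instead obtains constancy structurally: at $\mu=1$, ${\rm Rec}^{\sQ\text{-top}}_{g,1}$ is regular at $z=q_z$, so the residue in \eqref{Qk>1} forces $R^{\sQ\text{-top}}_{g,1}(z_0)=R_g(t)\,dz_0/(z_0^2-q_z^2)$ and $Q_{2g}=2R_g(t)$ follows from the explicit parameterisation \eqref{rational}.

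The genuine gap is your third step, which you yourself flag as the main obstacle but then misdiagnose. You assert ``I would verify $c_{2g}=2\,\partial_tF^{\sQ\text{-top}}_g$'' without any mechanism connecting the period $\int_\gamma\Lambda\,\varpi_{g,1}$ to the constant term of the ODE, and both of your proposed remedies point the wrong way: the residue at $z=\infty$ does \emph{not} vanish (it is precisely the sought constant), and ``deforming at fixed $w$'' would contradict Definition \ref{def:var}, whose whole content is variation at fixed $x$. The missing bridge is the $\sQ$-top part of the loop equation \eqref{RLE}, $\varpi_{g,1}(z)=-\tfrac{1}{2\varpi_{0,1}(z)}{\rm Rec}^{\sQ\text{-top}}_{g,1}(z)+R^{\sQ\text{-top}}_{g,1}(z)$: because $\varpi_{0,1}$ has a high-order pole at $z=\infty$ while $\Lambda_t\sim -z$ and ${\rm Rec}_{g,1}$ decays, the first term drops out of $\Res_{z=\infty}\Lambda_t\cdot\varpi_{g,1}$, and the second gives exactly $R_g(t)=\tfrac12 Q_{2g}$; combined with $\partial_tF_g=\Res_{z=\infty}\Lambda_t\cdot\varpi_{g,1}$ this closes the identification. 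Moreover, you invoke \eqref{intro:var} uniformly in $g$, but Theorem \ref{thm:var} covers $F_g$ only for $g>1$: for $g=\tfrac12,1$ the paper must \emph{define} $F^{\sQ\text{-top}}_{\frac12}$ and $F^{\sQ\text{-top}}_1$ through (the double-integral form of) the variational formula --- the single-integral definition of $F_{\frac12}$ is problematic in general, as discussed around \eqref{F1/2,1} --- and then check directly that $Q_1=2q_z=2\,\partial_tF^{\sQ\text{-top}}_{\frac12}$ and read $\partial_tF_1=-\tfrac{11}{48t}$ off \eqref{R11top}. Your sketch silently skips this bottom-of-the-tower bookkeeping, which is where the hypotheses of the theorem actually enter.
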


It is crucial to remark that there is no $\epsilon_1^2\partial/\partial t$ term in \eqref{intro:QC}, in contrast to the quantum curve derived in \cite{IS15,Iwa19} within the framework of the Chekhov-Eynard-Orantin topological recursion. Instead, a similar differential operator to \eqref{intro:QC} has appeared in the context of conformal blocks in the semi-classical limit, or the so-called \emph{Nekrasov-Shatashivili limit} e.g. \cite{LN21,LN22,BIPT22}. Note that they consider a genus-one curve whose singular limit becomes \eqref{singP1}, and we expect that the form of \eqref{intro:QC} remains the same for the corresponding genus-one curve. Importantly, their arguments and Theorem \ref{intro:thm} suggest a conjectural statement that $F^{\sQ{\rm -top}}_g$ agrees with the so-called \emph{Nekrasov-Shatashivili effective twisted superpotential} $\cW_g^{{\rm eff}}$ \cite{NS09}, when a refined spectral curve is chosen appropriately:
\begin{equation}
    \sum_{g\in\frac12\bZ_{\geq0}}\epsilon_1^{2g}F^{\sQ{\rm -top}}_g \overset{?}{=} \sum_{g\in\frac12\bZ_{\geq0}}\epsilon_1^{2g}\cW_g^{{\rm eff}}:=\epsilon_1\epsilon_2 \log Z^{\rm Nek}\big|_{\epsilon_2=0},
\end{equation}
where $Z^{\rm Nek}$ is the corresponding Nekrasov partition function \cite{Nek03} and the equality should be considered as a formal series in $\epsilon_1$. See e.g. \cite{NRS11,HK17,HRS21} for more about Nekrasov-Shatashivili effective twisted superpotentials. Note that for the curve associated with the Painlev\'{e} I equation, the Nekrasov partition function is not defined from an irregular conformal block perspective, whereas $F^{\sQ{\rm -top}}_g$ is perfectly well-defined. We hope that the present paper together with the notion of the $\sQ$-top recursion \cite{O23} sheds light on verifying the above statement and also triggers a new direction between topological recursion, the $\sQ$-top recursion, and invariants in the Nekrasov-Shatashivili limit (e.g. a role of $\varpi_{g,n\geq2}$).

\subsubsection*{Acknowledgement}

The author thanks Nitin Chidambaram, Elba Garcia-Failde,  Hajime Nagoya, Lotte Hollands, Kohei Iwaki, Omar Kidwai, Oleg Lisovyy, Nicolas Orantin for discussions and correspondences. This work is supported by JSPS KAKENHI Grant Number 22J00102, 22KJ0715, and 23K12968, also in part by KAKENHI Grant Number 20K14323 and 21H04994.

\newpage

\section{Definitions}\label{sec:Def}
We briefly review the refined topological recursion proposed in \cite{KO22,O23}. We refer to the readers \cite[Section 2]{O23} for more details. 

\begin{defin}[\cite{KO22,O23}]\label{def:RSC}
A \emph{hyperelliptic refined spectral curve} $\cS_{{\bm \mu}, {\bm \kappa}}$ consists of the collection of the following data:
\begin{itemize}
    \item $(\Sigma,x,y)$: a connected compact Riemann surface of genus $\tilde g$ with two meromorphic functions $(x,y)$ satisfying
    \begin{equation}
        y^2-Q_0(x)=0,
    \end{equation}
    where $Q_0(x)$ is a rational function of $x$ which is not a complete square. We denote by $\sigma:\Sigma\to\Sigma$ the hyperelliptic involution of $x:\Sigma\to\bP^1$ and by $\cR$ the set of ramification points of $x$, i.e. set of $\sigma$-fixed points.
    \item $(\cA_i,\cB_i,\kappa_i)$: a choice of a canonical basis $\cA_i,\cB_i\in H_1(\Sigma,\bZ)$ and associated parameters $\kappa_i\in\bC$ for $i\in\{1,..,\tilde g\}$,
    \item $(\widetilde\cP_+,\mu_p)$: a choice of a decomposition $\widetilde\cP_+\sqcup\sigma(\widetilde{\cP}_+)=\widetilde\cP$ and associated parameters $\mu_p\in\bC$ for all $p\in\widetilde\cP_+$ where $\widetilde\cP$ is the set of unramified zeroes and poles of $ydx$.
\end{itemize}
\end{defin}

Let us fix some notation before defining the refined topological recursion. First of all, throughout the present paper, $g,h$ are in $\frac12\bZ_{\geq0}$, $n,m$ in $\bZ_{\geq0}$, $i,j$ in $\{1,..,\tilde g\}$ and $a,b$ in $\{0,..,n\}$. We denote by $B$ the fundamental bidifferential of the second kind, and for a choice of representatives $\sA_i$ of $\cA_i$ for each $i$, we denote by $\eta_\sA^p$ the fundamental differential of the third kind for $p\in\Sigma$ normalised along each $\sA_i$-cycle. We write $p_a\in\Sigma$ for each $a$, $J:=(p_1,..,p_n)\in(\Sigma)^{n}$, and $J_0:=\{p_0\}\cup J\in(\Sigma)^{n+1}$. Assuming $p_a\not\in\cR\cup\sigma(\cP_+)$ for all $a$, we denote by $C_+$ a connected and simply-connected closed contour such that it contains all points in $J_0\cup\cP_+$ and no points in $\cR\cup\sigma(J_0\cup\cP_+)$. With the assumption on $p_a$, one can always find such a contour and we drop the $n$-dependence on $C_+$ for brevity. Similarly, we denote by $C_-$ a connected and simply-connected closed contour containing all points in $\cR\cup\sigma(J_0\cup\cP_+)$ but not points in $J_0\cup\cP_+$. We call $p\in\cR$ \emph{ineffective} if $ydx$ is singular at $p$, and \emph{effective} otherwise. We denote by $\cR^*$ the set of effective ramification points. We denote by $\cP^{0,\infty}_+\cup\sigma(\cP_+^{0,\infty})$ the set of unramified zeroes and poles of $ydx$ respectively, and denote by $C_-^{\mathfrak p}$ a connected and simply-connected closed contour inside $C_-$ but not containing points in $\sigma(\cP_+^{\infty}
)$. Finally, we fix $\sQ\in\bC$ and we call it the refinement parameter.

\begin{defin}[\cite{KO22,O23}]
    Given a hyperelliptic refined spectral curve $\cS_{{\bm \mu}, {\bm \kappa}}$, the \emph{hyperelliptic refined topological recursion} is a recursive definition of multidifferentials $\omega_{g,n+1}$ on $(\Sigma)^{n+1}$ by the following formulae:
    \begin{align}
        \omega_{0,1}(p_0):&=y(p_0)\cdot dx(p_0),\label{w01}\\
    \omega_{0,2}(p_0,p_1):&=-B(p_0,\sigma(p_1)),\label{w02}\\
    \omega_{\frac12,1}(p_0):&=\frac{\sQ}{2}\left(-\frac{d\Delta y(p_0)}{\Delta y(p_0)}+\sum_{p\in\widetilde{\cP}_+}\mu_p\cdot\eta^p_\sA(p_0)+\sum_{i=1}^{\tilde g}\kappa_i\cdot \int_{\cB_i}B(\cdot,p_0)\right),\label{w1/2,1}
    \end{align}
    and for $2g-2+n\geq0$,
      \begin{equation}
            \omega_{g,n+1}(J_0):=\frac{1}{2 \pi i}\left(\oint_{p\in C_+}-\oint_{p\in C_-}\right)\frac{\eta_\sA^p(p_0)}{4\omega_{0,1}(p)}\cdot {\rm Rec}_{g,n+1}^{\sQ}(p,J),\label{RTR}
        \end{equation}
        where 
        \begin{align}
           {\rm Rec}_{g,n+1}^{\sQ}(p_0;J):=&\sum^*_{\substack{g_1+g_2=g\\J_1\sqcup J_2=J}} \omega_{g_1,n_1+1}(p_0,J_1)\cdot \omega_{g_2,n_2+1}(p_0,J_2)+\sum_{t\sqcup I=J}\frac{dx(p_0)\cdot dx(t)}{(x(p_0)-x(t))^2}\cdot \omega_{g,n}(p_0,I)\nonumber\\
    &+ \omega_{g-1,n+2}(p_0,p_0,J)+\sQ \cdot dx \cdot d_0\frac{ \omega_{g-\frac12,n+1}(p_0,J)}{dx(p_0)},\label{Rec}
        \end{align}
        and the $*$ in the sum denotes that we remove terms involving $\omega_{0,1}$.
\end{defin}

 As expected, it is shown in \cite{O23} that $\{\omega_{g,n+1}\}_{g,n}$ satisfies the Chekhov-Eynard-Orantin topological recursion when $\sQ=0$. However, it is important to remark that it is conjectural that the above definition makes sense for $2g-2+n\geq1$ when $\Sigma\neq\bP^1$ or $\sQ\neq0$ --- there is no issue when $2g-2+n=0$. In particular, it has not been proven whether the above formula constructs symmetric multidifferentials $\omega_{g,n+1}$ on $(\Sigma)^{n+1}$ --- the definition only ensures the well-definedness within a fundamental domain due to $\eta^p_\sA(p_0)$ in the formula. When $\Sigma=\bP^1$, \cite{KO22,O23} proved several properties on $\omega_{g,n+1}$ which are summarised as below:

\begin{thm}[\cite{KO22,O23}]\label{thm:RTR}
When $\Sigma=\bP^1$, $\omega_{g,n+1}$ are well-defined multidifferentials on $(\Sigma)^{n+1}$ and they satisfy the following properties:
\begin{itemize}
    \item $\omega_{g,n+1}$ are symmetric multidifferentials
    \item For $2g-2+n\geq0$, $\omega_{g,n+1}(p_0,J)$ has no residues as a differential in $p_0$, and their poles only lie in $\cR^*\cup\sigma(J\cup\cP_+^0)$.
    \item For $2g-2+n\geq0$, let $\phi$ be any primitive of $\omega_{0,1}$, then
    \begin{equation}
        (2-2g-n-1)\cdot \omega_{g,n+1}(J_0)=\frac{1}{2\pi i}\oint_{p\in C^{\mathfrak p}_-}\phi(p)\cdot\omega_{g,n+2}(p,J_0)\label{dilaton}
    \end{equation}
\end{itemize}
\end{thm}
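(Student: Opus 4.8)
The plan is to prove the three properties together by induction on $2g-2+n$, taking full advantage of the fact that on $\Sigma=\bP^1$ the function $x$ is a global coordinate and all the auxiliary objects --- the bidifferential $B$, the third-kind differentials $\eta_\sA^p$, and $\omega_{0,1}=y\,dx$ --- admit explicit rational expressions in $x$. The seeds \eqref{w01}, \eqref{w02} and the half-integer differential $\omega_{\frac12,1}$ are treated as base data: symmetry is vacuous or immediate, and their pole loci are read off directly from the defining formulae. The genuinely recursive base cases $\omega_{1,1}$, $\omega_{\frac12,2}$ and $\omega_{0,3}$ (all with $2g-2+n=0$) I would compute from \eqref{RTR} and check by hand; these anchor the induction.

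For the pole structure and the absence of residues, I would analyse the integrand of \eqref{RTR} as a differential in $p$ and in $p_0$ separately. The entire $p_0$-dependence enters through $\eta_\sA^p(p_0)$, whose poles sit at $p_0=p$ and $p_0=\sigma(p)$; since the contour $C_+-C_-$ encircles $\cR\cup\sigma(J_0\cup\cP_+)$ while excluding $J_0\cup\cP_+$, deforming the contour and collecting the enclosed singularities shows that poles of $\omega_{g,n+1}$ in $p_0$ can only appear at $\cR^*$ and at $\sigma(J\cup\cP_+^0)$. The key point is that the factor $1/\omega_{0,1}(p)$ creates singularities only at the \emph{effective} ramification points $\cR^*$ (at ineffective points $ydx$ is itself singular and the apparent pole cancels), while the inductive hypothesis controls the poles of each factor of $\mathrm{Rec}^{\sQ}_{g,n+1}$. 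That $\omega_{g,n+1}(p_0,J)$ carries no residue in $p_0$ I would deduce from the $\sA$-cycle normalisation of $\eta_\sA^p$ together with the vanishing of the total residue of the integrand, exactly as in the unrefined case.

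The hard part is \emph{symmetry}, because \eqref{RTR} is manifestly symmetric in the entries of $J$ but treats $p_0$ asymmetrically, and the refined corrections --- the nontrivial seed $\omega_{\frac12,1}$ and the $\sQ$-linear term $\sQ\cdot dx\cdot d_0\big(\omega_{g-\frac12,n+1}/dx\big)$ in \eqref{Rec} --- spoil the clean bilinear structure that drives the unrefined argument. I would proceed via refined \emph{loop equations}: establish by the same induction that the symmetric combination $\omega_{g,n+1}(p_0,J)+\omega_{g,n+1}(\sigma(p_0),J)$ (linear loop equation) and an appropriate quadratic combination, now carrying explicit $\sQ$-dependent correction terms sourced by $\omega_{\frac12,1}$ and the $d_0$-term, are holomorphic at every point of $\cR$. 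Given these loop equations and the pole structure already proven, one recovers $\omega_{g,n+1}$ from its principal parts at $\cR^*$ through a Cauchy-type residue representation that is symmetric in all $n+1$ arguments; comparing this representation with \eqref{RTR} yields the swap $p_0\leftrightarrow p_1$. Verifying that the refined quadratic loop equation closes under the induction --- that is, that the extra $\sQ$-terms reassemble into the correct holomorphic correction rather than producing a spurious pole at a ramification point --- is the central technical obstacle, and the place where the hyperelliptic assumption and the precise form of $\omega_{\frac12,1}$ are used essentially.

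Finally, the dilaton-type equation \eqref{dilaton} follows once symmetry and the pole/residue statement are in hand. I would insert a primitive $\phi$ of $\omega_{0,1}$ and integrate $\omega_{g,n+2}(p,J_0)$ against $\phi$ over $C_-^{\mathfrak p}$, then deform the contour onto the poles of $\omega_{g,n+2}$; the residues at $\cR^*$, where $\phi$ interacts with the $1/\omega_{0,1}$ of the kernel, reproduce $\mathrm{Rec}^{\sQ}$ up to the Euler-characteristic prefactor $(2-2g-n-1)$, while the choice of $C_-^{\mathfrak p}$ avoiding $\sigma(\cP_+^\infty)$ ensures no unwanted boundary contributions. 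The only care needed here is to track the $\sQ$-linear term through the integration by parts in $d_0$, which contributes to the prefactor in the expected way.
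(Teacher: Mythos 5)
First, a point of calibration: this paper contains no proof of Theorem \ref{thm:RTR} at all --- it is imported verbatim from \cite{KO22,O23}, as the attribution indicates. So your attempt can only be judged against the strategy of those references (and internal consistency), not against anything in the present text.

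Your skeleton is the right one (induction on $2g-2+n$, pole analysis of the kernel integral over $C_+-C_-$, refined loop equations, dilaton equation last), but the mechanism you propose at the crux --- symmetry --- is the one that is unavailable in the refined setting. You claim that, once the linear and quadratic loop equations and the pole structure are established, one ``recovers $\omega_{g,n+1}$ from its principal parts at $\cR^*$ through a Cauchy-type residue representation that is symmetric in all $n+1$ arguments.'' That is the unrefined argument (uniqueness and symmetry of solutions of abstract loop equations whose poles lie only at ramification points), and it is precisely what breaks when $\sQ\neq0$: by the second bullet of the very theorem you are proving, $\omega_{g,n+1}(p_0,J)$ has poles at $\sigma(J\cup\cP_+^0)$ in addition to $\cR^*$, so its principal parts at $\cR^*$ do not determine it, and the Cauchy recovery must include the $\sigma(J)$- and $\sigma(\cP_+^0)$-contributions --- exactly the terms that are not manifestly symmetric and that force the recursion contour to be $C_+-C_-$ rather than residues at $\cR$ alone. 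The proofs in \cite{KO22,O23} do not argue by uniqueness; they compute the difference $\omega_{g,n+2}(p_0,p_1,J)-\omega_{g,n+2}(p_1,p_0,J)$ directly from the recursion formula via double-contour manipulations and residue-exchange identities of the type $\oint_{p\in C_+}\oint_{q\in C_+}=\oint_{q\in C_+}\bigl(\oint_{p\in C_+}-2\pi i\Res_{p=q}\bigr)$ (the same identities the present paper invokes in its appendix, citing \cite[Appendix A]{EO07} and \cite[Appendix A.1]{O23}), tracking the $\sQ$-linear $d_0$-term through integration by parts. You honestly flag the closure of the refined quadratic loop equation as ``the central technical obstacle,'' but the route you offer past it is the one that fails. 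A secondary understatement: the asserted pole locus itself requires a nontrivial cancellation --- the term $dx(p_0)\,dx(t)/(x(p_0)-x(t))^2$ in \eqref{Rec} has poles at both $p_0=t$ and $p_0=\sigma(t)$, and only after combining with the $\omega_{0,2}$-contributions (recall $\omega_{0,2}(p_0,p_1)=-B(p_0,\sigma(p_1))$) do the poles at unconjugated points of $J$ cancel, leaving $\cR^*\cup\sigma(J\cup\cP_+^0)$; this must be exhibited, not inferred from ``the inductive hypothesis controls the poles of each factor.''
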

\begin{conj}[\cite{O23}]
{\rm Theorem \ref{thm:RTR}} holds for any $\Sigma$.
\end{conj}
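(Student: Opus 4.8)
The plan is to argue by induction on the level $2g-2+n$, following the architecture of the $\Sigma=\bP^1$ proof in \cite{KO22,O23} while isolating the new phenomena forced by $H_1(\Sigma,\bZ)\neq 0$. The base data $\omega_{0,1},\omega_{0,2},\omega_{1/2,1}$ are prescribed in \eqref{w01}--\eqref{w1/2,1}, and for these I would check directly that they are single-valued meromorphic (multi)differentials on $\Sigma$ with the declared poles and no spurious residues. In particular the $\kappa_i$- and $\mu_p$-terms in \eqref{w1/2,1} are arranged precisely so that $\omega_{1/2,1}$ descends to a genuine differential on $\Sigma$; recording this normalisation is exactly the constraint that the inductive step must respect, since the $\cB_i$-period contribution $\kappa_i\int_{\cB_i}B$ is the genus-dependent ingredient absent on $\bP^1$.

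For the inductive step, assume every $\omega_{g',n'+1}$ with $2g'-2+n'<2g-2+n$ is a well-defined symmetric multidifferential with poles confined to $\cR^*\cup\sigma(J'\cup\cP_+^0)$ and no residue in its first variable. The formula \eqref{RTR} then produces an object that is a priori defined only relative to a fundamental domain, because the kernel $\eta^p_\sA(p_0)$ is multivalued in the integration variable $p$: its monodromy around $\cB_j$ in $p$ equals $2\pi i\,\omega_j(p_0)$ by the reciprocity/bilinear relation, while its $\sA_j$-monodromy vanishes by normalisation. The crux is to show that the resulting $\omega_{g,n+1}$ is nevertheless a genuine single-valued multidifferential on $\Sigma^{n+1}$, independent of the chosen fundamental domain. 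The natural route is to compute the jump of the right-hand side of \eqref{RTR} as the contour $C_\pm$ is transported across a $\cB_j$-cycle; using the monodromy of $\eta^p_\sA$ this jump reduces to a contour integral of $\mathrm{Rec}^{\sQ}_{g,n+1}(p,J)/(4\,\omega_{0,1}(p))$ from \eqref{Rec} paired against the normalised holomorphic differential $\omega_j(p)$, and I would show it vanishes by combining the inductive pole/residue hypotheses with the quadratic and $\sQ$-linear loop equations established in \cite{O23}. Symmetry in the remaining variables should then follow from the usual loop-equation/graph-symmetry argument, now carried out on $\Sigma$ rather than $\bP^1$, using $\omega_{0,2}(p_0,p_1)=-B(p_0,\sigma(p_1))$ and the induction hypothesis.

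Granting well-definedness and symmetry, the pole structure and residue-freeness propagate as in the genus-zero case: the contour difference $\oint_{C_+}-\oint_{C_-}$ localises the new poles of $\omega_{g,n+1}(p_0,J)$ to $\cR^*$ (the ineffective ramification points being excluded because $\omega_{0,1}$ is singular there) together with $\sigma(J\cup\cP_+^0)$, and one checks that $\eta^p_\sA(p_0)/(4\,\omega_{0,1}(p))$ contributes no residue in $p_0$. Finally, the dilaton-type identity \eqref{dilaton} would be obtained by pairing the recursion against a primitive $\phi$ of $\omega_{0,1}$ and integrating over $C_-^{\mathfrak p}$, again invoking the refined loop equations and verifying that the $\kappa_i\int_{\cB_i}B$ contribution inside $\omega_{1/2,1}$ enters consistently. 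I expect the genuine obstacle to be the single-valuedness argument in the inductive step: on $\bP^1$ there are no cycles and the issue is vacuous, whereas for general $\tilde g$ one must show that the $\cB_j$-monodromy of the kernel is exactly annihilated by the refined loop equations, and it is conceivable that this only succeeds under extra constraints relating $\sQ$, $\bm\kappa$ and $\bm\mu$ --- which would explain why the statement is still recorded as a conjecture rather than a theorem.
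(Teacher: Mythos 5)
You have not proved the statement, and indeed the paper does not either: it is recorded there as an open conjecture, so there is no proof of record to compare against. Your text is a proof \emph{plan} whose decisive steps are deferred, and the deferral is exactly where the open problem sits. Concretely, your inductive step rests on showing that the $\cB_j$-monodromy jump of the right-hand side of \eqref{RTR} --- the pairing of $\omega_j(p)$ against ${\rm Rec}^{\sQ}_{g,n+1}(p,J)/(4\,\omega_{0,1}(p))$ --- vanishes, and you propose to deduce this from ``the quadratic and $\sQ$-linear loop equations established in \cite{O23}.'' But for $\Sigma\neq\bP^1$ and $\sQ\neq0$ those refined loop equations at level $2g-2+n\geq1$ are precisely what is conjectural: their derivation presupposes that $\omega_{g,n+1}$ is a well-defined, symmetric, residue-free multidifferential on $\Sigma^{n+1}$, which is the very statement under proof. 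As written, the argument is circular, and no mechanism is offered for why the jump integral should vanish for arbitrary $\bm\kappa$, $\bm\mu$, $\sQ$ --- you yourself concede it ``is conceivable that this only succeeds under extra constraints,'' which is an admission that the step is not established.

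The same problem affects your symmetry step. You appeal to ``the usual loop-equation/graph-symmetry argument,'' but the paper is explicit (Section \ref{sec:Var}) that the Eynard--Orantin graphical interpretation has no refined analogue at the time of writing; moreover, the $\Sigma=\bP^1$ proofs in \cite{KO22,O23} lean on genus-zero-specific facts --- global meromorphy of the integrand in \eqref{RTR} and the vanishing of the total residue of a meromorphic differential on $\bP^1$, used e.g.\ to pass to the refined loop equation form \eqref{RLE} --- which do not transfer once $\eta^p_\sA(p_0)$ is multivalued in $p$ and the contours $C_\pm$ are only defined within a fundamental domain. Your outline does correctly localise where the difficulty lies (single-valuedness across $\cB_j$-cycles, and the interplay with the $\kappa_i\int_{\cB_i}B$ term in \eqref{w1/2,1}), and as a research program it is sensible; but every genuinely new ingredient it would need is left as an expectation rather than an argument, so the statement remains, after your attempt as before it, a conjecture.
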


As discussed in \cite{O23}, it is easy to see for each $g,n$ that $\omega_{g,n+1}$ polynomially depends on $\sQ$ up to $\sQ^{2g}$, and the recursion for the $\sQ$-top degree part is self-closed, i.e. they can be constructed without the information of lower degree parts. We call it the $\sQ$-top recursion, and explicitly it is defined as follows:

\begin{defin}[\cite{O23}]
    Given a hyperelliptic refined spectral curve $\cS_{{\bm \mu}, {\bm \kappa}}$, the \emph{$\sQ$-top recursion} is a recursive definition of multidifferentials $\varpi_{g,n+1}$ on $(\Sigma)^{n+1}$ by the following formulae:
    \begin{align}
        \varpi_{0,1}(p_0):&=y(p_0)\cdot dx(p_0),\label{pi01}\\
    \varpi_{0,2}(p_0,p_1):&=-B(p_0,\sigma(p_1)),\label{pi02}\\
    \varpi_{\frac12,1}(p_0):&=\frac{1}{2}\left(-\frac{d\Delta y(p_0)}{\Delta y(p_0)}+\sum_{p\in\widetilde{\cP}_+}\mu_p\cdot\eta^p_\sA(p_0)+\sum_{i=1}^{\tilde g}\kappa_i\cdot \int_{\cB_i}B(\cdot,p_0)\right),\label{pi1/2,1}
    \end{align}
    and for $2g-2+n\geq0$,
      \begin{equation}
            \varpi_{g,n+1}(J_0):=\frac{1}{2 \pi i}\left(\oint_{p\in C_+}-\oint_{p\in C_-}\right)\frac{\eta_\sA^p(p_0)}{4\omega_{0,1}(p)}\cdot {\rm Rec}_{g,n+1}^{\sQ\text{-}{\rm top}}(p,J),\label{Qtop}
        \end{equation}
        where 
        \begin{align}
            {\rm Rec}_{g,n+1}^{\sQ\text{-}{\rm top}}(p_0;J):=&\sum^*_{\substack{g_1+g_2=g\\J_1\sqcup J_2=J}} \varpi_{g_1,n_1+1}(p_0,J_1)\cdot \varpi_{g_2,n_2+1}(p_0,J_2)\nonumber\\
    &+\sum_{t\sqcup I=J}\frac{dx(p_0)\cdot dx(t)}{(x(p_0)-x(t))^2}\cdot \varpi_{g,n}(p_0,I)+ dx \cdot d_0\frac{ \varpi_{g-\frac12,n+1}(p_0,J)}{dx(p_0)}.\label{QRec}
        \end{align}
\end{defin}

Note that there is no $\varpi_{g-1,n+2}$ in $Q_{g,n+1}^{\sQ\text{-}{\rm top}}$, unlike $Q_{g,n+1}^{\sQ}$. Since the $\sQ$-top recursion is a subsector of the refined topological recursion, Theorem \ref{thm:RTR} holds for $\varpi_{g,n+1}$ too, as long as $\Sigma=\bP^1$. We note that it is meaningful to define the $\sQ$-top recursion independently and study it on its own. For example, as discussed in \cite{O23}, the $\sQ$-top recursion would be relevant to the Nekrasov-Shatashivili limit which is an active research area in mathematics and physics. In particular, \cite{O23} proved the following property for any $\Sigma$, not limited to $\Sigma=\bP^1$:
\begin{thm}[\cite{O23}]\label{thm:Qtop}
    $\varpi_{g,1}$ are well-defined residue-free differentials on $\Sigma$ whose poles only lie in $\cR^*\cup\sigma(\cP_+^0)$, and there exists an ordinary second order differential equation of the following form:
\begin{equation}
    \left( \epsilon_1^2\frac{d^2}{dx(p)^2}-Q_0(x(p))-\sum_{k\in\bZ_{\geq1}}\epsilon_1^{k}Q_k(x(p))\right) \psi^{\sQ\text{-}{\rm top}}(p)=0\label{QC}
\end{equation}
    where $Q_k(x)$ is a rational function of $x$ explicitly constructed by $\varpi_{h,1}$ for $2h<k$, and $\psi^{\sQ\text{-}{\rm top}}$ is a formal series in $\epsilon_1$ defined by
    \begin{equation}
        \epsilon_1 \cdot d\log\psi^{\sQ\text{-}{\rm top}}(p):=\sum_{g\geq0}\epsilon_1^{2g}\cdot \varpi_{g,1}(p).
    \end{equation}
\end{thm}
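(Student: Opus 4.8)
The plan is to prove the two assertions separately. For the global well-definedness, residue-freeness, and pole structure of the one-point functions $\varpi_{g,1}$ --- which, unlike the $n\geq2$ case, is claimed for arbitrary $\tilde g$ --- I would induct on $2g$. The seed $\varpi_{1/2,1}$ in \eqref{pi1/2,1} has the asserted properties by direct inspection. Granting the claim for all $\varpi_{h,1}$ with $2h<2g$, the object ${\rm Rec}^{\sQ\text{-}{\rm top}}_{g,1}$ in \eqref{QRec} (with $n=0$) is a well-defined quadratic differential in $p$ whose poles lie in $\cR^*\cup\sigma(\cP_+^0)$, and \eqref{Qtop} then produces $\varpi_{g,1}$. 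The point that makes this work for every genus is the one-variable nature of $\varpi_{g,1}$: there is no multi-variable symmetry to enforce, the $\sA$-normalisation of $\eta^p_\sA$ together with the $\kappa_i$-terms kills the $\cB$-cycle monodromy, and a residue computation at the prospective poles confirms both the pole locations and the absence of residues.

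For the differential equation I would pass to generating series. Writing $w:=\sum_{g}\epsilon_1^{2g}\,\varpi_{g,1}/dx$ and $'=d/dx$, the definition of $\psi^{\sQ\text{-}{\rm top}}$ gives $\epsilon_1\,\psi'/\psi=w$, so
\[
\epsilon_1^2\,\frac{\psi''}{\psi}=w^2+\epsilon_1\,w'.
\]
With $w_k:=\varpi_{k/2,1}/dx$, the coefficient of $\epsilon_1^0$ is $w_0^2=y^2=Q_0$, while for $k\geq1$ the coefficient is $2y\,w_k+\big(\sum_{a+b=k,\,a,b\geq1}w_aw_b+w_{k-1}'\big)$. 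By \eqref{QRec} the bracket equals ${\rm Rec}^{\sQ\text{-}{\rm top}}_{k/2,1}/dx^2$, so defining $Q_k$ as this coefficient yields the master identity
\[
Q_k\,dx^2=2\,\omega_{0,1}\,\varpi_{k/2,1}+{\rm Rec}^{\sQ\text{-}{\rm top}}_{k/2,1}.
\]
The whole theorem then reduces to showing that the right-hand side descends to a rational function of $x$ on $\bP^1$; that $Q_k$ depends only on $\varpi_{h,1}$ with $2h<k$ follows because, through \eqref{Qtop}, the $\varpi_{k/2,1}$-contribution is itself a functional of ${\rm Rec}^{\sQ\text{-}{\rm top}}_{k/2,1}$ and hence of the lower one-point functions.

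Rationality of $Q_k$ is equivalent to $\sigma$-invariance of $w^2+\epsilon_1 w'$. Decomposing $w=w^{\mathrm{ev}}+w^{\mathrm{od}}$ into its even and odd parts under $\sigma$ (which fixes $x$, so $'$ preserves parity), the even part of $w^2+\epsilon_1 w'$ is the candidate $Q$, and its odd part must vanish:
\[
2\,w^{\mathrm{ev}}w^{\mathrm{od}}+\epsilon_1\,(w^{\mathrm{od}})'=0,
\qquad\text{i.e.}\qquad
\epsilon_1\,\frac{d}{dx}\log w^{\mathrm{od}}=-2\,w^{\mathrm{ev}}.
\]
Since $w^{\mathrm{od}}=y+O(\epsilon_1)$ and $w^{\mathrm{ev}}=O(\epsilon_1)$, at leading order this forces $w^{\mathrm{ev}}_1=-\tfrac12(\log y)'$, which is precisely the even part of the seed \eqref{pi1/2,1}. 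I would then prove the relation to all orders by induction, feeding the contour-integral definition \eqref{Qtop} of $\varpi_{k/2,1}$ into the master identity and analysing it as a differential in the single variable $p$: deforming $C_+-C_-$ collects contributions only from $\cR^*$ and from the poles of ${\rm Rec}^{\sQ\text{-}{\rm top}}_{k/2,1}$, yielding a linear loop equation for $\varpi_{g,1}(p)+\varpi_{g,1}(\sigma(p))$ and a quadratic loop equation for the symmetric part of ${\rm Rec}^{\sQ\text{-}{\rm top}}_{g,1}$ at effective ramification points. Equivalently, $\psi^{\sQ\text{-}{\rm top}}(p)$ and $\psi^{\sQ\text{-}{\rm top}}(\sigma(p))$ are the two solutions of the operator in \eqref{QC}.

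The main obstacle is the $\sigma$-parity bookkeeping behind these loop equations in the refined setting. Because $\omega_{0,2}=-B(p_0,\sigma(p_1))$ is the twisted bidifferential and the seed carries the $\sigma$-invariant piece $-\tfrac12\,d\log y$ alongside the $\eta^p_\sA$- and $\kappa_i$-terms, the $\varpi_{g,1}$ have no clean parity, in contrast to the unrefined case. I would therefore propagate the even/odd splitting through the induction and verify that the odd part of $w^2+\epsilon_1 w'$ cancels order by order while the even part assembles into the rational $Q_k$. Controlling this cancellation at the effective ramification points, where $1/\omega_{0,1}$ degenerates and could a priori create higher-order poles, is the technical heart of the argument. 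I emphasise that $Q_k$ is only asserted to be rational: it may genuinely have poles at zeroes of $Q_0$, so one neither needs nor expects holomorphicity there --- it is exactly the later absence of such poles that singles out the $\sQ$-top quantisation condition.
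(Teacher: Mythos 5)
Note first that the paper does not prove Theorem \ref{thm:Qtop} at all: it is imported from \cite{O23}, and the present text only displays fragments of that proof's machinery. Measured against those fragments, your route is essentially the cited one. Your master identity $Q_k\,dx^2=2\omega_{0,1}\cdot\varpi_{\frac{k}{2},1}+{\rm Rec}^{\sQ\text{-}{\rm top}}_{\frac{k}{2},1}$ is the refined loop equation \eqref{RLE} rearranged: by \eqref{RLE} the right-hand side equals $2\varpi_{0,1}\cdot R^{\sQ\text{-}{\rm top}}_{\frac{k}{2},1}$, which is exactly how the paper itself writes $Q_{k\geq2}^{\sQ\text{-}{\rm top}}$ in \eqref{Qk>1}; and the Riccati substitution together with the chain (rationality of $Q_k$) $\Leftrightarrow$ ($\sigma$-invariance of $w^2+\epsilon_1 w'$) $\Leftrightarrow$ (vanishing of its odd part) is the mechanism of \cite{O23}, as is your explanation that the single-variable nature of $\varpi_{g,1}$, the $\sA$-normalisation of $\eta^p_\sA$ and the $\kappa_i$-terms are what make the statement unconditional for every $\Sigma$. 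One local error to repair: your base case is misstated, since $\varpi_{\frac12,1}$ does \emph{not} have the properties asserted in the theorem --- it has simple poles with nonzero residues at $\cR$ and at all of $\widetilde{\cP}$ (see \eqref{mu}), and the residue-free statement with poles confined to $\cR^*\cup\sigma(\cP_+^0)$ concerns $g\geq1$ only. The induction therefore cannot begin from ``the seed has the asserted properties''; it must carry the explicit exceptional pole data of \eqref{pi1/2,1} through the recursion, because it is precisely those seed residues (e.g.\ the $-d\Delta y/\Delta y$ term, cf.\ the role it plays in \eqref{dpoles1}) that drive the cancellations rendering $\varpi_{g\geq1,1}$ residue-free. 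With that repair, your plan coincides with the proof strategy of \cite{O23}.
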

The associated differential operator \eqref{QC} is called the \emph{$\sQ$-top quantum curve}. Except for a special class of genus-zero curves investigated in \cite{KO22}, existence of the refined quantum curve in full generality is still an open question.

When the underlying hyperelliptic curve depends on complex parameters $\bm t=\{t_1,..,t_n\}$, one can consider a $\bm t$-parameter family $\cS_{\bm\kappa,\bm\mu}(\bm t)$ of refined spectral curves  as long as $\bm t$ are in a domain such that no points in $\cR\cup\cP$ collide. All the above definitions and theorems hold for $\cS_{\bm\kappa,\bm\mu}(\bm t)$. In the next section, we will consider how $\omega_{g,n+1}(\bm t)$ behave while one varies $\bm t$.

Before turning to the variational formula, let us define the free energy $F_g$, except $F_0,F_{\frac12},F_1$ which will be defined later:
\begin{defin}[\cite{KO22,O23}]
    For $g>1$, the \emph{genus-$g$ free energy} $F_g,F_g^{\sQ\text{-}{\rm top}}$ of the refined topological recursion and the $\sQ$-top recursion is defined respectively as follows:
    \begin{align}
        F_g:&=\omega_{g,0}:=\frac{1}{2-2g}\frac{1}{2\pi i}\oint_{p\in C^{\mathfrak p}_-}\phi(p)\cdot\omega_{g,1}(p),\\
        F_g^{\sQ\text{-}{\rm top}}:&=\varpi_{g,0}:=\frac{1}{2-2g}\frac{1}{2\pi i}\oint_{p\in C^{\mathfrak p}_-}\phi(p)\cdot\varpi_{g,1}(p).
    \end{align}
\end{defin}

\newpage
\section{Variation}\label{sec:Var}
The variational formula is proven in \cite{EO07} and originally it is explained as follows. Consider a one-parameter family of spectral curves $\cS(t)$ in the unrefined setting. Then, $x$ and $y$ as functions on $\Sigma$ depend on the parameter $t$ and so do all $\omega_{g,n+1}(t)$. Then, \cite{EO07} considers a special type of deformation, namely, \emph{variation for fixed $x$}. This may sound contradictory with the fact that $x$ depends on $t$, but what it really means is the following.

Set $\sQ=0$. By choosing one of the branched sheet, one projects $\omega_{g,n+1}$ down to $\bP^1$ away from ramification points and treat them locally as multidifferentials on $\bP^1$. The variation for fixed $x$ means that we apply the \emph{partial} derivative with respect to $t$ for these multidifferentials on $\bP^1$ with the understanding that $\frac{\partial }{\partial t}dx_a=0$, and apply the local inverse $x^{-1}$ to pull them back to differentials on $\Sigma$. That is, the variation symbol $\delta^{\text{EO}}_{t}$ in \cite{EO07} acting on $\omega_{g,n+1}$ means (c.f. \cite{IKT18-1,BCCGF22}):
\begin{equation}
\delta^{\text{EO}}_{t}*\omega_{g,n+1}(p_0,..,p_n;t):=\left(\frac{\partial}{\partial  t}\omega_{g,n+1}(p_{t}(x_0),..,p_{t}(x_n);t)\right)\bigg|_{x_a=x(p_a)},
\end{equation}
where on the right-hand side we think of $x$ as independent of $t$ and instead $p_t$ depends on both $t$ and $x$. We will denote by $*$ the action of the variation in order to distinguish from the standard product symbol $\cdot$ which we are using throughout the paper. The standard partial derivative notation $\partial_t$ is commonly used in e.g. \cite{E17,EGF19,MO19} but we avoid this notation to emphasise that the operation is not just a partial derivative.

We will provide another equivalent description of the variation operation without considering the projection and inverse. The motivation of introducing such a new perspective is for the clarity of the proof of the variational formula when $\sQ\neq0$. The original proof by Eynard and Orantin is based on a graphical interpretation whose analogue does not exist in the refined setting, at least at the moment of writing. As a consequence, we need to directly evaluate the variation of the refined recursion formula \eqref{RTR}, and in this case, taking the projection and the inverse becomes subtle because $C_\pm$ contains $J_0$ and $\sigma(J_0)$. 

\begin{defin}\label{def:var}
Given $\cS_{\bm \mu,\bm\kappa}(t)$, the \emph{topological recursion variational operator} $\delta_{t}^{(n)}$ is a differential operator acting on meromorphic functions on $(\Sigma)^n$ defined by
\begin{equation}
    \delta_{t}^{(n)}:=\frac{d}{d t}-\sum_{a=1}^n\frac{\partial x(p_a)}{\partial  t}\frac{1}{dx(p_a)}d_{p_a},\label{d}
\end{equation}
where $(p_1,..,p_n)\in(\Sigma\backslash\cR)^n$ and $d_{p_a}$ denotes the exterior derivative with respect to $p_a$. We extend the action of $\delta_{t}^{(n)}$ to a meromorphic multidifferential $\omega$ on $(\Sigma)^n$ by
\begin{equation}
\delta_{t}^{(n)}*\omega(p_1,..,p_n;t):=\left(\delta_{t}^{(n)}*\frac{\omega(p_1,...,p_n;t)}{dx(p_1)\cdots dx(p_n)}\right)\cdot dx(p_1)\cdots dx(p_n).
\end{equation}
\end{defin}

Note that this definition is valid not only for hyperelliptic curves but also for any algebraic curves. It can be generalised to a multi-parameter family in an obvious way. $\delta_{t}^{(n)}$ is defined only when each $p_a\not\in\cR$ which resonates with the fact that one has to choose a branch in the Eynard-Orantin description. Importantly, the above definition implies
\begin{equation}
    \delta_{t}^{(1)}* x=0,\quad\delta_{t}^{(1)}* dx=0,
\end{equation}
and for a differential $w$ on $(\bP^1)^n$, its pullback to $(\Sigma)^n$ satisfies
\begin{equation}
    \delta_{t}^{(n)}* w(x(z_1),...,x(z_n);t)=\frac{\partial}{\partial t}w(x(z_1),...,x(z_n);t).
\end{equation}
Thus, $\delta_{t}^{(n)}$ in fact serves as the variation for fixed $x$. Furthermore, we have
\begin{equation}
    \delta_{t}^{(1)}* ydx=\frac{\partial y}{\partial t}dx-\frac{\partial x}{\partial t}dy,\label{var01}
\end{equation}
which corresponds to \cite[Equation 5-2]{EO07}. From now on, we omit writing the $t$-dependence of functions and multidifferentials.

\begin{rem}
    Perhaps, the conceptual motivation of the action of $\delta^{(n)}_t$ becomes clearer when one thinks of the underlying hyperelliptic curve from the Hitchin perspective \cite{DM13,DM15,E17}. A Hitchin spectral curve (of rank 2) is given by a triple $(\Sigma^o,\varphi,\pi)$ where $\pi:\Sigma^o\to\bP^1$, $\varphi$ is a quadratic differential on $\bP^1$, and $\Sigma^o$ is embedded in $T^*\bP^1$ as
    \begin{equation}
        \Sigma^o=\{\lambda\in T^*\Sigma^o|\lambda^{\otimes2}=\pi^*\varphi\}\subset T^*\bP^1.
    \end{equation}
    Our $\Sigma$ would be obtained after normalisation and compactification of $\Sigma^o$. By interpreting $\pi=x$ and $\varphi=(ydx)^{\otimes2}$, \emph{variation for fixed} $x$ means that one varies the quadratic differential $\varphi$ while keeping the projection $\pi=x$ invariant.
\end{rem}

Given an unrefined spectral curve $\cS(t)$, let us assume existence of a pair $(\gamma,\Lambda)$ such that $\gamma$ is a path in $\Sigma\backslash\cR$ and $\Lambda$ is a function holomorphic along $\gamma$ satisfying
    \begin{equation}
        \delta_{t}^{(1)}* \omega_{0,1}(p_1)=:\int_{p\in\gamma}\Lambda(p)\cdot\omega_{0,2}(p,p_1).\label{varcondition1}
    \end{equation}
Then, \cite{EO07} showed that the following relation holds for $g,n\in\bZ_{\geq0}$ by using the graphical interpretation of the unrefined topological recursion formula, which is known as the \emph{variational formula}:
\begin{equation}
    \delta_t^{(n+1)}*\omega_{g,n+1}(J_0)=\int_{p\in\gamma}\Lambda(p)\cdot \omega_{g,n+2}(p,J_0).\label{EOvar}
\end{equation}

The difficulty to generalise the variational formula into the refined setting arises due to the more complicated pole structure of $\{\omega_{g,n+1}\}_{g,n}$. Nevertheless, if we restrict the pair $(\gamma,\Lambda)$ to certain classes as below, a refined analogue still holds when $\Sigma=\bP^1$, and we expect that it works for any $\Sigma$ in general.

For $s\in\cP^{\infty}\backslash\cR$ and $r\in\cP^{\infty}\cap\cR$, let $x(s)=x_s,x(r)=x_r$ and suppose $\omega_{0,1}$ behaves locally
\begin{equation}
    \omega_{0,1}=\pm\left(\sum_{k=0}^{m_s}\frac{t_{s,k}}{(x-x_s)^{k+1}}+\cO(1)\right)dx,\quad \omega_{0,1}=\left(\sum_{k=1}^{m_r}\frac{t_{r,k}}{(x-x_r)^k}+\cO(1)\right)\frac{dx}{2\sqrt{x-x_r}}\label{var-para}
\end{equation}
Let $\Lambda_{s,k},\Lambda_{r,k}$ be the corresponding meromorphic function on $\Sigma$ such that
\begin{equation}
    \frac12\left(\Res_{p=s}-\Res_{p=\sigma(s)}\right)\Lambda_{s,k}(p)^{-1}\cdot\omega_{0,1}(p)=t_{s,k},\quad\Res_{p=r}\Lambda_{r,k}(p)^{-1}\cdot\omega_{0,1}(p)=t_{r,k}.\label{Lambda}
\end{equation}
\cite{EO07,E17} show a construction of each $\Lambda_{s,k},\Lambda_{r,k}$, at least locally. Note that their pole is at most of order $m_s-1,m_r-1$ respectively.

\begin{defin}[\cite{E17}]\label{def:pair}
Given $\cS_{\bm\kappa,\bm\mu}(\bm t)$, $(\gamma,\Lambda)$ is said to be a \emph{generalised cycle} if it falls into one of the following kinds:
\begin{itemize}
    \item[\textbf{I}\,\,\,\,\,:] $\gamma\in\{\cB_i\}_{i\in\{1,..,\tilde g\}}$ and $\Lambda=1$
    \item[\textbf{II}\,\,\;:] Let $p\in\Sigma$ be an $m_p$-th order pole of $\omega_{0,1}$ where $m_p\geq2$. Then, for $k\in\{1,..,m_p-1\}$, $\Lambda_{p,k}$ is given as in \eqref{Lambda}, and $\gamma_{p,k}$ is a union of contours encircling $p$ and $\sigma(p)$ in the opposite orientation if $p\not\in\cR$, and $\gamma_{p,k}$ is a contour encircling $p$ if $p\in\cR$.
    \item[\textbf{III}\;:] Let $p\in\Sigma$ be a location of a residue of $\omega_{0,1}$ which necessarily means $p\not\in\cR$. Then, $\gamma_p$ is an open path from $\sigma(p)$ to $p$ within a fundamental domain, and $\Lambda_p=1$.
\end{itemize}
The corresponding parameters $t_{(\gamma,\Lambda)}$ defined by the expansion \eqref{var-para} are called \emph{2nd kind times} or \emph{3rd kind times}, whereas \emph{1st kind times} are defined by
\begin{equation}
    t_i:=\frac{1}{2\pi i}\oint_{\cA_i}\omega_{0,1},\label{t1}
\end{equation}
\end{defin}

1st, 2nd, and 3rd kind times are respectively called filling fractions, temperatures, and moduli of the poles in \cite{EO07}. All generalised cycles $(\gamma,\Lambda)$ are anti-invariant under $\sigma$ when it applies to integration. 2nd and 3rd kind times are often refered to as \emph{KP times} and their relation to KP systems are discussed in \cite{E17}.

We consider a refined spectral curve $\cS_{\bm\kappa,\bm\mu}(\bm t)$ such that $t_1,..,t_{|\bm t|}\in\bm t$ are defined as above, which are independent of each other, and we denote by $(\gamma_1,\Lambda_1),..,(\gamma_{|\bm t|},\Lambda_{|\bm t|})$ associated generalised cycles. In this setting, the variational formula \eqref{EOvar} holds in the unrefined setting as shown in \cite{EO07}. However, when $\sQ\neq0$, it turns out that an analogous statement holds if $\cS_{\bm\kappa,\bm\mu}(\bm t)$ satisfies an additional condition, which we call the refined deformation condition:
\begin{defin}\label{def:condi}
    Consider $\cS_{\bm\kappa,\bm\mu}(\bm t)$ parameterised by times of the 1st, 2nd, and 3rd kind $\bm t=(t_1,..,t_{|\bm t|})$. We say that $\cS_{\bm\kappa,\bm\mu}(\bm t)$ satisfies the \emph{refined deformation condition} with respect to $t_l$ for $l\in\{1,..,{|\bm t|}\}$ if the following holds:
    \begin{equation}
         \delta_{t_l}^{(1)}* \omega_{\frac12,1}(p_1)=\int_{q\in\gamma_l}\Lambda_l(q)\cdot\omega_{\frac12,2}(q,p_1).\label{rdc}
    \end{equation}
    We say that $\cS_{\bm\kappa,\bm\mu}(\bm t)$ satisfies the \emph{refined deformation condition} if the above holds for all $l$.
\end{defin}
Note that in the unrefined setting the variational formula \eqref{EOvar} for $(g,n)=(0,1)$ automatically holds if $\omega_{0,2}=B$. Even if $\omega_{0,2}$ is defined differently, it is then observed in e.g. \cite{BCCGF22} that the variational formula still works for the rest of $\omega_{g,n+1}$, as long as the variational relation \eqref{EOvar} holds for $(g,n)=(0,1)$. In other words, it has to be rather \emph{imposed} as a supplemental condition in addition to \eqref{varcondition1}. The refined deformation condition (Definition \ref{def:condi}) is analogous to this observation.

Finally, we will state the variational formula in the refined setting, whose proof is entirely given in Appendix \ref{app:varproofw} and \ref{app:varproofF} because it is lengthy:
\begin{thm}\label{thm:var}
    When $\Sigma=\bP^1$, assume that $\cS_{\bm\kappa,\bm\mu}(\bm t)$ satisfies the refined deformation condition with respect to $t_l$ for $l\in\{1,..,|\bm t|\}$. Then, $\omega_{g,n+1}$ and $F_g$ ($g>1$ for $F_g$) satisfy:
    \begin{equation}
        \frac{\partial F_g}{\partial t_l}= \int_{p\in\gamma_l}\Lambda_l(p)\cdot \omega_{g,1},\qquad\delta_{t_l}^{(n+1)}*\omega_{g,n+1}(J_0)=\int_{p\in\gamma_l}\Lambda_l(p)\cdot \omega_{g,n+2}(p,J_0).\label{var}
    \end{equation}
\end{thm}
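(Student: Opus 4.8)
The plan is to prove the two identities in \eqref{var} in turn: first the statement for the multidifferentials $\omega_{g,n+1}$ by induction on $2g-2+n$, and then the statement for the free energies $F_g$ ($g>1$) by pairing the multidifferential result against a primitive of $\omega_{0,1}$. For the multidifferentials I would induct on $2g-2+n$, using that the right-hand side of \eqref{RTR}--\eqref{Rec} only involves $\omega_{g',n'+1}$ with $2g'-2+n'<2g-2+n$. The base consists of the three initial data: for $\omega_{0,1}$ in \eqref{w01} the variational relation is exactly the defining property \eqref{varcondition1} of the generalised cycle $(\gamma_l,\Lambda_l)$; for $\omega_{0,2}$ in \eqref{w02} it holds automatically on $\Sigma=\bP^1$, just as in the unrefined case recalled above, since $\omega_{0,2}$ is the pullback of a bidifferential in the $x$-coordinates that is annihilated by $\delta_{t_l}$; and for $\omega_{\frac12,1}$ in \eqref{w1/2,1} the variational relation is precisely the hypothesis \eqref{rdc}, i.e.\ the refined deformation condition. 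It is this last seed that has no unrefined counterpart and that forces us to impose \eqref{rdc} rather than to derive it.

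For the inductive step I would apply the operator $\delta_{t_l}^{(n+1)}$ of Definition \ref{def:var} directly to \eqref{RTR} and show that it reproduces \eqref{RTR} for $\omega_{g,n+2}(p_0,q,J)$ with the new variable $q$ integrated against $\Lambda_l(q)$ along $\gamma_l$. The mechanism is that $\delta_{t_l}$ is a derivation which, through \eqref{varcondition1}, converts a differentiation of $\omega_{0,1}(p)$ in the recursion kernel $\eta^p_\sA(p_0)/(4\omega_{0,1}(p))$ into the insertion $\int_{\gamma_l}\Lambda_l(q)\,\omega_{0,2}(q,p)$, while the Leibniz rule together with the inductive hypothesis turns the variation of each factor in $\mathrm{Rec}^{\sQ}_{g,n+1}$ into the corresponding insertion $\int_{\gamma_l}\Lambda_l(q)\,\omega_{\bullet}(q,\dots)$. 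Since $\omega_{0,2}(q,p)$ is exactly the kernel gluing a new leaf into the recursion, all of these insertions recombine into $\int_{\gamma_l}\Lambda_l(q)\,\mathrm{Rec}^{\sQ}_{g,n+2}(p;q,J)$, which is what \eqref{RTR} produces for $\omega_{g,n+2}(p_0,q,J)$. The two genuinely refined contributions in \eqref{Rec}, namely $\omega_{g-1,n+2}$ and the term $\sQ\,dx\,d_0\!\left(\omega_{g-\frac12,n+1}/dx\right)$, are treated in the same way, the latter using that $\delta_{t_l}$ commutes with $d_0$ and annihilates $dx$; the $\omega_{\frac12,2}$-insertion produced by \eqref{rdc} enters precisely here.

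The hard part will be commuting $\delta_{t_l}^{(n+1)}$ past the contour operators $\oint_{C_\pm}$ in \eqref{RTR}. The difficulty, already flagged before Definition \ref{def:var}, is that $C_\pm$ encircle the points of $J_0$ and $\sigma(J_0)$, which move with $t_l$, while the correction term in \eqref{d} differentiates only the free variables and not the integration variable $p$. I would address this by inserting, inside the integral, the fixed-$x$ correction for $p$ as well, writing $\tfrac{d}{dt_l}=\widehat\delta_{t_l}+\tfrac{\partial x(p)}{\partial t_l}\tfrac{1}{dx(p)}\,d_p$ with $\widehat\delta_{t_l}$ the fixed-$x$ variation acting on $p_0$, $J$ and $p$. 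The $\widehat\delta_{t_l}$ piece is the genuine fixed-$x$ variation to which the inductive hypothesis and \eqref{varcondition1} apply, whereas the leftover total $p$-derivative, together with the motion of $C_\pm$ under $t_l$, must be shown to contribute only boundary terms supported at $\cR$ and at $\sigma(J_0)\cup\sigma(\cP_+)$ that cancel the explicit $p_a$-corrections in $\delta_{t_l}^{(n+1)}$. Controlling these residual contributions by means of the pole structure of $\omega_{g,n+1}$ from Theorem \ref{thm:RTR}---poles only in $\cR^*\cup\sigma(J\cup\cP_+^0)$, and no residue in $p_0$---is where the bulk of the technical work lies, and is the reason the argument is confined to the appendix.

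Finally, for the free energies with $g>1$ I would differentiate the definition $F_g=\tfrac{1}{2-2g}\tfrac{1}{2\pi i}\oint_{C^{\mathfrak p}_-}\phi\,\omega_{g,1}$ by $\delta_{t_l}^{(0)}=\tfrac{d}{dt_l}$, use the already established variational formula for $\omega_{g,1}$ to replace $\delta_{t_l}\omega_{g,1}$ by $\int_{\gamma_l}\Lambda_l\,\omega_{g,2}$, and note that $\delta_{t_l}\phi$ is a primitive of $\int_{\gamma_l}\Lambda_l\,\omega_{0,2}$ by \eqref{varcondition1}. The dilaton equation \eqref{dilaton} then lets me trade the resulting integral of $\phi\,\omega_{g,2}$ for $\int_{\gamma_l}\Lambda_l\,\omega_{g,1}$, which yields the claimed identity; here again the only delicate point is the interchange of $\tfrac{d}{dt_l}$ with the $C^{\mathfrak p}_-$ contour.
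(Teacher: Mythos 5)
Your overall architecture matches the paper's: induction on $2g-2+n$ with the refined deformation condition \eqref{rdc} as the irreducibly new seed at $(g,n)=(\tfrac12,0)$, variation of the recursion kernel via \eqref{varcondition1}, and the Leibniz rule plus commutation of $\delta_{t_l}$ with residues (the paper's Lemma \ref{lem:chain}). But two steps as you state them would fail. First, your base case for $\omega_{0,2}$: the differential $\omega_{0,2}(p_0,p_1)=-B(p_0,\sigma(p_1))$ is \emph{not} the pullback of a bidifferential in the $x$-coordinates --- the pullback is the $\sigma$-symmetric combination $B(p_0,p_1)+B(p_0,\sigma(p_1))=\tfrac{dx(p_0)dx(p_1)}{(x(p_0)-x(p_1))^2}$ --- and $\omega_{0,2}$ is not annihilated by $\delta_{t_l}^{(2)}$. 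If it were, the $(0,2)$ case of \eqref{var} would force $\int_{\gamma_l}\Lambda_l\cdot\omega_{0,3}=0$, which is false for curves with nontrivial times. The correct input is the Rauch/Eynard--Orantin variational formula \eqref{varB} (Lemma \ref{lem:varEO}), which expresses $\delta_{t_l}^{(2)}*B$ as a residue sum over $\cR$ and identifies it with the $\omega_{0,3}$-insertion. Second, in the inductive step you locate the hard part in the $t$-motion of $C_\pm$ and the $p_a$-correction terms; that part is comparatively soft (upgrading $\delta_t^{(p_0,\dots)}$ to act on the integration variable costs only an exact form $d_p(\cdots)$ that dies under the contour integral, exactly Lemma \ref{lem:chain}). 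The genuinely refined obstruction, absent from your plan, is exchanging the $p$-integral over $C_+$ with the $q$-integral over $\gamma_l$: the cycle $\gamma_l$ encircles (2nd kind) or ends at (3rd kind) the point $p_+\in\cP_+^\infty$ lying inside $C_+$, so the exchange produces the residual terms $\rho_{g,n+1}$ of \eqref{dw3}, and their vanishing requires the key regularity statement of Lemma \ref{lem:varpole} --- that $\delta_t^{(1)}*\omega_{0,1}/\omega_{0,1}$ and $\int_{\gamma}\Lambda(q)\cdot\omega_{g,n+2}(q,p_0,J)$ are regular at $p_0=p_\pm$ --- proved by its own induction through the refined loop equation \eqref{RLE}, the explicit antidiagonal pole \eqref{dpoles1} of $\omega_{\frac12,2}$, and the bound on the pole order of $\Lambda$ from Definition \ref{def:pair}. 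The pole structure of Theorem \ref{thm:RTR} alone, which you invoke, does not suffice, precisely because the refined $\omega_{g,n+1}$ have poles at $\sigma(\cP_+^0)$ interacting with $\gamma_l$.

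The $F_g$ part of your proposal has the larger gap. Differentiating the definition and using the established formula for $\omega_{g,1}$ does lead, after the dilaton equation \eqref{dilaton}, to the claimed term $\int_{\gamma_l}\Lambda_l\cdot\omega_{g,1}$ \emph{plus} the counterterm $\tfrac{1}{2-2g}\int_{\gamma_l}\Lambda_l(q)\cdot\Res_{p=\sigma(q)}\phi(p)\cdot\omega_{g,2}(p,q)$ as in \eqref{dFg1}. In the unrefined theory this vanishes trivially because $\omega_{g,2}|_{\sQ=0}$ has no pole on the antidiagonal; in the refined theory $\omega_{g,2}(p,q)$ \emph{does} have a pole at $p=\sigma(q)$, so the dilaton trade is not free. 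Proving that this counterterm vanishes is the bulk of the paper's Appendix \ref{app:varproofF}: Proposition \ref{prop:II} for 2nd kind times (decomposing $I_{g,n+1}$ into a $\sigma$-invariant part, an exact $\sQ$-term, and a remainder with an $m$-th order zero at $p_\pm$, so that the anti-invariance of generalised cycles kills the integral), and for 3rd kind times a separate double-integral symmetry argument around \eqref{I**1}, exploiting the symmetry of $\omega_{g,n+2}$ and the exchange of nested residues. Your claim that the dilaton step ``yields the claimed identity'' with only the contour interchange as delicate is therefore a genuine missing idea: carried out as written, the argument would drop a term that is nonzero pointwise and vanishes only after this case-by-case analysis.
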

\begin{conj}
    {\rm Theorem \ref{thm:var}} holds for any $\Sigma$.
\end{conj}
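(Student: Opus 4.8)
The plan is to run the same induction on $2g-2+n$ that underlies the $\Sigma=\bP^1$ proof in the appendices, but to replace every step that secretly used the rigidity of the genus-zero geometry by its Rauch-type analogue on a varying surface of arbitrary genus $\tilde g\geq1$. Because the argument manipulates $\omega_{g,n+1}$ as honest symmetric multidifferentials with the pole structure of Theorem~\ref{thm:RTR}, the whole scheme is necessarily conditional on the open Conjecture that Theorem~\ref{thm:RTR} holds for any $\Sigma$, and I would state that dependence explicitly at the outset. There are three base cases. The case $(g,n)=(0,0)$, the variation of $\omega_{0,1}$, holds by the defining relation \eqref{varcondition1} of $(\gamma,\Lambda)$; the case $(g,n)=(\frac12,0)$, the variation of the refined seed $\omega_{\frac12,1}$, holds by hypothesis as the refined deformation condition \eqref{rdc}; and the case $(g,n)=(0,1)$, the variation of $\omega_{0,2}$, is where the genus intervenes. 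On higher genus $\delta_t^{(2)}*\omega_{0,2}$ no longer vanishes, since the $\cA$-normalised bidifferential $B$ now depends on the period matrix, and it must instead be identified, via the classical Rauch variational formula, with $\int_{p\in\gamma}\Lambda(p)\cdot\omega_{0,3}(p,p_0,p_1)$. Since $\omega_{0,3}$ carries no $\sQ$-dependence and coincides with the Chekhov-Eynard-Orantin one, this base case reduces to a known higher-genus identity.

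The genuinely new ingredients are the moduli-dependence of the objects in the recursion kernel and in the refined seed. First I would compute $\delta_t*\eta^p_\sA(p_0)$ and $\delta_t*\big(\int_{\cB_i}B(\cdot,p_0)\big)$, both of which are $t$-independent on $\bP^1$ but vary once $\tilde g\geq1$ because they depend on the period matrix through the moving ramification points $\cR$. The aim is to re-express each such variation again as an integral $\int_\gamma\Lambda(q)\cdot(\cdots)$ of an $\omega_{0,3}$-type object along the generalised cycle, so that it slots into the inductive pattern. For generalised cycles of type~I (filling fractions), I would separately verify the compatibility $\oint_{\cA_j}\delta_{t_i}*B=0$, ensuring that $B$ and $\eta^p_\sA$ stay consistently $\cA$-normalised under variation; this is exactly the structure that is vacuous in genus zero.

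For the induction step I would apply $\delta_t^{(n+1)}$ to the recursion formula \eqref{RTR} and commute it through the contour integrals $\oint_{C_\pm}$. This requires tracking the $t$-dependence of $C_\pm$, which are defined relative to the moving sets $\cR$, $\cP_+$, $J_0$ and their $\sigma$-images; using the conjectural general-$\Sigma$ pole structure of Theorem~\ref{thm:RTR} together with the anti-invariance of $(\gamma,\Lambda)$ under $\sigma$, no pole of the integrand crosses $C_\pm$ under an infinitesimal deformation, which justifies the commutation. I would then distribute $\delta_t$ over ${\rm Rec}_{g,n+1}^{\sQ}$ by the Leibniz rule, substitute the inductive hypothesis into each lower factor $\omega_{g_i,n_i+1}$, the $(0,1)$ result into the $\omega_{0,2}$-linear term, and the kernel-variation formulas from the previous paragraph, and finally swap the $\oint_{C_\pm}$ and $\int_\gamma$ integrations to reassemble the right-hand side as $\int_{p\in\gamma}\Lambda(p)\cdot\omega_{g,n+2}(p,J_0)$. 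The $\sQ$-linear refined term then closes consistently because of the refined deformation condition \eqref{rdc}. The free-energy statement, the first equality in \eqref{var}, would follow from the $\omega_{g,1}$ case together with the dilaton equation \eqref{dilaton}, itself part of the general-$\Sigma$ analogue of Theorem~\ref{thm:RTR}.

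The hard part will be twofold. The deeper obstacle is that the entire scheme presupposes the general-$\Sigma$ well-definedness, symmetry, and pole structure of $\omega_{g,n+1}$: without it, $\delta_t$ cannot legitimately be applied globally, since the kernel $\eta^p_\sA(p_0)$ only guarantees well-definedness inside a fundamental domain and the variation is liable to produce monodromy anomalies across its boundary. The second, more technical obstacle is to show that the various moduli-variations $\delta_t*\eta^p_\sA$, $\delta_t*B$, and the explicit $\kappa_i\int_{\cB_i}B$ term in \eqref{w1/2,1} organise into the single clean integral $\int_\gamma\Lambda\cdot\omega_{g,n+2}$ with no leftover $\cB$-cycle contributions. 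Establishing this delicate cancellation on a varying surface, where the genus-zero rigidity of $B$ was doing silent work, is the crux of the general-$\Sigma$ case. In particular I expect the refined deformation condition to play a sharper role than in genus zero: the $\kappa_i\int_{\cB_i}B$ term in $\omega_{\frac12,1}$ now varies under $\delta_t$, and \eqref{rdc} should be exactly the constraint forcing this variation to be absorbed into the $\int_\gamma\Lambda\cdot\omega_{g,n+2}$ pattern already at the seed level, so that the induction can propagate it.
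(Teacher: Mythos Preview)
The statement you are attempting to prove is not a theorem in the paper but an open \emph{conjecture}: the paper proves Theorem~\ref{thm:var} only for $\Sigma=\bP^1$ and explicitly leaves the general-$\Sigma$ case as Conjecture~3.4, with no proof offered. So there is no ``paper's own proof'' to compare your proposal against.

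Your proposal is candid about this: you state at the outset that the entire scheme is conditional on the open conjecture that Theorem~\ref{thm:RTR} holds for any $\Sigma$, and in your final paragraph you correctly identify that the well-definedness, symmetry, and pole structure of $\omega_{g,n+1}$ for $\tilde g\geq1$ is unproven and is the ``deeper obstacle''. That is precisely why the paper states this as a conjecture rather than a theorem. What you have written is therefore not a proof but a plausible outline of how the $\bP^1$ argument of Appendices~\ref{app:varproofw}--\ref{app:varproofF} \emph{might} extend, modulo another conjecture and modulo the technical cancellations you flag (the Rauch-type variations of $B$, $\eta^p_\sA$, and the $\kappa_i\int_{\cB_i}B$ term). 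As a roadmap your sketch is reasonable and tracks the paper's $\bP^1$ strategy closely, but it does not---and by your own admission cannot---close the gap, since it rests on an unproven hypothesis of at least equal difficulty.
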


\newpage
\section{Examples}\label{Sec:App}

We will now apply the variational formula to several examples 

\subsection{Hypergeometric type curves}
Hypergeometric type curves are the classical limit of a confluent family of Gauss hypergeometric differential equations, and they are discussed in \cite{IKT18-1,IKT18-2,IK20,IK21} in relation to the BPS invariants and Stokes graphs. Hypergeometric type curves are classified into nine types based on their pole structure, and seven of them depend on parameters. \cite{IKT18-2} already write all the seven types of curves in terms of 3rd kind times, which they denote by $m_p$ rather than $t_p$. Then, the question one should ask is whether the corresponding refined spectral curve $\cS_{\bm\mu}(\bm t)$ satisfies the refined deformation condition. Hypergeometric type curves are main examples considered in \cite{KO22}.
\begin{prop}\label{prop:rdc1}
    Every refined spectral curve $\cS_{\bm\mu}(\bm t)$ associated with a hypergeometric type curve in the form of \cite{IKT18-2} satisfies the refined deformation condition.
\end{prop}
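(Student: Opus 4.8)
The plan is to verify the single identity \eqref{rdc} directly, exploiting the special features of hypergeometric type curves. Since $\Sigma=\bP^1$ here, $\tilde g=0$, so the $\kappa_i$-term in \eqref{w1/2,1} is absent and
\begin{equation}
\omega_{\frac12,1}(p_1)=\frac{\sQ}{2}\left(-\frac{d\Delta y(p_1)}{\Delta y(p_1)}+\sum_{p\in\widetilde{\cP}_+}\mu_p\cdot\eta^p_\sA(p_1)\right),
\end{equation}
which is fully explicit. Moreover, as recorded in \cite{IKT18-2}, every deformation parameter of these curves is a 3rd kind time, so each generalised cycle is of Type \textbf{III}: $\Lambda_l=1$ and $\gamma_l$ is an open path from $\sigma(p)$ to $p$. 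Finally, \eqref{rdc} involves only the half-integer differentials $\omega_{\frac12,1}$ and $\omega_{\frac12,2}$, both homogeneous of degree one in $\sQ$; hence it suffices to check one identity linear in $\sQ$, and $\sQ$ may be divided out at the outset.

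First I would compute $\omega_{\frac12,2}$ in closed form. Evaluating ${\rm Rec}^{\sQ}_{\frac12,2}$ from \eqref{Rec} kills the $\omega_{-\frac12,3}$ term and, after removing the $\omega_{0,1}$ contributions, leaves
\begin{equation}
{\rm Rec}^{\sQ}_{\frac12,2}(p_0;p_1)=\left(2\,\omega_{0,2}(p_0,p_1)+\frac{dx(p_0)\,dx(p_1)}{(x(p_0)-x(p_1))^2}\right)\omega_{\frac12,1}(p_0)+\sQ\,dx(p_0)\,d_0\frac{\omega_{0,2}(p_0,p_1)}{dx(p_0)}.
\end{equation}
Inserting this into \eqref{RTR} and evaluating the double contour integral by residues (at the effective ramification points and at $\sigma(p_1)$) yields $\omega_{\frac12,2}$ explicitly in terms of $B$, the third-kind differentials $\eta^p_\sA$, and $\omega_{\frac12,1}$. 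Integrating the result over $\gamma_l=[\sigma(p),p]$ then produces the right-hand side of \eqref{rdc}.

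For the left-hand side I would use the two defining properties of $\delta^{(1)}_{t_l}$, namely $\delta^{(1)}_{t_l}*x=\delta^{(1)}_{t_l}*dx=0$ and that it acts as $\partial_{t_l}$ at fixed $x$ on pullbacks from $\bP^1$. The logarithmic term is exactly such a pullback, since $\Delta y=2y$ gives $-\frac{d\Delta y}{\Delta y}=-\tfrac12\,d\log Q_0(x)$, so its variation is $-\tfrac12\,\frac{\partial}{\partial t_l}\!\big(Q_0'(x)/Q_0(x)\big)\,dx$. The third-kind term varies only through the $t_l$-dependence of $y$ entering $\eta^p_\sA$, because the pole locations $x_p$ of $ydx$ and the refined parameters $\mu_p$ are held fixed under the deformation; the independence of $\bm\mu$ from $\bm t$ is a point I would state and use explicitly. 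Both contributions are then rational in $x$ and computable from the normal forms of \cite{IKT18-2}.

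The final step is to match the two sides, and I expect this to be the main obstacle: it requires tracking how $\partial_{t_l}y$ feeds simultaneously into the logarithmic and the third-kind pieces of $\delta^{(1)}_{t_l}*\omega_{\frac12,1}$, and reconciling these with the residue contributions of $\omega_{\frac12,2}$ at the ramification points, type by type over the seven parameter-dependent normal forms. To avoid a brute-force case analysis, I would first attempt a cleaner route: both $-\frac{d\Delta y}{\Delta y}$ (through $y=\omega_{0,1}/dx$) and each $\eta^p_\sA$ (as a period of $\omega_{0,2}=B$) are objects already governed by the unrefined variational formula \eqref{EOvar} at levels $(0,1)$ and $(0,2)$, so that \eqref{rdc} might follow by linearity from those base cases together with the explicit form of $\omega_{\frac12,2}$ obtained above.
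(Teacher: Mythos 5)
Your plan is essentially the paper's own proof: the paper also establishes Proposition \ref{prop:rdc1} by direct computation, writing the variational operator $\delta_{t}^{(1)}$ in the rational coordinate $z$ of each normal form of \cite{IKT18-2}, computing $\omega_{\frac12,1}$ and $\omega_{\frac12,2}$ from the refined recursion (your expression for ${\rm Rec}^{\sQ}_{\frac12,2}$ is correct), and verifying \eqref{rdc} case by case, with explicit checks for the Weber, Whittaker and Bessel curves in Appendix \ref{app:rdc}. Your closing ``cleaner route'' via linearity and the unrefined base cases is not what the paper does and should be treated with caution --- Proposition \ref{prop:rdc2} shows the refined deformation condition can fail for generic $\mu$, so it cannot follow on such structural grounds alone --- but since you offer it only as an optional shortcut on top of the case-by-case verification, your primary argument matches the paper's.
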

\begin{proof}
    The proof is done by explicit computations. Since they are genus-zero curves, a rational expression of $x,y$ is given in e.g. \cite{IKT18-2} in terms of a coordinate $z$ on $\bP^1$, from which one can construct the variational operator $\delta_t^{(1)}$ for all $t\in\bm t$ with respect to $z$. Then, all one has to do is to compute $\omega_{\frac12,1}(z_0)$ and $\omega_{\frac12,2}(z_0,z_1)$ from the refined topological recursion and explicitly check the refined deformation condition. See Appendix \ref{app:rdc} where we present explicit computations for a few examples.
\end{proof}

One can use the variational formula as the defining equation for $F_\frac12$ and $F_1$ as follows --- since all $\omega_{0,n}$ is independent of the refinement parameter $\sQ$, we can define $F_0$ as \cite{EO07} does:  
\begin{defin}\label{def:FHG}
    For a refined spectral curve $\cS_{\bm\mu}(\bm t)$ associated with a hypergeometric type curve, $F_\frac12$ and $F_1$ are defined as a solution of the following differential equations for all $k,l\in\{1,..,|\bm t|\}$:
    \begin{align}
        \frac{\partial F_\frac12}{\partial t_k\partial t_l}:=\int_{p_1\in\gamma_k}\int_{p_2\in\gamma_l}\omega_{\frac12,2}(p_1,p_2),\qquad
        \frac{\partial F_1}{\partial t_k}:=\int_{p_1\in\gamma_k}\omega_{1,1}(p_1),\label{F1}
    \end{align}
    where $F_\frac12$ is defined up to linear terms in $t_l$ and $F_1$ is defined up to constant terms.
\end{defin}
Since $\Lambda=1$ for the 3rd kind, we immediately obtain the following:
\begin{cor}\label{cor:var}
    For a refined spectral curve $\cS_{\bm\mu}(\bm t)$ associated with a hypergeometric type curve, we have the following for $2g-2+n\geq1$:
    \begin{equation}
        \prod_{a=1}^n\frac{\partial}{\partial t_{l_a}} F_g=\prod_{a=1}^n\int_{p_a\in\gamma_{l_a}}\omega_{g,n}(p_1,..,p_n).\label{dFdt}
    \end{equation}
\end{cor}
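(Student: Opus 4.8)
The plan is to obtain \eqref{dFdt} by iterating the variational formula of Theorem \ref{thm:var}, using crucially that for a hypergeometric type curve every deformation parameter is a 3rd kind time, so that $\Lambda_{l_a}=1$ for all $a$. The starting observation is that, because Proposition \ref{prop:rdc1} guarantees the refined deformation condition holds for $\cS_{\bm\mu}(\bm t)$, Theorem \ref{thm:var} applies and gives for each fixed $g>1$ both the free-energy relation $\partial_{t_l}F_g=\int_{p\in\gamma_l}\Lambda_l(p)\cdot\omega_{g,1}$ and the multidifferential relation $\delta_{t_l}^{(n+1)}*\omega_{g,n+1}(J_0)=\int_{p\in\gamma_l}\Lambda_l(p)\cdot\omega_{g,n+2}(p,J_0)$. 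The case $g=\tfrac12$ and $g=1$ is not covered by Theorem \ref{thm:var}, but here one uses instead Definition \ref{def:FHG}, where $F_\frac12$ and $F_1$ are \emph{defined} precisely so that \eqref{F1} holds; this is exactly the $n=2$ (resp. $n=1$) instance of \eqref{dFdt} for those two values of $g$, so those base cases hold by definition.

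First I would set up an induction on $n$. For the base case $n=1$ and $2g-2+1\ge1$, i.e. $g>\tfrac12$, the single-derivative statement $\partial_{t_l}F_g=\int_{p\in\gamma_l}\omega_{g,1}$ is the first equation of \eqref{var} together with $\Lambda_l=1$; for $g=1$ it is the defining equation \eqref{F1}. For the inductive step, suppose \eqref{dFdt} holds with $n-1$ derivatives. I would apply $\partial/\partial t_{l_n}$ to both sides. On the left this produces $\prod_{a=1}^{n}\partial_{t_{l_a}}F_g$. On the right I must differentiate an iterated contour integral $\prod_{a=1}^{n-1}\int_{p_a\in\gamma_{l_a}}\omega_{g,n-1}(p_1,\dots,p_{n-1})$ with respect to $t_{l_n}$.

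The main technical point — and the step I expect to be the real obstacle — is to justify commuting $\partial/\partial t_{l_n}$ with the contour integrals $\int_{\gamma_{l_a}}$ and then converting the resulting naive $t$-derivative of $\omega_{g,n-1}$ into the variational operator $\delta^{(n-1)}_{t_{l_n}}*\omega_{g,n-1}$ so that Theorem \ref{thm:var} can be invoked. These differ by the transport terms $\sum_a \frac{\partial x(p_a)}{\partial t_{l_n}}\frac{1}{dx(p_a)}d_{p_a}$ appearing in \eqref{d}. The key is that each $\gamma_{l_a}$ is a 3rd kind generalised cycle (Definition \ref{def:pair}, kind \textbf{III}), namely a closed integration cycle along which the extra transport terms integrate to total derivatives and hence vanish; concretely, for a genus-zero curve written rationally in the coordinate $z$ one checks that $\int_{\gamma_{l_a}}\frac{\partial x}{\partial t_{l_n}}\frac{1}{dx}d_{p_a}(\cdots)=0$ because the integrand is exact along the cycle. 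Once this is established, the ordinary derivative acting under the integral may be replaced by $\delta^{(n-1)}_{t_{l_n}}$, and the second equation of \eqref{var} (with $\Lambda_{l_n}=1$) turns $\delta^{(n-1)}_{t_{l_n}}*\omega_{g,n-1}$ into $\int_{p_n\in\gamma_{l_n}}\omega_{g,n}(p_1,\dots,p_n)$. Reassembling, the right-hand side becomes $\prod_{a=1}^{n}\int_{p_a\in\gamma_{l_a}}\omega_{g,n}(p_1,\dots,p_n)$, completing the induction. Throughout I would keep track of the genus range: the inductive step lowers $n$ while keeping $g$ fixed, so I must ensure $2g-2+n\ge1$ persists, which it does since decreasing $n$ by one from a configuration with $2g-2+n\ge2$ still satisfies $2g-2+(n-1)\ge1$, and the terminal cases land on the base cases handled by \eqref{var} and Definition \ref{def:FHG}.
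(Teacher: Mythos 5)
Your overall route --- iterate Theorem \ref{thm:var} with $\Lambda_l=1$, seed the induction with the first equation of \eqref{var} for $g>1$ and with Definition \ref{def:FHG} for $g=\tfrac12,1$ --- is exactly how the paper obtains the corollary (it is stated there as immediate from Theorem \ref{thm:var}, Definition \ref{def:FHG}, and $\Lambda=1$ for the 3rd kind), and your instinct to justify interchanging $\partial/\partial t_{l_n}$ with the contour integrals is the right one, since that is the only step not literally contained in \eqref{var} and \eqref{F1}. However, your justification for discarding the transport terms rests on a false geometric premise: for hypergeometric type curves all times are of the \emph{3rd} kind, and by Definition \ref{def:pair} (kind \textbf{III}) the associated $\gamma_{l_a}$ are \emph{open} paths from $\sigma(p)$ to $p$ (concretely, contours from $0$ to $\infty$ in the coordinate used in Appendix \ref{app:rdc}), not closed cycles. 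The discrepancy $\frac{d}{dt}-\delta^{(n-1)}_{t_{l_n}}$ acting on the multidifferential does produce, in each integrated variable $p_a$, an exact one-form, namely $d_{p_a}\bigl(\frac{\partial x(p_a)}{\partial t_{l_n}}\cdot\omega_{g,n-1}/\prod_b dx(p_b)\bigr)$ --- note it is this combination, which absorbs the $t$-derivative of $dx(p_a)$, that is exact, not the raw term $\frac{\partial x}{\partial t_{l_n}}\frac{1}{dx}d_{p_a}(\cdots)$ you wrote --- but integrating an exact form over an \emph{open} path yields boundary contributions at the endpoints, which do not vanish merely by exactness.

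The gap is repairable, and the repair is what your argument is missing: the endpoints lie in $\cP^{\infty}$, where by Theorem \ref{thm:RTR} the differentials $\omega_{g,n}$ with $2g-2+n\geq0$ are regular (their poles lie only in $\cR^*\cup\sigma(J\cup\cP_+^0)$), and a local check on the curves shows that the transport coefficient $\frac{\partial x}{\partial t}/dx$, multiplied against a differential regular at the endpoint, tends to zero there; for instance, for the Weber curve one has $\frac{\partial x}{\partial t}/dx=\frac{z(z^2+1)}{2t(z^2-1)}\frac{1}{dz}$ against $\omega\sim c\,dz/z^2$ at $z=\infty$, so the boundary value vanishes. Without this endpoint-decay argument your inductive step does not close, because ``exact along the cycle'' is not a proof of vanishing for an open $\gamma_{l_a}$; with it, your induction is correct and coincides with the paper's (tacit) proof. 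A further small point you did handle correctly: for $g=\tfrac12$ the chain must start at $n=2$ via \eqref{F1}, since nothing gives a standalone meaning to $\partial F_{\frac12}/\partial t_l$, and for $g=1$ it starts at $n=1$ by definition rather than by Theorem \ref{thm:var}, whose free-energy statement requires $g>1$.
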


Corollary \ref{cor:var} becomes useful to derive a relation between refined BPS structures \cite{Bri19-1,BBS19} and the refined topological recursion, as a generalisation of \cite{IKT18-1,IKT18-2,IK20}. For a general refined spectral curve $\cS_{\bm\kappa,\bm\mu}(\bm t)$, not limited to hypergeometric type curves, we will define $F_\frac12,F_1$ in a similar way to Definition \ref{def:FHG}. See Remark \ref{rem:F1/2}.

\subsection{A degenerate elliptic curve}
Let us consider the case where $x$ and $y$ satisfy the following algebraic equation:
\begin{equation}
     y^2-Q_0(x)=0,\quad Q_0(x):=4\left(x-q_0\right)^2\left(x+2q_0\right)=4x^3+2tx+8q_0^3,\quad q_0=\sqrt{-\frac{t}{6}}.\label{curve0}
\end{equation}
A convenient rational expression of $x,y$ in terms of a coordinate $z$ on $\Sigma=\bP^1$ is
\begin{equation}
    x(z)=z^2-2q_0,\quad y(z)=2z(z^2-3q_0)=2z(z^2-q_z^2),\label{rational}
\end{equation}
where for brevity, we set $q_z:=\sqrt{3q_0}$. It appears in a singular limit (as an algebraic curve) of the following elliptic curve,
\begin{equation}
    y^2=4x^3-g_2x-g_3,\label{curve1}
\end{equation}
where for generic $g_2,g_3$ we can write $x,y$ in terms of the Weierstrass $\wp$-function as $x=\wp$ and $y=\wp'$. In \cite{IS15,Iwa19}, the curve \eqref{curve0} or \eqref{curve1} is chosen as a spectral curve of the Chekhov-Eynard-Orantin topological recursion, and a relation between the free energy and a $\tau$-function of the Painlev\'{e} I equation is proven.

With the above parameterisation, the hyperelliptic involution $\sigma$ acts as $\sigma:z\mapsto-z$, and $\cR=\{0,\infty\}$ with $\cR^*=\{0\}$. Note that $\omega_{0,1}(z)$ has a simple zero at $z=\pm q_z$, hence we choose $\cP_+=\{q_z\}$ and we assign $\mu\in\bC$ to $z=q_z$. Since $H_1(\Sigma,\bZ)=0$ in this example, the above choice uniquely defines a refined spectral curve $\cS_{\mu}(t)$. Theorem \ref{thm:RTR} then implies that $\omega_{g,n+1}(z_0,J)$ have poles, as a differential in $z_0$, at $z_0=0,-z_1,..,-z_n,-q_z$ when $2g-2+n\geq0$. 

As shown in \cite{IS15}, $t$ in \eqref{curve0} plays the role of a 2nd kind time, and the corresponding generalised cycle can be decoded from the following equations
\begin{equation}
   \Lambda_t(z):=-z+\frac{c q_0}{z},\quad \Res_{z=\infty}\Lambda_t(z)^{-1}\cdot\omega_{0,1}(z)=t,\quad\delta^{(1)}_t*\omega_{0,1}(z_0)=\Res_{z=\infty}\cdot \Lambda_t(z)\cdot\omega_{0,2}(z,z_0),\label{Painleve-t}
\end{equation}
where $c$ is one of the roots of $2c^2-6c+3=0$. The second term in $\Lambda_t$ is irrelevant in the last equation in \eqref{Painleve-t}, and it is indeed absent in \cite{IS15}, though it is necessary for the second equation. Now one may ask: does every $\cS_{\mu}(t)$ satisfy the refined deformation condition similar to hypergeometric type curves (Proposition \ref{prop:rdc1})? Here is the answer to that question:
\begin{prop}\label{prop:rdc2}
    Let $\cS_{\mu}(t)$ be a refined spectral curve defined as above. Then, it satisfies the refined deformation condition if and only if $\mu=1$.
\end{prop}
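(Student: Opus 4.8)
The plan is to verify the refined deformation condition
\begin{equation*}
    \delta_t^{(1)}*\omega_{\frac12,1}(z_0)=\Res_{z=\infty}\Lambda_t(z)\cdot\omega_{\frac12,2}(z,z_0)
\end{equation*}
by computing both sides explicitly in the rational coordinate $z$ and comparing them as meromorphic differentials in $z_0$. Since $\tilde g=0$ here, the $\cB$-cycle term in $\omega_{\frac12,1}$ drops out, and with $\cP_+=\{q_z\}$ the half-integer differential reduces to
\begin{equation*}
    \omega_{\frac12,1}(z_0)=\frac12\left(-\frac{d\Delta y(z_0)}{\Delta y(z_0)}+\mu\cdot\eta^{q_z}_\sA(z_0)\right),
\end{equation*}
where on $\bP^1$ the third-kind differential is simply $\eta^{q_z}_\sA(z_0)=\left(\frac{1}{z_0-q_z}-\frac{1}{z_0+q_z}\right)dz_0$ and $\Delta y(z_0)=y(z_0)-y(\sigma(z_0))=2y(z_0)$, so $d\Delta y/\Delta y=dy/y$. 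First I would assemble $\omega_{\frac12,1}(z_0)$ as an explicit rational one-form in $z_0$, keeping $\mu$ as a free parameter; its poles sit at $z_0=\pm q_z$ and at the zeros of $y$, i.e. at $z_0=0,\pm q_z$.

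Next I would apply the variational operator $\delta_t^{(1)}=\frac{d}{dt}-\frac{\partial x/\partial t}{dx}\,d_{z_0}$ to this one-form. Here everything depends on $t$ only through $q_0=\sqrt{-t/6}$, so $\frac{dq_0}{dt}=-\frac{1}{12 q_0}$ and $q_z=\sqrt{3q_0}$ gives $\frac{dq_z}{dt}=\frac{3}{2q_z}\frac{dq_0}{dt}$; from \eqref{rational} one has $x=z_0^2-2q_0$, hence $\frac{\partial x}{\partial t}=-2\frac{dq_0}{dt}=\frac{1}{6q_0}$ and $dx=2z_0\,dz_0$. The key point is that $\frac{d}{dt}$ acts both on the explicit $q_0,q_z$ dependence \emph{and}, through the fixed-$x$ prescription, is corrected by the transport term so that $\delta_t^{(1)}*x=0$. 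I would carry out this differentiation termwise, which yields a rational one-form in $z_0$ whose coefficient structure depends linearly on $\mu$.

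For the right-hand side I would use $\omega_{\frac12,2}(z,z_0)$, which by the $\sQ$-top recursion at level $(g,n)=(\frac12,1)$ is obtained from \eqref{Qtop} with the recursion kernel built out of $\omega_{\frac12,1}$ and $\omega_{0,2}$; concretely $\omega_{\frac12,2}$ is generated by the cross term $2\,\omega_{\frac12,1}(p)\,\omega_{0,2}(p,z_0)$ together with the $dx\cdot d_0$ term, integrated against $\eta^p_\sA(z_0)/(4\omega_{0,1}(p))$. I would compute $\omega_{\frac12,2}(z,z_0)$ explicitly, then evaluate $\Res_{z=\infty}\Lambda_t(z)\cdot\omega_{\frac12,2}(z,z_0)$ using $\Lambda_t(z)=-z+\frac{cq_0}{z}$. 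The residue at infinity picks out finitely many Laurent coefficients, so this reduces to a finite computation producing another rational one-form in $z_0$ depending on $\mu$ and on the root $c$ of $2c^2-6c+3=0$.

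The final step is to equate the two sides and read off the constraint on $\mu$. I expect both sides to be rational one-forms in $z_0$ with poles confined to $z_0=0,\pm q_z$; matching the principal parts at each pole gives a system of linear equations in $\mu$ (the dependence on $c$ should cancel, since $\Lambda_t$ enters only through its residue pairing). The main obstacle, and the place where the value $\mu=1$ should emerge, is controlling the pole at the \emph{effective ramification point} $z_0=0$: the fixed-$x$ transport term in $\delta_t^{(1)}$ is singular at $z_0=0$ because $dx=2z_0\,dz_0$ vanishes there, so one must check carefully that the apparent pole produced by $\delta_t^{(1)}$ on the left matches the pole of the residue on the right, and it is precisely this matching that fails for generic $\mu$ and holds exactly when $\mu=1$. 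I would therefore organise the comparison around the behaviour at $z_0=0$ first, establishing the ``only if'' direction by exhibiting the obstructing term proportional to $(\mu-1)$, and then confirm that at $\mu=1$ all remaining principal parts (at $z_0=\pm q_z$) agree, giving the ``if'' direction.
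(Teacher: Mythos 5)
Your strategy is the same as the paper's own proof: write $\delta_t^{(1)}=\frac{\partial}{\partial t}-\frac{1}{2z\sqrt{6t}}\frac{\partial}{\partial z}$ in the rational coordinate, assemble $\omega_{\frac12,1}$ explicitly (your form matches the paper's, with simple poles at $z_0=0,\pm q_z$ of residues \eqref{mu}), compute $\omega_{\frac12,2}$ from the recursion, pair with $\Lambda_t$ via $\Res_{z=\infty}$, and compare rational one-forms in $z_0$; your side remark that the $cq_0/z$ part of $\Lambda_t$ drops out of this pairing is also correct. Two corrections are needed, though. First, a computational caution: in your ``concretely'' summary you list only the cross term $2\,\omega_{\frac12,1}(p)\,\omega_{0,2}(p,z_0)$ and the $dx\cdot d_0$ term, but ${\rm Rec}^{\sQ}_{\frac12,2}$ also contains $\frac{dx(p)\,dx(z_0)}{(x(p)-x(z_0))^2}\cdot\omega_{\frac12,1}(p)$; omitting it gives the wrong $\omega_{\frac12,2}$ and hence the wrong residue.

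Second, and more substantively, your plan to organise the comparison around $z_0=0$ and extract the $(\mu-1)$ obstruction there is misplaced. The $1/z_0$ singularity of the transport term is harmless: the paper's computation shows the two sides agree identically in a neighbourhood of $z_0=0$ for \emph{every} $\mu$, their difference being
\begin{equation*}
    \delta_t^{(1)}*\omega_{\frac12,1}(z_0)-\Res_{z=\infty}\Lambda_t(z)\cdot\omega_{\frac12,2}(z,z_0)
    =\sQ\,(\mu-1)\,\frac{q_z^2+z_0^2}{8\,q_z^3\,\left(q_z^2-z_0^2\right)^2}\,dz_0,
\end{equation*}
which is regular at $z_0=0$ and supported entirely at $z_0=\pm q_z$. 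The mechanism is visible from \eqref{mu}: $\omega_{\frac12,1}$ has residue $\frac{\sQ}{2}(\mu-1)$ at $q_z$, so the left-hand side acquires poles at $z_0=\pm q_z$ proportional to $(\mu-1)$, whereas by Theorem \ref{thm:RTR} and Lemma \ref{lem:varpole} the right-hand side is always regular at $z_0=+q_z$ (its poles in $z_0$ lie only in $\{0,-q_z\}$ after the residue in $z$). In particular the cleanest ``only if'' argument localises at $z_0=+q_z\in\cP_+$, consistent with the paper's observation that the refined deformation condition here is exactly regularity of $\omega_{\frac12,1}$ at $\cP_+$. Since your final step does match principal parts at all three points $0,\pm q_z$, the proof still goes through once this relocation is made; only your predicted locus of failure is wrong, not the method.
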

\begin{proof}
    The proof is again by explicit computations, similar to Proposition \ref{prop:rdc1}. That is, we explicitly write the variational operator $\delta_t^{(1)}$ in terms of $t$ and $z$, and confirm when \eqref{rdc} is satisfied. Since everything can be expressed as rational functions, it is easy to find that $\mu=1$ is the only solution. See Appendix \ref{app:rdc} for computations.
\end{proof}
Note that, unlike $\omega_{g,n+1}$ for $2g-2+n\geq0$, poles of $\omega_{\frac12,1}(z_0)$ are all simple and they are located not only at $z_0=0,-q_z$ but also at $z_0=q_z,\infty$ whose residues are given as:
\begin{equation}
    \Res_{z=0}\omega_{\frac12,1}(z)=-\frac{\sQ}{2},\quad \Res_{z=\infty}\omega_{\frac12,1}(z)=\frac{3\sQ}{2},\quad \Res_{z=\pm q_z}\omega_{\frac12,1}(z)=\frac{\sQ}{2}(-1\pm\mu)\label{mu}
\end{equation}
Therefore, the refined deformation condition is satisfied exactly when $\omega_{\frac12,1}$ becomes regular at $\cP_+$. Even if we choose $\cP_+=\{-q_z\}$ instead, this aspect remains correct. That is, the refined deformation condition for this curve is equivalent to the condition such that $\omega_{\frac12,1}$ becomes regular at $\cP_+$, no matter how $\cP_+$ is chosen.

\subsubsection{$\sQ$-top quantum curve}
Theorem \ref{thm:Qtop} shows that the $\sQ$-top recursion can be utilised to quantise a refined spectral curve. For a general refined spectral curve $\cS_{\bm\kappa,\bm\mu}(\bm t)$, not limited to the above example, we introduce the following terminology:
\begin{defin}\label{def:QQC}
    We say that a refined spectral curve $\cS_{\bm\kappa,\bm\mu}(\bm t)$ satisfies the \emph{$\sQ$-top quantisation condition} if for each $k$ the set of poles of $Q_{k\geq1}^{\sQ\text{-{\rm top}}}$ is a subset of that of $Q_0^{\sQ\text{-{\rm top}}}$.
\end{defin}
We return to our example, and consider the $\sQ$-top quantisation condition for $\cS_{\mu}(t)$.
\begin{prop}\label{prop:QQC}
    The above refined spectral curve $\cS_\mu(t)$ satisfies the $\sQ$-top quantisation condition if and only if $\mu=1$. 
\end{prop}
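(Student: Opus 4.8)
The plan is to localise the condition of Definition~\ref{def:QQC} at the single point $x=q_0$---the non-ramified zero of $Q_0$ lying over $z=\pm q_z$---and to trace regularity there back to $\varpi_{\frac12,1}$ alone. First I would put the $\sQ$-top quantum curve in Riccati form: writing $\varpi_{g,1}=y_g\,dx$ (so $y_0=y$) and $Y:=\sum_{g}\epsilon_1^{2g}y_g$, the definition of $\psi^{\sQ\text{-}{\rm top}}$ in Theorem~\ref{thm:Qtop} substituted into \eqref{QC} gives
\begin{equation}
Q_0(x)+\sum_{k\geq1}\epsilon_1^kQ_k(x)=\epsilon_1\frac{dY}{dx}+Y^2,\qquad \frac{d}{dx}=\frac{1}{2z}\frac{d}{dz}.\label{plan:ricc}
\end{equation}
By Theorem~\ref{thm:Qtop} the left-hand side is rational in $x$, so the right-hand side is $\sigma$-invariant as a function of $z$. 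Since $Q_0$ has its only pole at $x=\infty$ while the finite zeros of $Q_0$ lie over $z=\pm q_z$ (giving $x=q_0$) and over $z=0\in\cR^*$ (giving $x=-2q_0$), every $Q_k$ can only have poles at $x\in\{q_0,-2q_0,\infty\}$; the condition of Definition~\ref{def:QQC} is therefore equivalent to regularity of all $Q_k$ at $x=q_0$ and at $x=-2q_0$.

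Second, I would dispose of the ramification point $x=-2q_0$ for every $\mu$. This is the familiar regularity of a topological-recursion quantum curve at an effective ramification point: near $z=0$ the function $y$ has a simple zero while, by Theorem~\ref{thm:Qtop}, the $y_g$ carry only the poles permitted by $\cR^*\cup\sigma(\cP_+^0)$, and the $z^{-1}$-parts of $\epsilon_1\,dY/dx$ and of the cross terms of $Y^2$ cancel so that \eqref{plan:ricc} becomes a regular function of the local base coordinate $x+2q_0=z^2$. I would exhibit the leading instance explicitly, namely that $Q_1=\frac{dy}{dx}+2y\,y_{\frac12}$ is pole-free at $z=0$, and argue that the simple-zero structure of $y$ at $z=0$ propagates this cancellation to all orders. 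This reduces the problem to regularity at $x=q_0$.

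The decisive step is the local analysis at $x=q_0$. By Theorem~\ref{thm:Qtop} each $\varpi_{g,1}$ with $g\geq1$ has its poles in $\cR^*\cup\sigma(\cP_+^0)=\{0,-q_z\}$, and $\varpi_{0,1}=y\,dx$ vanishes at $z=q_z$; hence the only summand of $Y$ that can be singular at $z=q_z$ is $\epsilon_1 y_{\frac12}$, which by \eqref{mu} is regular there precisely when $\mu=1$. For $\mu=1\Rightarrow$ quantisation, $Y$ is then holomorphic at $z=q_z$, so by \eqref{plan:ricc} (and $z=q_z\neq0$) the $\sigma$-invariant rational function $\epsilon_1\,dY/dx+Y^2$ is holomorphic at $z=q_z$ and hence has no pole at $x=q_0$; thus all $Q_k$ are regular there. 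For the converse I would examine the single coefficient
\begin{equation}
Q_2=y_{\frac12}^2+2y\,y_1+\frac{dy_{\frac12}}{dx}\label{plan:Q2}
\end{equation}
near $x=q_0$. From \eqref{pi1/2,1} and \eqref{rational} one computes, with $u:=x-q_0$, the expansion $y_{\frac12}=\frac{A}{u}+B+O(u)$ with $A=\tfrac12(\mu-1)$ and $B=-\tfrac{\mu+1}{4q_z^2}$; since $y$ vanishes and $y_1$ is regular at $x=q_0$, the polar part of $Q_2$ equals $\frac{A(A-1)}{u^2}+\frac{2AB}{u}$, so its double- and simple-pole coefficients are $\tfrac14(\mu-1)(\mu-3)$ and $-\tfrac1{4q_z^2}(\mu^2-1)$. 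These vanish simultaneously only for $\mu=1$ (the first forces $\mu\in\{1,3\}$, the second $\mu\in\{1,-1\}$), so every $\mu\neq1$ leaves a pole of $Q_2$ at $x=q_0$, completing the equivalence.

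The step I expect to be the main obstacle is the all-orders branch-point regularity at $x=-2q_0$ used in the second paragraph: in contrast to the finite rational computation at $x=q_0$, it requires a structural argument---or an adaptation of the turning-point analysis of \cite{IS15,Iwa19}---to ensure that the effective ramification point contributes no pole to any $Q_k$, for every value of $\mu$.
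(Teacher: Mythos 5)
Your core computation is correct and is, in substance, the same as the paper's: your expansion $y_{\frac12}=\frac{(\mu-1)/2}{u}-\frac{\mu+1}{4q_z^2}+O(u)$ at $u=x-q_0$ and the resulting polar part $\frac{(\mu-1)(\mu-3)}{4u^2}-\frac{\mu^2-1}{4q_z^2\,u}$ of $Q_2$ reproduce exactly the structure of the paper's explicit formula \eqref{R11top} for $R_{1,1}^{\sQ\text{-}{\rm top}}$: a leading coefficient proportional to $(\mu-1)(\mu-3)$ and a subleading one proportional to $(\mu-1)$ but not to $(\mu-3)$, which is precisely the point the paper emphasises. Your ``if'' direction --- by \eqref{mu}, $\mu=1$ makes $\varpi_{\frac12,1}$ regular at $z=q_z$, hence $Y$ is holomorphic there and no $Q_k$ can have a pole at $x=q_0$ --- is the paper's argument recast in Riccati language, and your check $Q_1=\frac{dy}{dx}+2y\,y_{\frac12}=2\mu q_z$ agrees with \eqref{Q1}. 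Note also that your ``only if'' direction is complete as written: a pole of $Q_2$ at $x=q_0$ already violates Definition \ref{def:QQC}, independently of what happens at $x=-2q_0$.

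The genuine divergence, and the one real gap, is your handling of the effective ramification point. Working from the Riccati equation obliges you to prove that no $Q_k$ acquires a pole at $x=-2q_0$, and you rightly flag that your ``propagation'' argument for this is incomplete --- an all-orders turning-point cancellation is a nontrivial loop-equation statement, not something that follows formally from the simple zero of $y$. The paper never confronts this issue because it does not use the Riccati form: it quotes the explicit construction behind Theorem \ref{thm:Qtop} from \cite{O23}, namely \eqref{Q1} and \eqref{Qk>1}, in which $R_{k/2,1}^{\sQ\text{-}{\rm top}}(z_0)$ is defined by a residue at $z=q_z$ alone and therefore, as a differential in $z_0$, has poles confined to $z_0=\pm q_z$; writing $Q_{k\geq2}=\frac{y(z_0)}{z_0}\cdot\frac{R_{k/2,1}^{\sQ\text{-}{\rm top}}(z_0)}{dz_0}$ with $\frac{y(z_0)}{z_0}=2(z_0^2-q_z^2)$ regular, every $Q_k$ is manifestly regular at $x=-2q_0$ for all $\mu$, so the whole problem localises at $x=q_0$ from the outset. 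Replacing your second paragraph by an appeal to this construction closes the gap; keeping your route self-contained would require re-deriving branch-point regularity from the $\sQ$-top loop equations, which is exactly the content of the formula you would be avoiding.
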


\begin{proof}
    The proof is again by computations. The formula in \cite{O23} gives
    \begin{equation}
    Q_1^{\sQ\text{-{\rm top}}}(z_0):=\frac{\varpi_{0,1}(z_0)}{dx(z_0)^2}\cdot\mu \cdot \eta^{q_z}(z_0)=2q_z\cdot\mu,\label{Q1}
\end{equation}
\begin{equation}
    Q_{k\geq2}^{\sQ\text{-{\rm top}}}(z_0):=\frac{2\varpi_{0,1}(z_0)\cdot R^{\sQ\text{-{\rm top}}}_{\frac{k}{2},1}(p_0)}{dx(p_0)\cdot dx(p_0)},\qquad R^{\sQ\text{-{\rm top}}}_{\frac{k}{2},1}(z_0)=\Res_{z=q_z}\frac{\eta^z(z_0)}{2\omega_{0,1}(z)}\cdot{\rm Rec}_{\frac{k}{2},1}^{\sQ\text{-{\rm top}}}(z).\label{Qk>1}
\end{equation}
The if part is easy to see. By setting set $\mu=1$, then \eqref{mu} implies that $\omega_{\frac12,1}$ becomes regular at $z=q_z$ hence $Q_{k\geq2}^{\sQ\text{-{\rm top}}}$ becomes regular at $x=q_0$. See Appendix \ref{app:rdc} for the only-if part. 
\end{proof}
Therefore, \emph{the refined deformation condition and the $\sQ$-top quantisation condition agree} for this example. Note that any refined spectral curve of hypergeometric type satisfies the $\sQ$-top, and in fact the refined quantisation condition. We expect that no additional condition will appear in the full refined quantisation, and it is interesting to see whether this coincidence holds for other curves, e.g. curves related to other Painlev\'{e} equations \cite{IMS16}.

To close, we prove that the $\sQ$-top quantum curve for $\cS_{\mu=1}(t)$ is written in terms of the $\sQ$-top free energy $F_g^{\sQ\text{-{\rm top}}}$ whose proof will be given in Appendix \ref{sec:proof}. \cite{LN21,LN22} discuss a similar equation in the context of accessory parameters and conformal blocks in the Nekrasov-Shatashivili limit. Thus, we conjecture that the $\sQ$-top free energy $F_g^{\sQ\text{-{\rm top}}}$ coincides with the Nekrasov-Shatashivili effective twisted superpotential \cite{NS09} even when $\Sigma\neq\bP^1$ as long as an appropriate refined spectral curve is chosen.

\begin{thm}\label{thm:QC}
    For $\cS_{\mu=1}(t)$ described above, the $\sQ$-top quantum curve is given as:
   \begin{equation}
    \left( \epsilon_1^2\frac{d^2}{dx(p)^2}-4x^3-2tx-2\sum_{g\in\frac12\bZ_{\geq0}}\epsilon_1^{2g}\frac{\partial F_g^{\sQ\text{-{\rm top}}}}{\partial t}\right) \psi^{\sQ\text{-}{\rm top}}(p)=0,\label{FgQC}
\end{equation}
where $F_{\frac12}^{\sQ\text{-{\rm top}}}$ and $F_{1}^{\sQ\text{-{\rm top}}}$ are defined as a solution of the following differential equation:
\begin{equation}
   \frac{\partial^2}{\partial t^2}F_{\frac12}^{\sQ\text{-{\rm top}}}=\Res_{z_1=0}\Res_{z_0=0}\cdot \Lambda_t(z_1)\cdot\Lambda_t(z_0)\cdot\omega_{\frac12,2}(z_0,z_1),\quad F_\frac12^{\sQ\text{-{\rm top}}}\big|_{t=0}=\frac{\partial}{\partial t}F_\frac12^{\sQ\text{-{\rm top}}}\bigg|_{t=0}=0.
\end{equation}
\begin{equation}
   \frac{\partial}{\partial t}F_{1}^{\sQ\text{-{\rm top}}}=\Res_{z_0=0}\cdot \Lambda_t(z_0)\cdot\omega_{1,1}(z_0),\quad F_1^{\sQ\text{-{\rm top}}}\big|_{t=1}=0.
\end{equation}
\end{thm}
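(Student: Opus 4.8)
The plan is to show that, for $\cS_{\mu=1}(t)$, the coefficients $Q_k^{\sQ\text{-top}}$ produced by Theorem \ref{thm:Qtop} collapse to $x$-independent constants for every $k\geq1$, that these constants equal $2\partial_t F_{k/2}^{\sQ\text{-top}}$, and that the $k=0$ piece reassembles into $4x^3+2tx+2\partial_t F_0^{\sQ\text{-top}}$. First I would recast the $\sQ$-top quantum curve in Riccati form: writing $P\,dx:=\sum_{g}\epsilon_1^{2g}\varpi_{g,1}$ and $p_g:=\varpi_{g,1}/dx$, equation \eqref{QC} is equivalent to $Q=P^2+\epsilon_1\,dP/dx$, so that $Q_0=y^2=4x^3+2tx+8q_0^3$ and, for $k\geq1$, $Q_k=\sum_{g_1+g_2=k/2}p_{g_1}p_{g_2}$, supplemented by $p_{(k-1)/2}'$ when $k$ is odd. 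Since $2g$ exhausts $\bZ_{\geq0}$ as $g$ runs over $\tfrac12\bZ_{\geq0}$, matching \eqref{QC} with \eqref{FgQC} reduces the theorem to three assertions: (i) each $Q_{k\geq1}$ is constant in $x$; (ii) that constant equals $2\partial_t F_{k/2}^{\sQ\text{-top}}$; and (iii) $8q_0^3=2\partial_t F_0^{\sQ\text{-top}}$.

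For (i), Proposition \ref{prop:QQC} already gives, because $\mu=1$, that $Q_{k\geq1}$ has poles only where $Q_0$ does, hence only at $x=\infty$, so each $Q_{k\geq1}$ is a polynomial in $x$. To upgrade polynomiality to constancy I would track the growth of the $p_g$ at $z=\infty$: the pole data of $\varpi_{g,1}$ (confined to $z=0,-q_z$ for $g\geq1$ by Theorem \ref{thm:Qtop}, together with the explicit simple poles of $\varpi_{\frac12,1}$ at $z=0,-q_z,\infty$ coming from \eqref{pi1/2,1}) bounds the degree of each product in $P^2+\epsilon_1\,dP/dx$, and one checks that the would-be positive powers of $x$ cancel order by order in $\epsilon_1$. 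The prototype is $Q_1=2y\,p_{\frac12}+y'$, where the leading $z$-behaviour of $2y\,p_{\frac12}$ and of $y'$ cancel to leave the constant $2q_z$; the same mechanism should propagate to all $k$.

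For (ii) the instrument is the variational formula. For $g>1$, Theorem \ref{thm:var} applied to the $\sQ$-top subsector gives $\partial_t F_g^{\sQ\text{-top}}=\int_{\gamma_t}\Lambda_t\,\varpi_{g,1}$, with $(\gamma_t,\Lambda_t)$ the generalised cycle of \eqref{Painleve-t}. I would then convert this cycle integral into residues by the global residue theorem; since $\mu=1$ makes $\Lambda_t\varpi_{g,1}$ regular at $q_z$, only the poles at $z=0,-q_z,\infty$ contribute, and I would match the outcome against the constant $\tfrac12 Q_{2g}$ read off from the Riccati relation and the kernel-residue formula \eqref{Qk>1}. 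The genera $g=\tfrac12,1$ lie outside the scope of Theorem \ref{thm:var}, so there I would read the stated residue-at-$0$ identities as the definitions of $F_{\frac12}^{\sQ\text{-top}},F_1^{\sQ\text{-top}}$ (integrated in $t$ with the prescribed initial conditions) and verify directly that $Q_1=2\partial_t F_{\frac12}^{\sQ\text{-top}}$ and $Q_2=2\partial_t F_1^{\sQ\text{-top}}$; for $g=\tfrac12$ this reduces to checking that the double residue equals $\partial_t q_z$, in agreement with $Q_1=2q_z$.

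Assertion (iii) is the classical special-geometry relation $\partial_t F_0^{\sQ\text{-top}}=4q_0^3$ for the genus-zero prepotential, which holds exactly as in \cite{EO07,IS15}; combined with $Q_0=4x^3+2tx+8q_0^3$ it produces the $\epsilon_1^0$ term $4x^3+2tx+2\partial_t F_0^{\sQ\text{-top}}$. Collecting all powers of $\epsilon_1$ then gives \eqref{FgQC}. I expect the principal difficulty to be (ii): the free-energy variation is a global object — a cycle integral around $z=\infty$, or, after the residue theorem, a combination of residues at $z=0,-q_z$ — whereas the quantum-curve coefficient \eqref{Qk>1} is a local residue at the marked point $q_z$; showing that these coincide uniformly in $g$, with the precise factor of $2$, is what forces the lengthy computation deferred to Appendix \ref{sec:proof}.
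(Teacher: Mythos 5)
Your overall architecture coincides with the paper's proof in Appendix \ref{app:proofQC}: $g=0$ is handled explicitly via \cite{IS15}; the residue identities in the statement are taken as the \emph{definitions} of $F_{\frac12}^{\sQ\text{-top}}$ and $F_1^{\sQ\text{-top}}$ and checked directly against $Q_1=2q_z$ (your reduction to the double residue being $\partial_t q_z$ is exactly the paper's computation) and against \eqref{R11top} for $Q_2$; and for $g\geq1$ the variational formula plus the quantisation condition reduce everything to showing that $Q_{2g}$ is an $x$-independent constant equal to $2\,\partial_t F_g^{\sQ\text{-top}}$. Where you genuinely differ is the mechanism for constancy: you use polynomiality (from Proposition \ref{prop:QQC}) plus growth control at $\infty$, whereas the paper reads it off structurally from \eqref{Qk>1} --- regularity of ${\rm Rec}^{\sQ\text{-top}}_{g,1}$ at $q_z$ together with the simple zero of $\omega_{0,1}$ there forces $R^{\sQ\text{-top}}_{g,1}(z_0)=R_g(t)\,dz_0/(z_0^2-q_z^2)$, whence $Q_{2g}=2R_g(t)$ exactly. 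Your route does work, and is in fact simpler than you suggest: by Theorem \ref{thm:Qtop} the poles of $\varpi_{g,1}$ for $g\geq1$ lie in $\cR^*\cup\sigma(\cP_+^0)=\{0,-q_z\}$, so $p_{g\geq1}=O(z^{-3})$, $p_{\frac12}=O(z^{-2})$, $p_0=O(z^3)$, hence each $Q_{k\geq1}$ is $O(z)=O(x^{1/2})$; a polynomial in $x$ of sublinear growth is constant, so no order-by-order cancellation is needed except to evaluate the constant. What the paper's explicit form of $R_{g,1}$ buys in exchange is that the identification $\Res_{z=\infty}\Lambda_t\cdot\varpi_{g,1}=R_g(t)=\tfrac12 Q_{2g}$ follows in one stroke from the loop equation \eqref{RLE}, the term ${\rm Rec}^{\sQ\text{-top}}_{g,1}/2\omega_{0,1}$ dying against the fifth-order pole of $\omega_{0,1}$ at $\infty$.

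Two concrete flaws to repair. First, an indexing slip in your Riccati expansion: since $g$ runs over $\tfrac12\bZ_{\geq0}$, the term $\epsilon_1\,dP/dx$ contributes $p'_{(k-1)/2}$ at \emph{every} order $k\geq1$, not only odd $k$; e.g.\ $Q_2=2p_0p_1+p_{\frac12}^2+p'_{\frac12}$. Dropping $p'_{\frac12},p'_{\frac32},\dots$ at even orders would corrupt both the constancy bookkeeping and the values of the constants. Second, your plan to convert $\Res_{z=\infty}\Lambda_t\cdot\varpi_{g,1}$ into residues at $z=0,-q_z$ by the global residue theorem is a dead end: those residues pick up local expansion coefficients of $\varpi_{g,1}$ at $0$ and $-q_z$ which are not available in closed form and cannot be matched against $\tfrac12Q_{2g}$. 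The matching must be performed at $\infty$ itself, either the paper's way (loop equation plus the explicit $R_{g,1}$), or self-consistently within your own framework: since $Q_{2g}$ is already known to be constant, evaluate it as $z\to\infty$, where only $2p_0p_g$ survives, giving $Q_{2g}=4a_3$ with $p_g=a_3z^{-3}+O(z^{-4})$, while $\Res_{z=\infty}\Lambda_t\cdot\varpi_{g,1}=2a_3$ (the $cq_0/z$ part of $\Lambda_t$ in \eqref{Painleve-t} contributes nothing). With that substitution your outline closes and is in substance equivalent to the paper's argument.
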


\newpage
\appendix
\section{Proofs}\label{sec:proof}
Throughout Appendix, we set $\Sigma=\bP^1$. We will give detailed computations for most of propositions and theorems of the present paper.

\subsection{Proof of Theorem \ref{thm:var}: for $\omega_{g,n+1}$}\label{app:varproofw}
We assume that a refined spectral curve $\cS_{\bm\mu}(t)$ carries one time $t$ of either 2nd or 3rd kind, and we denote by $(\gamma,\Lambda)$ the associated generalised cycle. The arguments below can be easily generalised to curves with several times.

Let us first introduce convenient notations. First, for any multidifferential $\omega(p,J)$, we denote its anti-invariant part under $\sigma$ by
\begin{equation}
    \Delta_p\omega(p,J):=\omega(p,J)-\omega(\sigma(p),J),
\end{equation}
where the subscript shows the variable we are considering for the above operation. Next, in order to specify variables for the variational operator, we sometime use the following notation
\begin{equation}
    \delta_t^{(p_1,..,p_n)}=\delta_t^{(n)}=\frac{d}{d t}-\sum_{a=1}^n\frac{\partial x(p_a)}{\partial  t}\frac{1}{dx(p_a)}d_{p_a}.
\end{equation}
Then, we can extend the action of the variational operator to meromorphic functions on $(\Sigma)^m$ for $m\neq n$ without any issue.

\subsubsection{Useful lemmas}
We show how the variational operator $\delta_t^{(n)}$ behaves on a product of functions and differentials:
\begin{lem}\label{lem:chain}
    Let $f(p,p_0)$ be a meromorphic function of $p$ and differential in $p_0$ and $\omega(p,p_1)$ a meromorphic bidifferential on $\Sigma$. Then, for any $o\in\Sigma$, we have the following:
    \begin{equation}
        \delta_t^{(p,p_0,p_1)}*\big(f(p,p_0)\cdot\omega(p,p_1)\big)= \delta_t^{(p,p_0)}*\big(f(p,p_0)\big)\cdot\omega(p,p_1)+ f(p,p_0)\cdot\delta_t^{(p,p_1)}*\omega(p,p_1)\label{Leibniz}
    \end{equation}
    \begin{equation}
    \delta_t^{(p_0,p_1)}*\Res_{p=a}f(p,p_0)\cdot\omega(p,p_1)=\Res_{p=o}\delta_t^{(p,p_0,p_1)}*(f(p,p_0)\cdot\omega(p,p_1))\label{dRes}
\end{equation}
\end{lem}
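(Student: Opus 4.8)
The plan is to reduce both identities to the fact that each summand of the operator $\delta_t^{(\cdot)}$ is a derivation on meromorphic functions, with the $*$-action amounting to stripping off and reinstating the relevant $dx$-factors. For \eqref{Leibniz} I would write $f(p,p_0)=\tilde f(p,p_0)\,dx(p_0)$ and $\omega(p,p_1)=\tilde\omega(p,p_1)\,dx(p)\,dx(p_1)$, so that $\delta_t^{(p,p_0,p_1)}*(f\omega)$ is computed by applying the scalar operator $\delta_t^{(p,p_0,p_1)}$ to the function $\tilde f\tilde\omega$ and reattaching $dx(p)dx(p_0)dx(p_1)$. Since $\tfrac{d}{dt}$ and each $\tfrac{\partial x(p_a)}{\partial t}\tfrac{1}{dx(p_a)}d_{p_a}$ is a derivation, so is their sum, giving $\delta_t^{(p,p_0,p_1)}(\tilde f\tilde\omega)=(\delta_t^{(p,p_0,p_1)}\tilde f)\tilde\omega+\tilde f(\delta_t^{(p,p_0,p_1)}\tilde\omega)$. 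The closing observation is that $\tilde f$ is independent of $p_1$ and $\tilde\omega$ of $p_0$, so $d_{p_1}\tilde f=0$ and $d_{p_0}\tilde\omega=0$, whence the two operators truncate to $\delta_t^{(p,p_0)}$ and $\delta_t^{(p,p_1)}$ respectively; reattaching the $dx$-factors yields \eqref{Leibniz}. This part is routine.

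The substantive identity is \eqref{dRes}, and the main obstacle is that $\Res_{p=o}$ at the \emph{fixed} point $o$ does not commute with $\tfrac{d}{dt}$: in a $t$-independent coordinate $z$ the residue is a contour integral with fixed contour but integrand $\Phi\,\tfrac{\partial x}{\partial z}$, where $\Phi:=f\omega/\big(dx(p)dx(p_0)dx(p_1)\big)$, and the Jacobian $\tfrac{\partial x}{\partial z}$ carries genuine $t$-dependence (equivalently, after projecting to the $x$-plane the residue location $x(o;t)$ moves with $t$). I would therefore decompose $\delta_t^{(p,p_0,p_1)}=\delta_t^{(p_0,p_1)}-\tfrac{\partial x(p)}{\partial t}\tfrac{1}{dx(p)}d_p$. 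The $p_0$- and $p_1$-derivative summands of $\delta_t^{(p_0,p_1)}$ commute with $\Res_{p=o}$ since they act on variables other than $p$, whereas expanding the $\tfrac{d}{dt}$-summand in the coordinate $z$ shows that $\tfrac{d}{dt}\Res_{p=o}\big(\Phi\,dx(p)\big)$ exceeds $\Res_{p=o}\big(\tfrac{d}{dt}\Phi\cdot dx(p)\big)$ by exactly $\Res_{p=o}(\Phi\,d_pv)$, with $v:=\tfrac{\partial x(p)}{\partial t}$, this defect arising from $\partial_t\partial_z x=\partial_z v$.

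The crux is then to show that the remaining correction $-\tfrac{\partial x(p)}{\partial t}\tfrac{1}{dx(p)}d_p$ supplies precisely this defect. Writing $d_p(v\Phi)=\Phi\,d_pv+v\,d_p\Phi$ and using that the residue of an exact form vanishes, $\Res_{p=o}d_p(v\Phi)=0$, the correction contributes $+\Res_{p=o}(\Phi\,d_pv)\,dx(p_0)dx(p_1)$, matching the defect found above; hence both sides of \eqref{dRes} agree. I would cross-check this conceptually by regarding $\Phi$ as a function of $x(p),x(p_0),x(p_1)$ on the relevant sheet near $o$: by the fixed-$x$ property $\delta_t^{(n)}*w=\partial_t w$ for pullback forms, the right-hand side of \eqref{dRes} equals $\big(\Res_{x=x(o;t)}\partial_t\Phi\big)\,dx(p_0)dx(p_1)$, while the left-hand side equals $\partial_t\big(\Res_{x=x(o;t)}\Phi\big)\,dx(p_0)dx(p_1)$, and the moving-location term $\tfrac{\partial x(o)}{\partial t}\Res_{x=x(o;t)}\partial_x\Phi$ vanishes because the residue of an $x$-derivative is zero. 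The one point needing extra care is $o\in\cR$, where $x$ is not a local coordinate; there I would run the same computation directly in $z$, the identity $\Res_{p=o}d_p(\,\cdot\,)=0$ holding uniformly regardless of whether $o$ is a ramification point.
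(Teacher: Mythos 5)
Your proposal is correct and takes essentially the same route as the paper: both arguments reduce \eqref{dRes} to a local-coordinate computation and cancel the Jacobian defect $\Res_{p=o}(\Phi\, d_p v)$ against the $-\tfrac{\partial x(p)}{\partial t}\tfrac{1}{dx(p)}d_p$ term via $\Res_{p=o} d_p(v\Phi)=0$, which is exactly the paper's chain-rule identity with exact correction $d_p\bigl(f\cdot\tfrac{\omega}{dx(p)}\tfrac{\partial x(p)}{\partial t}\bigr)$. The only difference is organisational: the paper first establishes commutation of $\delta_t^{(p_0,p_1)}$ with the residue through a Laurent expansion, which also covers the case where $o$ depends on $p_0$, $p_1$, or $t$ (needed in later applications, and present in your argument only implicitly, via the vanishing moving-location term in your cross-check), and then converts $\delta_t^{(p_0,p_1)}$ into $\delta_t^{(p,p_0,p_1)}$.
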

\begin{proof}
    \eqref{Leibniz} is just a Leibniz rule for the variational operator, and it is straightforward.
    
    On the other hand, we need a more careful consideration to prove \eqref{dRes}. Let us first show that $\delta_t^{(p_0,p_1)}$ commutes with $\Res_{p=o}$, no matter if $o$ depends on $p_0$, $p_1$ or $t$. Let $z$ be local coordinates around $a$, and suppose the integrand of the left-hand side is expanded at $z(p)=z(o)$ as
    \begin{equation}
       f(p,p_0)\cdot\omega(p,p_1)= \sum_{k\in\bZ}h_k(p_0,p_1,o)\cdot\frac{dz(p)}{(z(p)-z(o))^{k+1}},
    \end{equation}
    where $h_k(p_0,p_1,o)$ are bidifferential in $p_0,p_1$. Then, after taking the residue, the left-hand side of \eqref{dRes} is simply
    \begin{equation}
       \text{L.H.S. of \eqref{dRes}}= \delta_t^{(p_0,p_1)}*h_0(p_0,p_1,o).
    \end{equation}
    On the other hand, since $z(p)$ can be thought of as a constant in terms of $\delta_t^{(p_0,p_1)}$, we find
    \begin{align}
        \delta_t^{(p_0,p_1)}*(f(p,p_0)\cdot\omega(p,p_1))=\sum_{k\in\bZ}\bigg(&\delta_t^{(p_0,p_1)}*(h_k(p_0,p_1,o))\cdot\frac{dz(p)}{(z(p)-z(o))^{k+1}}\nonumber\\
        &-h_k(p_0,p_1,o)\cdot\delta_t^{(p_0,p_1)}*(z(o))\cdot\frac{(k+1)dz(p)}{(z(p)-z(o))^{k+2}}\bigg),\label{dRes1}
    \end{align}
    where $\delta_t^{(p_0,p_1)}*(z(o))$ can be nonzero if $o$ depends on $p_0$, $p_1$, or $t$. Nevertheless, the second term in \eqref{dRes1} will have no contributions after taking the residue, and we have shown that $\delta_t^{(p_0,p_1)}$ commutes with $\Res_{p=o}$. One may interpret this result such that a closed contour encircling $p=o$ can be chosen independently from the time $t$.

    Our last task is to transform $\delta_t^{(p_0,p_1)}$ into $\delta_t^{(p,p_0,p_1)}$, that is, the variational operator becomes effective with respect to the variable of integration $p$ as well. In fact, by the chain rules, we find
    \begin{equation}
         \delta_t^{(p_0,p_1)}*(f(p,p_0)\cdot\omega(p,p_1))=\delta_t^{(p,p_0,p_1)}*(f(p,p_0)\cdot\omega(p,p_1))+d_p\left(f(p,p_1)\cdot\frac{\omega(p,p_1)}{dx(p)}\frac{\partial x(p)}{\partial t}\right).
    \end{equation}
    Then since $f$ and $\omega$ are both meromorphic, the last term vanishes after taking residue.
\end{proof}
Lemma \ref{lem:chain} can be easily generalised to $\delta_t^{(p_0,p_1,..,p_n)}$ for any $n$. We next recall useful results given in \cite{EO07} (see also \cite{Rau59}):
\begin{lem}[{\cite[Section 5.1]{EO07}}]\label{lem:varEO}
    For $\cS_{\bm\mu}(t)$, we have
    \begin{align}
        \delta_{\epsilon}^{(2)} * \omega_{0,2}(p_0,p_1)=&\delta_{\epsilon}^{(2)} * B(p_0,p_1)\nonumber\\
        =&-\sum_{r\in\cR}\Res_{p=r}\frac{\eta^p(p_0)}{4\,\omega_{0,1}(p)}\cdot\Bigl(B(p,p_1)-B(\sigma(p),p_1)\Bigr)\cdot\delta_{\epsilon}^{(1)}*\omega_{0,1}(p)\nonumber\\
        =&\int_{\gamma}\Lambda(p)\cdot\omega_{0,3}(p,p_0,p_1),\label{varB}\\
        \delta_{\epsilon}^{(2)} * \eta^p(p_0)=&\sum_{r\in\cR}\Res_{q=r}\frac{\eta^q(p_0)}{2\omega_{0,1}(q)}\cdot\eta^p(q)\cdot\delta_{\epsilon}^{(1)}*\omega_{0,1}(q)\nonumber\\
        =&-\frac{1}{2\pi i}\oint_{q\in C_+}\frac{\eta^q(p_0)}{\omega_{0,1}(q)}\cdot\eta^p(q)\cdot\delta_{\epsilon}^{(1)}*\omega_{0,1}(q),\label{vareta1},
    \end{align}
     where $p\in\Sigma$ is independent of $t$ and $p_0,..,p_n$ and $C_+$ is defined in Section \ref{sec:Def}.
\end{lem}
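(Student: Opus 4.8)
The plan is to prove both identities by the same two-part template: first characterise the variation as a bidifferential (resp.\ differential) with controlled global analytic properties, then reconstruct it from its singularities at $\cR$ via the standard residue representation, and finally convert the residue formula into the integral $\int_\gamma\Lambda\cdot\omega_{0,3}$ using the defining relation \eqref{varcondition1}. For the first identity I would begin by recording the properties that pin down $\delta_\epsilon^{(2)}*B$. In a local $x$-coordinate one has $B(p_0,p_1)=\left(\frac{1}{(x(p_0)-x(p_1))^2}+\cO(1)\right)dx(p_0)\,dx(p_1)$, so the diagonal double pole is $t$-independent once $x$ is held fixed and is therefore annihilated by the \emph{variation for fixed $x$}; and since the $\cA$-cycles may be chosen independent of $t$, the normalisation $\oint_{\cA_i}B(\cdot,p_1)=0$ is preserved under $\delta_\epsilon^{(2)}$. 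Hence $\delta_\epsilon^{(2)}*B$ is a symmetric bidifferential with vanishing $\cA$-periods whose only possible poles lie at the ramification points $\cR$.

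Next I would determine the singular behaviour at each $r\in\cR$ by a local computation. In a coordinate $\zeta$ with $x-x_r=\zeta^2$ and $\sigma:\zeta\mapsto-\zeta$, the factor $\tfrac{1}{dx}$ inside $\delta_\epsilon^{(1)}=\frac{d}{d\epsilon}-\frac{\partial x}{\partial\epsilon}\frac{1}{dx}d_{p}$ is singular precisely at $r$ (since $dx=2\zeta\,d\zeta$ vanishes there), and this is the sole source of a localised contribution. Expanding $B$ and tracking this factor, the singular part of $\delta_\epsilon^{(2)}*B$ at $r$ is governed by $\delta_\epsilon^{(1)}*\omega_{0,1}$ near $r$ together with the local expansion of $B$; this is the content of the Rauch variational formula \cite{Rau59}. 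A symmetric bidifferential holomorphic off $\cR$ with zero $\cA$-periods is then reconstructed from these data as a sum over $r$ of residues of the recursion kernel $\frac{\eta^p(p_0)}{4\,\omega_{0,1}(p)}$ against $\Delta_pB(p,p_1)\cdot\delta_\epsilon^{(1)}*\omega_{0,1}(p)$ (the anti-invariant combination $\Delta_pB$ appears because the kernel is odd under $p\mapsto\sigma(p)$, so only the $\sigma$-odd part of the second factor survives the residue at the $\sigma$-fixed point $r$); this is exactly the first displayed equality.

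To reach the second equality I would substitute \eqref{varcondition1}, namely $\delta_\epsilon^{(1)}*\omega_{0,1}(p)=\int_{q\in\gamma}\Lambda(q)\cdot\omega_{0,2}(q,p)$, into the residue formula and exchange the order of the residue at $r$ and the $\gamma$-integral, which is legitimate because $\gamma\subset\Sigma\backslash\cR$ stays away from the residue loci. The remaining residue of $\frac{\eta^p(p_0)}{4\,\omega_{0,1}(p)}$ against the product $\Delta_pB(p,p_1)\cdot\omega_{0,2}(q,p)$, summed over $\cR$, is precisely the topological recursion expression for $\omega_{0,3}(q,p_0,p_1)$ read off from \eqref{RTR} with $(g,n)=(0,2)$, where the $\omega_{g-1,n+2}$ and $\sQ$-terms vanish and only the two $\omega_{0,2}$-products remain. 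This yields $\int_{q\in\gamma}\Lambda(q)\cdot\omega_{0,3}(q,p_0,p_1)$.

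The variation of $\eta^p(p_0)$ follows the same template: $\eta^p(p_0)$ has simple poles in $p_0$ with $t$-independent residues fixed by the fixed-$x$ normalisation and vanishing $\cA$-periods, so $\delta_\epsilon^{(2)}*\eta^p$ is again holomorphic off $\cR$ with zero $\cA$-periods and is reconstructed from its ramification-point singularities into the stated residue sum; deforming the sum of residues at $\cR$ into the single contour $C_+$ is then permissible because the integrand is regular in between, giving the $\oint_{C_+}$ form. The main obstacle is the local analysis at $\cR$ in the second step: since $\delta_\epsilon^{(1)}$ carries a $\tfrac{1}{dx}$ singular exactly on $\cR$, one must verify rigorously that the fixed-$x$ variation produces only poles of the expected order there and extract the precise residue coefficient, rather than spurious higher-order or non-local contributions. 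This local computation is the technical heart and is exactly what is established in \cite{Rau59,EO07}.
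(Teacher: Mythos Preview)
Your outline is essentially the same as the paper's treatment: the lemma is not proved in the paper but is quoted from \cite[Section~5.1]{EO07} (the Rauch variational formula together with the reconstruction of a normalised bidifferential from its ramification-point singularities), and your sketch faithfully reproduces that argument, including the substitution of \eqref{varcondition1} and the identification of the residue sum with the recursion expression for $\omega_{0,3}$.

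The one place where your justification diverges from the paper is the last equality in \eqref{vareta1}. You write that ``deforming the sum of residues at $\cR$ into the single contour $C_+$ is then permissible because the integrand is regular in between,'' but this is not quite correct: the integrand $\frac{\eta^q(p_0)}{\omega_{0,1}(q)}\,\eta^p(q)\,\delta_\epsilon^{(1)}*\omega_{0,1}(q)$ has, as a differential in $q$, simple poles at $q=p$ and $q=\sigma(p)$ coming from $\eta^p(q)$, as well as potential poles at zeroes of $\omega_{0,1}$ in $\widetilde\cP$, so a naive contour deformation is obstructed. The paper instead notes that this step follows from the $\sigma$-invariance of the integrand in $q$ together with \cite[Lemma~2.3]{O23}: invariance under $q\mapsto\sigma(q)$ forces the extra residues to organise into $\sigma$-symmetric pairs whose contributions cancel against each other when one passes from the $C_-$-side (containing $\cR$) to the $C_+$-side. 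If you want your argument to be complete, replace the ``regular in between'' claim by this $\sigma$-invariance observation.
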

Note that, strictly speaking, \cite{EO07} only shows the first line of \eqref{vareta1}, and the second equality is a consequence due to \cite[Lemma 2.3]{O23} and the invariance of the integrand under $\sigma$ on $q$. With this property, we will show another lemma which is equivalent to e.g. \cite[Lemma 3.14]{BCCGF22}:
\begin{lem}\label{lem:dRec}
    Let $\omega(p;p_1,..,p_n)$ a meromorphic quadratic differential in $p$ and multidifferential in $p_1,..,p_n$ for some $n\in\bZ_{\geq0}$. Then, we have
    \begin{align}
        &\frac{1}{2\pi i}\int_{p\in C_+}\omega(p;J)\cdot\delta_{t}^{(2)}*\left(\frac{\eta^p(p_0)}{2\omega_{0,1}(p)}\right)\nonumber\\
        &=\frac{1}{2\pi i}\int_{p\in C_+}\frac{\eta^p(p_0)}{2\omega_{0,1}(p)}\cdot2\delta_{\epsilon}^{(1)}*\omega_{0,1}(p)\cdot\left(\frac{1}{2\pi i}\int_{q\in C_+}\frac{\eta^q(p)}{2\omega_{0,1}(q)}\cdot\omega(q;J)\right).\label{dRes2}
    \end{align}
\end{lem}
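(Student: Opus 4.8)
The plan is to differentiate the kernel $K(p_0,p):=\eta^p(p_0)/(2\omega_{0,1}(p))$ directly and reorganise the result into the loop-insertion form on the right-hand side. First I would apply the Leibniz rule for the variational operator (Lemma \ref{lem:chain}, eq. \eqref{Leibniz}) to split $\delta_t^{(2)}*K$ into a piece in which the variation hits the numerator $\eta^p(p_0)$ and a piece in which it hits the denominator $1/(2\omega_{0,1}(p))$. For the denominator, since $\delta_t^{(1)}*dx=0$ one has $\delta_t^{(1)}*\frac{1}{2\omega_{0,1}(p)}=-\frac{\delta_t^{(1)}*\omega_{0,1}(p)}{2\omega_{0,1}(p)^2}$, which already manufactures the factor $\delta_t^{(1)}*\omega_{0,1}(p)$ appearing on the right-hand side.

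The key subtlety is that the source point $p$ of $\eta^p(p_0)$ is precisely the variable of integration, so the relevant variation is the full $\delta_t^{(p,p_0)}$ rather than the one recorded in Lemma \ref{lem:varEO}, which holds the source fixed. I would therefore write $\delta_t^{(p,p_0)}*\eta^p(p_0)=\delta_t^{(p_0)}*\eta^p(p_0)-\frac{\partial x(p)}{\partial t}\frac{d_p\eta^p(p_0)}{dx(p)}$, where the first term is supplied by Lemma \ref{lem:varEO} and the second is the source contribution. Using $\eta^p(p_0)=\int_{\sigma(p)}^{p}B(\cdot,p_0)$ one computes $d_p\eta^p(p_0)=\Delta_pB(p,p_0)$, so the source term becomes explicit.

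For the Lemma \ref{lem:varEO} term I would substitute the contour formula for $\delta_t^{(p_0)}*\eta^p(p_0)$, exchange the two $C_+$-integrations (justified because $\delta_t$ commutes with residues and the contours may be chosen $t$-independently, Lemma \ref{lem:chain}, eq. \eqref{dRes}), and recognise the inner $p$-integral $\frac{1}{2\pi i}\int_{p\in C_+}\frac{\eta^p(q)}{2\omega_{0,1}(p)}\omega(p;J)$ as the loop insertion $\widehat\omega(q;J)$, which is exactly the inner factor on the right-hand side. This already produces one copy of the target structure $\int_q K(p_0,q)\,\delta_t^{(1)}*\omega_{0,1}(q)\,\widehat\omega(q;J)$.

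The hard part will be assembling the remaining two contributions — the denominator variation and the $\eta$-source term $\Delta_pB(p,p_0)$ — into the correct multiple of the right-hand side. I expect to integrate by parts in $p$ to move $d_p$ off $\eta$, invoke $d_p\eta^p(p_0)=\Delta_pB(p,p_0)$ together with the residue theorem and the $\sigma$-anti-invariance of the integrand, and thereby convert these single $p$-integrals back into the same double-integral (loop-insertion) form, so that the signs and the overall factor $2$ in the statement come out correctly. This coefficient-and-contour bookkeeping, rather than any single estimate, is the main obstacle: it is precisely the delicate interplay flagged in the introduction, namely that $C_\pm$ enclose $J_0$ and $\sigma(J_0)$, which is what makes the naive projection-and-inverse argument of \cite{EO07} break down in the refined setting.
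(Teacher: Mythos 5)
Your opening moves coincide with the paper's: the Leibniz split \eqref{Leibniz} of $\delta_t^{(2)}$ acting on the kernel, the variation of the denominator producing $-\delta_t^{(1)}*\omega_{0,1}(p)/2\omega_{0,1}(p)^2$, the substitution of the contour form \eqref{vareta1} of Lemma \ref{lem:varEO}, and the recognition of the inner $p$-integral as the loop insertion. But the proof breaks at exactly the step you defer. Exchanging the two nested $C_+$-integrations is \emph{not} free, and eq.~\eqref{dRes} does not justify it: that equation lets $\delta_t$ commute with a residue, it is not a Fubini statement for nested contours. Since the inner $q$-contour encloses $q=p$, where $\eta^p(q)$ has a simple pole, the exchange produces a correction,
\begin{equation*}
    \int_{p\in C_+}\int_{q\in C_+}=\int_{q\in C_+}\left(\int_{p\in C_+}-2\pi i\Res_{p=q}\right),
\end{equation*}
and this residue term, which evaluates to $\frac{1}{2\pi i}\int_{q\in C_+}\omega(q;J)\cdot\frac{\eta^q(p_0)}{2\omega_{0,1}(q)^2}\cdot\delta_t^{(1)}*\omega_{0,1}(q)$, is precisely what cancels the denominator-variation term from your Leibniz split. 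Having dropped it, you are left hoping that the denominator term and your $\Delta_pB$ source term assemble into further copies of the right-hand side; they cannot and should not. The overall factor $2$ is already accounted for by $\eta^q(p_0)/\omega_{0,1}(q)=2\cdot\eta^q(p_0)/(2\omega_{0,1}(q))$ in \eqref{vareta1}, so the leftover terms must vanish identically rather than contribute, and your planned bookkeeping chases a cancellation whose counterpart is missing from your ledger.

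The source-term correction $-\frac{\partial x(p)}{\partial t}\,d_p\eta^p(p_0)/dx(p)$ is moreover spurious here. The operator $\delta^{(2)}$ in Lemma \ref{lem:varEO} is the two-slot variation at fixed $x$ (the caveat that $p$ is ``independent of $t$'' only says $p$ is a generic fixed point, not that the $p$-slot subtraction is switched off), so \eqref{vareta1} already supplies exactly the variation appearing in the kernel; and in any case, under a closed-contour integral the discrepancy between the full and the $p_0$-only variation of the integrand is the exact form $d_p\!\left(f\cdot\frac{\omega}{dx(p)}\frac{\partial x(p)}{\partial t}\right)$ exhibited in the proof of Lemma \ref{lem:chain}, which integrates to zero over $C_+$. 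Hence no $\Delta_pB(p,p_0)$ term survives and no integration by parts or $\sigma$-anti-invariance is needed. The genuine subtlety of this lemma is the pole of $\eta^p(q)$ at coinciding integration variables under nested contours --- not the fact that $C_\pm$ enclose $J_0$ and $\sigma(J_0)$, which matters elsewhere in the variational argument but plays no role in \eqref{dRes2}.
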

\begin{proof}
    Let us focus on the contribution from the action of $\delta_{t}^{(2)}$ on $\eta^p(p_0)$. Thanks to Lemma \ref{lem:varEO}, the corresponding term becomes
    \begin{equation}
       \left(\frac{1}{2\pi i}\right)^2\int_{p\in C_+}\int_{q\in C_+}\frac{\omega(p;J)}{2\omega_{0,1}(p)}\cdot\frac{\eta^q(p_0)}{\omega_{0,1}(q)}\cdot\eta^p(q)\cdot\delta_{\epsilon}^{(1)}*\omega_{0,1}(q),\label{dRes3}
    \end{equation}
    where $C_+$ with respect to $q$ contains $q=p$ inside. We now exchange the order of residues as follows (c.f. \cite[Appendix A]{EO07}, \cite[Appendix A.1]{O23})
\begin{equation}
    \int_{p\in C_+}\int_{q\in C_+}=\int_{q\in C_+}\left(\int_{p\in C_+}-2\pi i\Res_{p=q}\right),
\end{equation}
where $C_+$ with respect to $p$ on the right-hand side contains $p=q$ inside. Thus, we have
\begin{align}
    \eqref{dRes3}=&\frac{1}{2\pi i}\int_{q\in C_+}\frac{\eta^q(p_0)}{2\omega_{0,1}(q)}\cdot2\delta_{\epsilon}^{(1)}*\omega_{0,1}(q)\cdot\left(\frac{1}{2\pi i}\int_{p\in C_+}\frac{\eta^p(q)}{2\omega_{0,1}(p)}\cdot\omega(p;J)\right)\nonumber\\
    &+\frac{1}{2\pi i}\int_{q\in C_+}\omega(q;J)\cdot\frac{\eta^q(p_0)}{2\omega_{0,1}(q)^2}\cdot\delta_{t}^{(2)}*\omega_{0,1}(q)\label{dRes4}
\end{align}
After relabeling $p\leftrightarrow q$, one notices that the second term in \eqref{dRes4} precisely cancels the contribution of the action of $\delta_{t}^{(1)}$ on $\omega_{0,1}(p)$ on the left-hand side of \eqref{dRes2}.
\end{proof}

Recall that every time of the 2nd or 3rd kind is associated with a pole of $\omega_{0,1}$, and we denote by $p_-\in\sigma(\cP^{\infty}_+)$ the corresponding pole inside $C_-$, and as a consequence $p_+:=\sigma(p_-)$ is inside $C_+$ if it is not a ramification point whereas $p_+=p_-$ is inside $C_-$ if it is a ramification point --- recall that we are not allowing a deformation such that $p_\pm$ approach to each other. Then, we can show the following property:
\begin{lem}\label{lem:varpole}
The following function in $p_0$
\begin{equation}
    \frac{\delta_t^{(1)}*\omega_{0,1}(p_0)}{\omega_{0,1}(p_0)}\label{dpolesw01}
\end{equation}
is holomorphic at $p_0=p_\pm$, and for $2g-2+n\geq-1$, the following differential in $p_0$ is regular at $p_0=p_\pm$:
    \begin{equation}
        \int_{q\in\gamma}\Lambda(q)\cdot\omega_{g,n+2}(q,p_0,J).\label{dpoles}
    \end{equation}
\end{lem}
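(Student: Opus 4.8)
The plan is to treat the two assertions separately: the first follows from a local pole-order count, and the second from the pole localisation of Theorem \ref{thm:RTR}. For the first statement I would work in a local coordinate at $p_\pm$ and use the expansions \eqref{var-para}. By definition the time $t$ is one of the coefficients in the principal part of $\omega_{0,1}$ at $p_\pm$ (a $t_{s,k}$ or $t_{r,k}$ of \eqref{var-para}), and since $\delta_t^{(1)}$ is the variation for fixed $x$, it acts on $\omega_{0,1}$ by differentiating the principal part in that single coefficient. This produces one polar term, $\pm(x-x_s)^{-(k+1)}dx$ in the unramified case (respectively $\zeta^{-2k}d\zeta$ with $\zeta=\sqrt{x-x_r}$ in the ramified case), plus a contribution holomorphic at $p_\pm$. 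Because $k+1$ (respectively $2k$) never exceeds the pole order of $\omega_{0,1}$ at $p_\pm$, the quotient $\delta_t^{(1)}*\omega_{0,1}/\omega_{0,1}$ has non-negative order there, i.e. it is holomorphic; dividing the holomorphic remainder by the pole $\omega_{0,1}$ only improves this.

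For the second statement I would first invoke Theorem \ref{thm:RTR}: as a differential in $p_0$, $\omega_{g,n+2}(q,p_0,J)$ has poles only in $\cR^*\cup\sigma(\{q\}\cup J\cup\cP_+^0)$. Since $p_\pm$ are poles of $\omega_{0,1}=ydx$, they are ineffective if ramified (so $p_\pm\notin\cR^*$) and lie neither in $\sigma(\cP_+^0)$ (a pole is not a zero) nor in $\sigma(J)$ for generic $J$; hence the only way $p_0=p_\pm$ can be a pole is through the reflected-diagonal term $p_0=\sigma(q)$, which occurs exactly when $q=p_\mp$. For a generalised cycle of kind I or II the contour $\gamma$ is a closed cycle encircling but not passing through $p_\pm$ (a single loop in the ramified sub-case $p_+=p_-\in\cR$), so the pole locus $\sigma(\gamma)$ is a union of closed loops bounded away from $p_\pm$. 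Choosing a small disk $D$ around $p_\pm$ disjoint from $\sigma(\gamma)$, the integrand is jointly holomorphic in $p_0\in D$ and continuous in $q\in\gamma$ — crucially, there is no pole at the coincidence $p_0=q$, only at $p_0=\sigma(q)$ — so $\int_{q\in\gamma}\Lambda(q)\cdot\omega_{g,n+2}(q,p_0,J)$ is holomorphic on $D$, hence regular at $p_\pm$.

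The hard part is a generalised cycle of kind III, where $\gamma$ is the open path from $p_-=\sigma(p)$ to $p_+=p$ and $\Lambda=1$. Now $\sigma(\gamma)$ also terminates at $p_\pm$, so as $p_0\to p_\pm$ the reflected-diagonal pole $q=\sigma(p_0)$ collides with the endpoint of integration and the contour argument above breaks down. To handle this I would localise near the relevant endpoint and use the no-residue property of Theorem \ref{thm:RTR}: $\omega_{g,n+2}(\cdot,p_0,J)$ has a pole of order $\geq 2$ and no residue at $q=\sigma(p_0)$, so a local primitive exists and the only potential divergence of $\int_\gamma$ is a boundary term of the form $c(p_0,J)/\bigl(w(\text{endpoint})-w(\sigma(p_0))\bigr)$, governed by the leading polar coefficient $c(p_0,J)$ at the reflected diagonal. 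The plan is then to show this boundary term is in fact regular, either by rewriting the open-path integral as a closed-contour integral of a primitive (reducing to the kind-II situation of the previous paragraph) or by proving directly that $c(p_0,J)$ vanishes as $p_0\to p_\pm$. I expect the latter to be the crux: tracing the pinch that produces the reflected-diagonal pole back through the recursion \eqref{RTR}, the factor $1/\omega_{0,1}(p)$ together with the fact that $p_\pm$ is a pole of $\omega_{0,1}$ should suppress $c(p_0,J)$ exactly enough to cancel the endpoint singularity. Establishing this suppression quantitatively is the main obstacle, and is presumably where the bulk of the technical work lies.
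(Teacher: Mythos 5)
Your proof of the first statement is sound and equivalent in substance to the paper's (which deduces the bound from \eqref{varcondition1}: $\Lambda$ has a pole of order at most $m-1$ at $p_\pm$ and $\omega_{0,2}(q,p_0)$ a double pole at $q=\sigma(p_0)$, so $\delta_t^{(1)}*\omega_{0,1}$ has a pole of order at most $m$). The genuine gap is in your treatment of kind II. The differential \eqref{dpoles} is the meromorphic continuation in $p_0$ of the integral defined for $p_0$ in generic position, where $\sigma(p_0)$ lies \emph{outside} the small loops around $p_\pm$ constituting $\gamma$. As $p_0$ approaches $p_+$, the reflected-diagonal pole $q=\sigma(p_0)$ crosses the loop of $\gamma$ around $p_-$, so your fixed-contour integral over a small disk $D$ differs from that continuation by the jump term $\pm 2\pi i\,\Res_{q=\sigma(p_0)}\Lambda(q)\cdot\omega_{g,n+2}(q,p_0,J)$, which threatens a pole at $p_0=p_\pm$ because $\Lambda$ has a pole of order up to $m-1$ at $p_\mp$. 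Your disk argument therefore establishes holomorphy of the wrong branch, and it proves too much: applied verbatim to $\omega_{0,2}$ (whose only relevant pole is likewise at $q=\sigma(p_0)$) it would show that $\int_{\gamma}\Lambda(q)\cdot\omega_{0,2}(q,p_0)=\delta_t^{(1)}*\omega_{0,1}(p_0)$ is regular at $p_0=p_\pm$, which is false --- by \eqref{var01} it has a pole of order up to $m$ there, which is exactly why the first statement is phrased as holomorphy of the \emph{ratio} \eqref{dpolesw01} and why the range $2g-2+n\geq-1$ excludes $\omega_{0,2}$.

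So kind II requires the same suppression analysis you correctly identified as the crux for kind III, and that analysis --- which you defer as ``presumably where the bulk of the technical work lies'' --- is the actual content of the paper's proof. The paper handles the 2nd and 3rd kinds uniformly by induction on $2g-2+n$: the base case is the explicit reflected-diagonal singular part \eqref{dpoles1} of $\omega_{\frac12,2}$, whose $1/\omega_{0,1}$ factor has a zero of order $m$ at $p_\mp$ beating the order-$(m-1)$ pole of $\Lambda$; the inductive step rewrites $\omega_{g,n+2}$ via the refined loop equation \eqref{RLE}, where the dangerous piece of the ${\rm Rec}$-term assembles into $\frac{\delta_t^{(1)}*\omega_{0,1}(p_0)}{\omega_{0,1}(p_0)}\cdot\omega_{g,n+1}(p_0,J)$ --- regular precisely by the first statement --- while the remainder contributes a total derivative $\int_{\gamma}\Lambda(q)\cdot d_q\bigl(\frac{\eta^q(p_0)}{2\omega_{0,1}(q)}\cdot\omega_{g,n+1}(q,J)\bigr)$, annihilated by the pole bound on $\Lambda$ (2nd kind) or by the endpoint zeros of $1/\omega_{0,1}(q)$ (3rd kind). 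Your kind-III sketch gestures at the right mechanism (no-residue property plus $1/\omega_{0,1}$ suppression through the recursion) but does not establish it, so as written the proposal proves neither kind II nor kind III.
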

\begin{proof}
    \eqref{dpolesw01} means that the pole order of $\omega_{0,1}$ does not get higher even after taking the variation. This is because we are only considering generalised cycles $(\gamma,\Lambda)$, which by definition guarantees that the pole order of $\Lambda(q)$ at $q=p_\pm$ is at most $m-1$ for an $m$-th order pole of $\omega_{0,1}$. Then, since $\omega_{0,2}(q,p_0)$ has a double pole at $q=\sigma(p_0)$, we find that $\delta_t^{(1)}*\omega_{0,1}$ has at most an $m$-th order pole, hence \eqref{dpolesw01} is regular as a function in $p_0$ at $p_0=p_\pm$.
    
    As for \eqref{dpoles1}, Theorem \ref{thm:RTR} shows that all poles of $\omega_{g,n+2}(q,p_0,J)$ for $2g-2+n\geq-1$ with respect to $q$ lie in $\cR^*\cup\sigma(J_0\cup\cP_+^0)$. Thus, a pole of \eqref{dpoles} at $p_0=p_+$ can only come from the pole of the integrand $\omega_{g,n+2}(q,p_0,J)$ at $q=\sigma(p_0)$, hence we focus on this contribution.

    As derived in \cite[Section 3.3]{O23}, the pole of $\omega_{\frac12,2}(q,p_0)$ at $q=\sigma(p_0)$ arises from the following term:
    \begin{equation}
        \omega_{\frac12,2}(q,p_0)=-\frac{\sQ}{4}\left(d_q\frac{\Delta_q \omega_{0,2}(q,p_0)}{\omega_{0,1}(q)}+d_{p_0}\frac{\Delta_{p_0} \omega_{0,2}(q,p_0)}{\omega_{0,1}(p_0)}\right)+\text{reg. at $q=\sigma(p_0)$}.\label{dpoles1}
    \end{equation}
  Recall from Definition \ref{def:pair} that $\Lambda(q)$ has a pole at $q=p_+$ of order at most $m-1$ if $p_+$ is an $m$-th  order pole of $\omega_{0,1}$. Then, since there is $\omega_{0,1}$ in the denominator of \eqref{dpoles1}, one notices that \eqref{dpoles} becomes regular at $p_0=p_\pm$ after integration.

    We now proceed by induction in $\chi=2g-2+n\geq-1$. Since $\Sigma=\bP^1$, the integrand of the refined topological recursion formula is a meromorphic differential in $p$. Thus, by using the property that the sum of all residues of a meromorphic differential is zero on a compact Riemann surface, one can  rewrite the recursion formula as
\begin{align}
    \omega_{g,n+2}(p_0,q,J)&=\frac{1}{2\pi i}\int_{p\in C_+}\frac{\eta^p(p_0)}{2\omega_{0,1}(p)}{\rm Rec}_{g,n+2}(p,q,J)\nonumber\\
    &=-\frac{1}{2\omega_{0,1}(p_0)}\cdot{\rm Rec}_{g,n+2}(p_0,q,J)+R_{g,n+2}(p_0,q,J),\label{RLE}
\end{align}
where 
\begin{align}
    R_{g,n+2}(p_0,q,J)=&\frac{1}{2\pi i}\int_{p\in C_+\backslash \{p_0\}}\frac{\eta^p(p_0)}{2\omega_{0,1}(p)}{\rm Rec}_{g,n+1}(p,q,J)\nonumber\\
    =&d_q\left(\frac{\eta^q(p_0)}{2\omega_{0,1}(q)}\cdot\omega_{g,n+1}(q,J)\right)+\frac{1}{2\pi i}\int_{p\in C_+\backslash \{p_0,q\}}\frac{\eta^p(p_0)}{2\omega_{0,1}(p)}{\rm Rec}_{g,n+1}(p,q,J),\label{dpoleR}
\end{align}
and $C_+\backslash \{p_0\}$ denotes the resulting contour after evaluating residue at $p_0=p_\pm$ which gives the first term in the second line of \eqref{RLE}, and similarly for $C_+\backslash \{p_0,q\}$. \eqref{RLE} is indeed called the \emph{refined loop equation of type $(g,n+2)$} \cite{O23}.

Then, by the induction ansatz, we have
\begin{align}
    &\frac{1}{2\omega_{0,1}(p_0)}\cdot\int_{q\in\gamma}\Lambda(q)\cdot{\rm Rec}_{g,n+2}(p_0,q,J)\nonumber\\
    &=\frac{1}{2\omega_{0,1}(p_0)}\cdot \int_{q\in\gamma}\Lambda(q)\cdot2\omega_{0,2}(p_0,q)\cdot\omega_{g,n+1}(p_0,J)+\text{reg at $p_0=p_\pm$}\nonumber\\
    &=\frac{1}{\omega_{0,1}(p_0)}\cdot\left(\delta_t^{(1)}*\omega_{0,1}(p_0)\right)\cdot\omega_{g,n+1}(p_0,J)+\text{reg at $p_0=p_\pm$}.\label{dpoleRec}
\end{align}
Thus, this contribution is non-singular at $p_0=p_\pm$ thanks to the first statement of this lemma. On the other hand, the contribution from $R_{g,n+2}$ can be written as
\begin{align}
    \int_{q\in\gamma}\Lambda(q)\cdot R_{g,n+2}(p_0,q,J)=\int_{q\in\gamma}\Lambda(q)\cdot d_q\left(\frac{\eta^q(p_0)}{2\omega_{0,1}(q)}\cdot\omega_{g,n+1}(q,J)\right)+\text{reg at $p_0=p_\pm$}.
\end{align}
The first term vanishes no matter if $t$ is of the 2nd kind or 3rd kind due to the pole structure of $\Lambda(q)$. Therefore, we conclude that \eqref{dpoles1} is regular at $p_0=p_\pm$.
\end{proof}

\subsubsection{Proof of Theorem \ref{thm:var} for $\omega_{g,n+1}$}

We now prove Theorem \ref{thm:var} for $\omega_{g,n+1}$ by induction in $\chi=2g-2+n\geq-2$. For $\chi=-2$, i.e, $(g,n)=(0,0)$, it holds because we only consider parameters associated with generalised cycles (c.f. \cite{EO07,E17}). For $\chi=-1$, the theorem also holds because it is shown for $\delta_t^{(2)}*\omega_{0,2}(p_0,p_1)$ in \cite{EO07}, and also because we assume that a refined spectral curve $\cS_{\bm\mu}(t)$ satisfies the refined deformation condition. Our approach is similar to the technique shown in \cite{CEM10,BCCGF22} to some extent.

Let us assume that the variational formula holds up to $\chi=k$ for some $k\geq-1$, and we consider the case for $(g,n)$ with $\chi=2g-2+n=k+1$. Then by applying the variational operator to the recursion formula in the form of the first line of \eqref{RLE}, Lemma \ref{lem:dRec} imply
\begin{align}
&\delta_t^{(n+1)}*\omega_{g,n+1}(p_0,J)\nonumber\\
    &=\frac{1}{2\pi i}\int_{p\in C_+}\frac{\eta^p(p_0)}{2\omega_{0,1}(p)}\cdot\delta_t^{(n+1)}*{\rm Rec}_{g,n+1}(p,J)\nonumber\\
    &\;\;\;\;+\frac{1}{2\pi i}\int_{p\in C_+}\frac{\eta^p(p_0)}{2\omega_{0,1}(p)}\cdot2\delta_{\epsilon}^{(1)}*\omega_{0,1}(p)\cdot\left(\frac{1}{2\pi i}\int_{q\in C_+}\frac{\eta^q(p)}{2\omega_{0,1}(q)}{\rm Rec}_{g,n+1}(q,J)\right)\nonumber\\
    &=\frac{1}{2\pi i}\int_{p\in C_+}\int_{q\in\gamma}\Lambda(q)\cdot\frac{\eta_\sA^p(p_0)}{2\omega_{0,1}(p)}{\rm Rec}_{g,n+2}(p,q,J),\label{dw1}
\end{align}
where at the second equality we used the induction ansatz on $\delta_t^{(n+1)}*{\rm Rec}_{g,n+1}(p,J)$ and also we applied the recursion formula in the third line to obtain $\omega_{g,n+1}(p,J)$.

Let us simplify \eqref{dw1}. Consider a decomposition $C_+=C_0\cup C_\gamma$ such that $C_\gamma$ contains $p_+$ inside but no other poles of the integrand. Then, $C_0$ and $\gamma$ do not intersect and one can freely exchange the order of integration. In particular, one obtains:
\begin{align}
    \delta_t^{(n+1)}*\omega_{g,n+1}(p_0,J)-\int_{q\in\gamma}\Lambda(q)\cdot\omega_{g,n+2}(p,q,J)=\rho_{g,n+1}(p_0,J),\label{dw2}
\end{align}
where
\begin{align}
    \rho_{g,n+1}(p_0,J):=\left(\Res_{p=p_+}\int_{q\in\gamma}-\int_{q\in\gamma}\Res_{p=q}\right)\Lambda(q)\cdot\frac{\eta^p(p_0)}{2\omega_{0,1}(p)}{\rm Rec}_{g,n+2}(p,q,J).\label{dw3}
\end{align}
Note that the first term in \eqref{dw3} is the remnant contribution from $C_\gamma$ whereas the second term is the counter effect of applying the refined recursion formula \eqref{RLE} to obtain $\cdot\omega_{g,n+2}(p,q,J)$ on the left-hand side of \eqref{dw2}. As shown in \eqref{dpoleRec} in Lemma \ref{lem:varpole}, the integrand of the first term in \eqref{dw3} as a differential in $p$ becomes regular at $p=p_\pm$, hence it vanishes. Furthermore, since $\omega_{0,2}(p,\sigma(q))$ is the only term that has a pole at $p=q$ in the integrand in \eqref{dw3}, the second term can be written as
\begin{equation}
    \int_{q\in\gamma}\Res_{p=q}\Lambda(q)\cdot\frac{\eta^p(p_0)}{2\omega_{0,1}(p)}{\rm Rec}_{g,n+2}(p,q,J)=\int_{q\in\gamma}\Lambda(q)\cdot d_q\left(\frac{\eta^q(p_0)}{2\omega_{0,1}(q)}\omega_{g,n+1}(q,J)\right).
\end{equation}
This always vanishes for any generalised cycle due to the pole order of $\Lambda(q)$ at $q=p_\pm$ (see Definition \ref{def:pair}). This completes the proof for $\omega_{g,n+1}$.

\subsection{Proof of Theorem \ref{thm:var}: for $F_g$}\label{app:varproofF}
Notice that the above proof was based on the pole structure of the refined topological recursion formula, or equivalently, refined loop equations. Since $F_g$ does not appear in the recursion formula, we need a different approach to prove for $F_g$.\footnote{Strictly speaking, the original proof in \cite{EO07} is based on the rooted-graph interpretation of the Eynard-Orantin recursion formula which works only for $\omega_{g,n+1}$, but not for $F_g$. The statement itself still stands for $F_g$ too as one can easily see in \eqref{dFg1} whose computation is valid beyond hyperelliptic curves. Alternatively, one can simply introduce a non-tooted graphical interpretation for the defining equation of $F_g$ and properly make sense of the action of the variation, which is perhaps just omitted in \cite{EO07}.}.

For $g>1$, we directly take the derivative of the definition of $F_g$ which gives
\begin{align}
    \frac{\partial F_g}{\partial t}=&\frac{1}{2-2g}\frac{1}{2\pi i}\oint_{p\in C^{\mathfrak p}_-}\left(\left(\delta_t^{(1)}*\phi(p)\right)\cdot\omega_{g,1}(p)+\phi(p)\cdot\left(\delta_t^{(1)}*\omega_{g,1}(p)\right)\right)\nonumber\\
    =&\frac{1}{2-2g}\frac{1}{2\pi i}\oint_{p\in C^{\mathfrak p}_-}\int_{q\in\gamma}\Lambda(q)\cdot\left(\left(\int^p\omega_{0,2}(q\cdot)\right)\cdot\omega_{g,1}(p)+\phi(p)\cdot\omega_{g,2}(p,q)\right)\label{dFg0}
\end{align}
where we used Lemma \ref{lem:chain}, and we used the variational formula for $\delta_t^{(1)}*\omega_{g,1}$ at the second equality. Then, since $C_-^{\mathfrak{p}}$ does not contain any point in $\cP^{\infty}$, we can exchange the order of integration with respect to $p$ and $q$ in \eqref{dFg0}. After some manipulation by using the dilaton equation \eqref{dilaton}, we find
\begin{equation}
    \frac{\partial F_g}{\partial t}-\int_{q,\in\gamma}\Lambda(q)\cdot\omega_{g,1}(q)=\frac{1}{2-2g}\int_{q,\in\gamma}\Lambda(q)\cdot\Res_{p=\sigma(q)}\phi(p)\cdot\omega_{g,2}(p,q),\label{dFg1}
\end{equation}
 where the right-hand side is the counter effect of applying the dilaton equation, similar to \eqref{dw3}. Therefore, what we have to show is that the right-hand side of \eqref{dFg1} vanishes. This is straightforward when $\sQ=0$ because $\omega_{g,n+1}(p_0,J)|_{\sQ=0}$ have no poles at $p_0=\sigma(p_i)$. However, since the pole structure is different in the refined setting, the proof involves more careful considerations.

\subsubsection{Proof for the 2nd kind}
We first consider the case where $t$ is a 2nd kind time (Definition \ref{def:pair}). That is, for $m\geq2$, we assume that $\omega_{0,1}$ has a pole at $p_\pm$ of order $m$, $\Lambda(q)$ is meromorphic at $q=p_\pm$ of order $l$ where $l\in\{1,..,m-1\}$, and $\gamma$ is a small contour encircling $p_\pm$ in the prescribed orientation. Therefore, the integral simply reduces to taking residue at $q=p_\pm$, and as a consequence, it is sufficient to check the order of the zero of $\Res_{p=\sigma(q)}\phi(p)\cdot\omega_{g,2}(p,q)$. This is a clear contrast from the 3rd kind cases at which one has to consider open-contour integrals.

Our task is to show the following property which immediately implies the variational formula for the 2nd kind:
\begin{prop}\label{prop:II}
    Let us define a multidifferential $I_{g,n+1}$ as follows:
    \begin{equation}
        I_{g,n+1}(q,J):=\Res_{p=\sigma(q)}\phi(p)\cdot\omega_{g,n+2}(q,J,p).
    \end{equation}
    Then, we have $I_{0,1}=\omega_{0,1}$, $I_{0,2}=0$, $I_{\frac12,1}=0$, and for $2g-2+n\geq0$, it can be written as follows:
    \begin{equation}
        I_{g,n+1}(q,J)=-\frac12\omega_{g,n+1}(q,J)-\frac12\omega_{g,n+1}(\sigma(q),J)-\sQ \cdot d_q\frac{I_{g-\frac12,n+1}(q,J)}{2\omega_{0,1}(q)}+\tilde I_{g,n+1}(q,J),\label{dFg2}
    \end{equation}
    where $\Delta_q\tilde I_{g,n+1}(q,J)$ has at least an $m$-th zero at $q=p_\pm$.
\end{prop}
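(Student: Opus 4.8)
The plan is to prove the statement by induction on $\chi=2g-2+n$, in each case extracting the principal part of $\omega_{g,n+2}(q,J,p)$ at $p=\sigma(q)$ and pairing it with $\phi$. For the base cases I would compute the residues directly from the explicit formulae \eqref{w01}, \eqref{w02}, \eqref{w1/2,1}. Since $\omega_{0,2}(q,p)=-B(q,\sigma(p))$ has a pure double pole at $p=\sigma(q)$ and no residue, $I_{0,1}(q)$ equals the double-pole coefficient times $d\phi(\sigma(q))=\omega_{0,1}(\sigma(q))$; using $x\circ\sigma=x$, $y\circ\sigma=-y$ and $\sigma'(\sigma(z))\sigma'(z)=1$ this collapses to $\omega_{0,1}(q)$. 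The vanishings $I_{0,2}=0$ and $I_{\frac12,1}=0$ would follow the same way, from the pole structure of $\omega_{0,3}$ (Theorem \ref{thm:RTR}) and from the explicit singular part of $\omega_{\frac12,2}$ at $p=\sigma(q)$ recorded in \eqref{dpoles1}, which is a total $d_p$-derivative and hence integrates to zero against $\phi$.

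For $\chi\geq0$ I would insert the refined loop equation \eqref{RLE}, $\omega_{g,n+2}(p,q,J)=-\tfrac{1}{2\omega_{0,1}(p)}{\rm Rec}_{g,n+2}^{\sQ}(p,q,J)+R_{g,n+2}(p,q,J)$, and isolate the terms singular at $p=\sigma(q)$. Because $\omega_{g,n+2}$ is residue-free in $p$ (Theorem \ref{thm:RTR}), the simple-pole parts generated by the individual terms must cancel, so only the double-pole coefficients contribute after pairing with $\phi$ and the computation reduces to reading them off. Two sources feed the double pole: the term of ${\rm Rec}_{g,n+2}^{\sQ}$ in which the kernel $\tfrac{dx(p)\,dx(q)}{(x(p)-x(q))^2}$ multiplies $\omega_{g,n+1}(p,J)$ together with the $\omega_{0,2}(p,q)$-product term, which combine through the hyperelliptic identity $\tfrac{dx(p)\,dx(q)}{(x(p)-x(q))^2}=B(p,q)+B(p,\sigma(q))$ into $[B(p,q)-B(p,\sigma(q))]\,\omega_{g,n+1}(p,J)$ with singular part $\omega_{0,2}(p,q)\,\omega_{g,n+1}(p,J)$; and the term $d_q\!\big(\tfrac{\eta^q(p)}{2\omega_{0,1}(q)}\omega_{g,n+1}(q,J)\big)$ of $R_{g,n+2}$, whose double pole arises from $d_q\eta^q(p)$. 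Evaluating both coefficients against $d\phi=\omega_{0,1}$ and again using the involution relations converts quantities at $\sigma(q)$ into quantities at $q$; the two pieces then assemble into the symmetric combination $-\tfrac12\omega_{g,n+1}(q,J)-\tfrac12\omega_{g,n+1}(\sigma(q),J)$ of \eqref{dFg2}.

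The $\sQ$-term $\sQ\,dx(p)\,d_p\tfrac{\omega_{g-\frac12,n+2}(p,q,J)}{dx(p)}$ of ${\rm Rec}_{g,n+2}^{\sQ}$ carries its own pole at $p=\sigma(q)$, inherited from the pole of $\omega_{g-\frac12,n+2}$ there. I would handle it by integrating by parts in $p$ under the residue: using $\tfrac{dx(p)}{\omega_{0,1}(p)}=1/y(p)$, the $\int dx$-piece produced is $\Res_{p=\sigma(q)}\omega_{g-\frac12,n+2}(p,q,J)$, which vanishes since $\omega_{g-\frac12,n+2}$ is residue-free for $2g-2+n\geq0$, while the surviving piece reconstructs $I_{g-\frac12,n+1}(q,J)$, with the outer $d_q$ and $1/2\omega_{0,1}(q)$ appearing because the residue locus $p=\sigma(q)$ moves with $q$ (commuting $\Res_{p=\sigma(q)}$ past $d_q$ via Lemma \ref{lem:chain}). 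This produces exactly $-\sQ\,d_q\tfrac{I_{g-\frac12,n+1}(q,J)}{2\omega_{0,1}(q)}$. Everything left over — the $B(p,q)$-half that is regular at $p=\sigma(q)$, the remaining contour integral in $R_{g,n+2}$, and any subleading singular pieces — is collected into $\tilde I_{g,n+1}(q,J)$.

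The final step is to verify that $\Delta_q\tilde I_{g,n+1}(q,J)$ has a zero of order $m$ at $q=p_\pm$; this is precisely what renders $\tilde I$ invisible to the integration against $\Lambda_l$ over $\gamma_l$ used later in \eqref{dFg1}, as $\Lambda_l$ has there a pole of order at most $m-1$. I would establish it by the same local analysis at the poles of $\omega_{0,1}$ as in Lemma \ref{lem:varpole}: the factor $1/\omega_{0,1}$ present in every leftover term, together with the order-$m$ pole of $\omega_{0,1}$ at $p_\pm$, forces the anti-invariant part to vanish to the required order. The main obstacle throughout is the bookkeeping: first, pinning down the precise coefficient $-\tfrac12$ and the symmetric pairing $\omega_{g,n+1}(q,J)+\omega_{g,n+1}(\sigma(q),J)$ demands careful tracking of the involution Jacobians $\sigma'$, the relation $x'\circ\sigma=x'/\sigma'$, and the sign $y\circ\sigma=-y$; second, commuting the residue at the moving point $\sigma(q)$ past the exterior derivative $d_q$ in both the $\eta^q$-term and the $\sQ$-term, where the moving-pole corrections of Lemma \ref{lem:chain} must be shown to land inside $\tilde I$ rather than spoil the stated closed form.
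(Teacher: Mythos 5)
Your overall skeleton---induction on $\chi=2g-2+n$, the same base cases via \eqref{w02} and \eqref{dpoles1}, inserting the refined loop equation \eqref{RLE}, and absorbing leftovers into $\tilde I_{g,n+1}$---matches the paper, but you apply the loop equation in the wrong variable, and this opens a genuine gap. The paper expands $\omega_{g,n+2}(q,J,p)$ treating $q$ as the \emph{first} (recursion) variable, so that under $\Res_{p=\sigma(q)}$ every factor multiplying a lower-order differential $\omega_{g',n'+2}(q,J',p)$ is independent of $p$; consequently each residue against $\phi(p)$ is literally an $I_{g',n'}$, and the induction closes. In your version (loop equation in $p$), the residue at $p=\sigma(q)$ hits $\phi(p)$ times nonconstant factors such as $\omega_{\frac12,1}(p)/\omega_{0,1}(p)$ or $B(p,\sigma(q))/\omega_{0,1}(p)$ multiplying lower $\omega$'s, so evaluating it requires the \emph{full principal parts} of those $\omega$'s at $p=\sigma(q)$ --- data your induction hypothesis, which only controls the single $\phi$-weighted residue $I$, does not supply --- and it generates explicit leftover $\phi(\sigma(q))$-terms whose cancellation you never establish. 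Relatedly, your reduction to ``only double-pole coefficients'' is false in the refined setting: $\omega_{g,n+2}$ has poles of order greater than two at $p=\sigma(q)$ (this is exactly why the $\sQ$-term in \eqref{dFg2} exists), and those orders contribute higher derivatives of $\phi$, while the residue-freeness argument disposes only of the simple-pole parts.

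The concrete failure point is the $\sQ$-term. Integrating $\phi(p)\cdot\bigl(-\tfrac{\sQ\,dx(p)}{2\omega_{0,1}(p)}\bigr)\,d_p\tfrac{\omega_{g-\frac12,n+2}(p,q,J)}{dx(p)}$ by parts, using $dx/\omega_{0,1}=1/y$, gives $\tfrac{\sQ}{2}\Res_{p=\sigma(q)}\omega_{g-\frac12,n+2}(p,q,J)$ (which indeed vanishes) \emph{plus} $-\sQ\Res_{p=\sigma(q)}\tfrac{\phi(p)\,dy(p)}{2y(p)^2\,dx(p)}\,\omega_{g-\frac12,n+2}(p,q,J)$; this is not $-\sQ\,d_q\tfrac{I_{g-\frac12,n+1}(q,J)}{2\omega_{0,1}(q)}$, and no ``moving residue locus'' effect converts a $d_p$-derivative into the required $d_q$-structure --- Lemma \ref{lem:chain} only lets a derivative already in $q$ pass through the residue. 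In the paper the $d_q$ appears for free: with the loop equation taken in $q$, the $\sQ$-term of ${\rm Rec}_{g,n+2}(q,J,p)$ is already $\sQ\,dx(q)\,d_q[\,\cdot\,]$, and commuting $d_q$ with $\Res_{p=\sigma(q)}$ yields \eqref{dFg3} directly. Finally, your blanket claim that the $1/\omega_{0,1}$ factor forces the $m$-th order zero of $\Delta_q\tilde I_{g,n+1}$ misses the one term where it fails: $\tfrac{\Delta\omega_{\frac12,1}(q)}{2\omega_{0,1}(q)}\,I_{g-\frac12,n+1}(q,J)$ has a priori only an $(m-1)$-th order zero at $q=p_\pm$, because $\omega_{\frac12,1}$ has a simple pole there; the paper rescues this by applying $\Delta_q$, which annihilates the symmetric part $-\tfrac12\omega_{g-\frac12,n+1}(q,J)-\tfrac12\omega_{g-\frac12,n+1}(\sigma(q),J)$ of the inductively known $I_{g-\frac12,n+1}$ --- an argument absent from your sketch. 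To repair your proof, rerun the induction with the loop equation taken in $q$, as the paper does.
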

\begin{proof}
    It is trivial to see that $I_{0,1}=\omega_{0,1}$ and  $I_{0,2}=0$. As discussed in \eqref{dpoles1}, the pole structure of $\omega_{\frac12,2}(p,p_0)$ at $p=\sigma(p_0)$ also immediately implies that $I_{\frac12,1}=0$. For $2g-2+n\geq0$, we proceed by induction and consider refined loop equations \eqref{RLE} for $\omega_{g,n+2}(q,J,p)$ by treating $q$ as the first variable. Let us only give a few useful techniques in order to avoid tedious computational arguments.

    As shown in \eqref{dpoleR} (see also \cite[Proposition 3.17]{O23}), the singular term of $R_{g,n+2}(q,J,p)$ at $p=\sigma(q)$ is written as
    \begin{equation}
        R_{g,n+2}(q,J,p)=d_p\left(\frac{\eta^p(q)}{2\omega_{01,}(p)}\cdot\omega_{g,n+1}(p,J)\right)+\text{ reg. at $p=\sigma(q)$}.
    \end{equation}
    Thus, we have
    \begin{equation}
        \Res_{p=\sigma(q)}\phi(p)\cdot R_{g,n+2}(q,J,p)=-\frac{1}{2}\omega_{g,n+1}(\sigma(q),J).
    \end{equation}
    Notice that the above term is the only contribution from $R_{g,n+2}(q,J,p)$ to $I_{g,n+1}$, which is the second term in \eqref{dFg2}. Therefore, the other terms in \eqref{dFg2} are all coming from ${\rm Rec}_{g,n+2}(q,J,p)$ in \eqref{RLE}.
    
    The first term in \eqref{dFg2} is the contribution of $\omega_{0,2}(q,p)$ in ${\rm Rec}_{g,n+2}(q,J,p)$, more explicitly,
    \begin{equation}
         \Res_{p=\sigma(q)}\phi(p)\cdot\left(-\frac{1}{2\omega_{0,1}(q)}\left(2\omega_{0,2}(q,p)+\frac{dx(q)dx(p)}{(x(q)-x(p))^2}\right)\cdot\omega_{g,n+1}(q,J)\right)=-\frac{1}{2}\omega_{g,n+1}(q,J).
    \end{equation}
    Next, terms involving $\omega_{\frac12,1}$ in ${\rm Rec}_{g,n+2}(q,J,p)$ give
    \begin{align}
        &\Res_{p=\sigma(q)}\phi(p)\cdot\left(-\frac{1}{2\omega_{0,1}(q)}\left(2\omega_{\frac12,1}(q)\cdot\omega_{g-\frac12,n+2}(q,J,p)+\sQ\cdot dx(q)\cdot d_q\frac{\omega_{g-\frac12,n+2}(q,J,p)}{dx(q)}\right)\right)\nonumber\\
        &=-\frac{\Delta\omega_{\frac12,1}(q)}{2\omega_{0,1}(q)}\cdot I_{g-\frac12,n+1}(q,J)-\sQ \cdot d_q\frac{I_{g-\frac12,n+1}(q,J)}{2\omega_{0,1}(q)}.\label{dFg3}
    \end{align}
The last term in \eqref{dFg3} coincides with the third term in \eqref{dFg2}. Note that the first term in \eqref{dFg3} only has an $(m-1)$-order zero at $q=p_\pm$ due to the presence of $\omega_{\frac12,1}(q)$, but
\begin{equation}
    \frac{\Delta\omega_{\frac12,1}(q)}{2\omega_{0,1}(q)}\cdot \Delta_q I_{g-\frac12,n+1}(q,J)
\end{equation}
has a higher order zeroe thanks to the induction ansatz. Then, one can easily see that all other terms have the prescribed zero behaviour thanks to the $\omega_{0,1}(q)$ in the denominator in the refined loop equation \eqref{RLE}. 
\end{proof}

\subsubsection{Proof for the 3rd kind}
We will show an analogous proposition to Proposition \ref{prop:II} but in a slightly different form. First recall that $\omega_{g,n+1}(p_0,J)$ for $2g-2+n\geq0$ has no residue with respect to $p_0$. Thus, the following residue makes sense:
\begin{equation}
    I_{g,n+1}^{*}(q,J):=\left(\Res_{p=p_+}+\Res_{p=p_-}\right)\omega_{0,1}(p)\cdot\int^{p}_{\sigma(p)}\omega_{g,n+2}(q,J,\cdot),\label{I*}
\end{equation}
where the integral is taken with respect to the last variable.
\begin{lem}\label{lem:II}
    $I^*_{0,1}(q)/\omega_{0,1}(q)$ and $I_{g,n+1}(q,J)$ for $2g-2+n\geq-1$ are regular at $q=p_\pm$.
\end{lem}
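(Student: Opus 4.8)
The plan is to prove Lemma \ref{lem:II} in complete parallel with Proposition \ref{prop:II}: establish the base cases by hand and then run an induction on $\chi=2g-2+n$ driven by the refined loop equation \eqref{RLE}. The essential new feature compared with the $2$nd kind is that here $\gamma$ is an \emph{open} path and $\Lambda=1$, so the object $I^*_{g,n+1}$ in \eqref{I*} involves a primitive $\int^{p}_{\sigma(p)}\omega_{g,n+2}(q,J,\cdot)$ rather than a pure residue; consequently one must track the endpoint (logarithmic) singularities of this primitive instead of merely counting orders of vanishing. The key structural observation I would record first is that by Theorem \ref{thm:RTR} the multidifferential $\omega_{g,n+2}(q,J,z)$ is \emph{regular} in its first slot $q$ at $q=p_\pm$, since $p_\pm\in\cP^{\infty}$ whereas the poles in $q$ lie only in $\cR^*\cup\sigma(J\cup\cP_+^0)$. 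Hence any singularity of $I^*_{g,n+1}$ at $q=p_\pm$ can arise only from the collision of the integration endpoint $p\to p_\pm$ with the $z=\sigma(q)$ pole of the integrand while the residue $\Res_{p=p_\pm}$ is taken.

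For the base case I would compute $I^*_{0,1}$ explicitly. Writing $\omega_{0,2}(q,z)=-B(q,\sigma(z))$, the primitive $\int^{p}_{\sigma(p)}\omega_{0,2}(q,\cdot)$ is, as a differential in $q$, a third-kind differential with simple poles at $q=p$ and $q=\sigma(p)$. Multiplying by $\omega_{0,1}(p)$ — which has a simple pole at $p=p_\pm$ whose residue is precisely the $3$rd kind time — and taking $\Res_{p=p_+}+\Res_{p=p_-}$ produces a differential in $q$ whose only singularity near $q=p_\pm$ is a simple pole with leading coefficient matching that of $\omega_{0,1}(q)$ itself. Therefore $I^*_{0,1}(q)/\omega_{0,1}(q)$ is regular at $q=p_\pm$, which is the first claim. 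The remaining base cases $I^*_{0,2}$ and $I^*_{\frac12,1}$ (the $\chi=-1$ seeds of the induction) I would dispatch using the explicit pole structure of $\omega_{0,3}$ and of $\omega_{\frac12,2}$ at $z=\sigma(q)$, exactly the input already exploited in \eqref{dpoles1} and in the proof of Proposition \ref{prop:II}.

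For the inductive step I would apply the refined loop equation \eqref{RLE} to $\omega_{g,n+2}(q,J,z)$ with $q$ as the recursion variable, obtaining the $-\tfrac{1}{2\omega_{0,1}(q)}{\rm Rec}_{g,n+2}$ term plus $R_{g,n+2}$. The factor $1/\omega_{0,1}(q)$ pulls out of the $z$-integration and the $p$-residue, and since $\omega_{0,1}$ has a pole at $q=p_\pm$ it supplies the vanishing that controls the ${\rm Rec}$-contribution, which by the induction hypothesis reorganises into lower-order $I^*$'s together with a $\sQ\,d_q$-derivative term, in direct analogy with \eqref{dFg2}. The piece coming from $R_{g,n+2}$ requires the extra care special to this open-path setting: its singular part at $z=\sigma(q)$ is a total $z$-derivative $d_z\big(\tfrac{\eta^z(q)}{2\omega_{0,1}(z)}\,\omega_{g,n+1}(z,J)\big)$, so integrating from $\sigma(p)$ to $p$ leaves genuine boundary terms evaluated at $z=p$ and $z=\sigma(p)$ before $\Res_{p=p_\pm}\omega_{0,1}(p)$ is applied.

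I expect the main obstacle to be precisely the control of these boundary contributions. Unlike the $2$nd kind case, where all integrations are closed contours and one only counts orders of zeros, here one must show that the logarithmic/simple-pole endpoint behaviour of the open-path primitive at $q=p_\pm$ either cancels between $\Res_{p=p_+}$ and $\Res_{p=p_-}$ or is absorbed by the $\omega_{0,1}(q)$ denominator produced by the loop equation. Making this cancellation transparent — rather than drowning in the expansion of the primitive near colliding endpoints — is the technical heart of the argument, and I would isolate it as the one estimate to carry out in full detail, leaving the remaining terms to the same order-counting that already succeeds in Proposition \ref{prop:II}.
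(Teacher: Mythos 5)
Your overall strategy (explicit base case, then induction on $\chi=2g-2+n$ through the refined loop equation \eqref{RLE}, in parallel with Proposition \ref{prop:II}) matches the paper's, and your base-case computation agrees with the paper's formula $I^*_{0,1}(q)=\left(\Res_{p=p_+}+\Res_{p=p_-}\right)\omega_{0,1}(p)\cdot\eta^p(q)$. However, you have set the induction up on the wrong object, and this creates exactly the difficulty you then leave unresolved. The second claim of Lemma \ref{lem:II} concerns $I_{g,n+1}(q,J)=\Res_{p=\sigma(q)}\phi(p)\cdot\omega_{g,n+2}(q,J,p)$ --- the residue-type quantity of Proposition \ref{prop:II} --- and \emph{not} $I^*_{g,n+1}$. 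For $I_{g,n+1}$ there is no open path and no primitive: the only potentially singular contribution at $q=p_\pm$ comes from the pole of $\omega_{g,n+2}(q,J,p)$ at $p=\sigma(q)$, and the paper's proof is a rerun of the order-counting of Lemma \ref{lem:varpole} and Proposition \ref{prop:II}, treating the $\omega_{0,2}$ and $\omega_{\frac12,1}$ contributions separately and inducting; regularity at $q=p_\pm$ then follows because the terms $\omega_{g,n+1}(q,J)$ and $\omega_{g,n+1}(\sigma(q),J)$ in the analogue of \eqref{dFg2} have no poles at $p_\pm\in\cP^{\infty}$ (Theorem \ref{thm:RTR}), while the $1/\omega_{0,1}(q)$ factors now contribute \emph{zeroes} at $q=p_\pm$. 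The endpoint/boundary analysis you single out as ``the technical heart'' simply does not occur in this lemma's proof: the open-path manipulations are deferred to the subsequent $I^{**}$ argument, where they are handled not by local expansion at colliding endpoints but by the symmetry of $\omega_{g,n+2}$ together with the exchange-of-residues identity, with Lemma \ref{lem:II} and Proposition \ref{prop:II} supplying precisely the regularity needed to kill the residual terms in \eqref{I**1} and \eqref{I**2}.

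Concretely, then, the proposal has a genuine gap: the decisive step --- showing that the boundary terms produced by integrating the total-derivative part of $R_{g,n+2}$ along the open path ``either cancel between $\Res_{p=p_+}$ and $\Res_{p=p_-}$ or are absorbed by $\omega_{0,1}(q)$'' --- is asserted as an expectation rather than proven, and it is exactly the step that your choice of $I^*_{g,n+1}$ as the induction variable makes unavoidable. Two smaller inaccuracies compound this: for a 3rd kind time, $p_\pm$ need not be a \emph{simple} pole of $\omega_{0,1}$ (only its residue is the time; cf.\ the remark after \eqref{I**2} that higher-order pole parts drop out thanks to Proposition \ref{prop:II}), so your base case must be phrased as $I^*_{0,1}$ picking up the full singular part of $\omega_{0,1}$ at $q=p_\pm$, not just a matching simple pole; and the $\chi=-1$ seeds of the induction are $I_{0,2}$ and $I_{\frac12,1}$ (both vanish by the pole structure \eqref{dpoles1}, as in Proposition \ref{prop:II}), not $I^*_{0,2}$ and $I^*_{\frac12,1}$. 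If you replace the induction on $I^*$ by the paper's induction on $I_{g,n+1}$, the argument closes with tools you already have; as written, it does not.
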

\begin{proof}
    For $I^*_{0,1}(q)$, we have
    \begin{equation}
        I^*_{0,1}(q)=\left(\Res_{p=p_+}+\Res_{p=p_-}\right)\omega_{0,1}(p)\cdot\eta^p(q).
    \end{equation}
    Thus, $I^*_{0,1}$ picks up the singular part of $\omega_{0,1}$ at $q=p_+$ and $q=p_-$ (c.f. \cite[Section 2]{O23}). Thus, it becomes regular after dividing by $\omega_{0,1}(q)$.
    
    For $I_{g,n+1}(q,J)$ for $2g-2+n\geq-1$, since the proposition only concerns a local behaviour at $q=p_\pm$, potentially singular terms may appear only from the pole of $\omega_{g,n+2}(q,J,p)$ at $p=\sigma(q)$ and we only focus on these poles similar to above discussions. Then, for the rest of the proof we apply the same technique as the proof of Lemma \ref{lem:varpole} and Proposition \ref{prop:II}. That is, we treat the contributions from $\omega_{0,2}$ and $\omega_{\frac12,1}$ differently, and check the singular behaviour at $q=p_\pm$ by induction. Since arguments will be almost parallel to the one given in Lemma \ref{lem:varpole} and Proposition \ref{prop:II}, we omit it.
\end{proof}

We now prove the variational formula for the 3rd kind. Lemma \ref{lem:II} implies that $I^*_{g,n+1}(q,J)$ as a differential in $q$ has no residue everywhere on $\Sigma$. This is because $\omega_{g,n+2}(q,J,p)$ has no residue with respect to $q$ (Theorem \ref{thm:RTR}), and thus residues can only potentially appear at $q=p_\pm$ after taking the integral \eqref{I*} which we have just shown that this is not the case. Thus, we can consider integration once more:
\begin{align}
    I^{**}_{g,n}(J):=&\left(\Res_{q=p_+}+\Res_{q=p_-}\right)\omega_{0,1}(q)\cdot \int^q_{\sigma(q)}I^*_{g,n+1}(\cdot,J)\nonumber\\
    =&\left(\Res_{q=p_+}+\Res_{q=p_-}\right)\left(\Res_{p=p_+}+\Res_{p=p_-}\right)\omega_{0,1}(q)\cdot\omega_{0,1}(p)\cdot\int^{q}_{\sigma(q)}\int^{p}_{\sigma(p)}\omega_{g,n+2}(\cdot,\cdot,J).\label{I**}
\end{align}
Since $\omega_{g,n+2}$ is  symmetric multidifferential, one can simply relabel $p\leftrightarrow q$ in \eqref{I**}. On the other hand, as discussed in \cite[Appendix A]{O23}, exchanging the order of residues would give
\begin{equation}
    \left(\Res_{p=p_+}+\Res_{p=p_-}\right)\left(\Res_{q=p_+}+\Res_{q=p_-}\right)=\left(\Res_{q=p_+}+\Res_{q=p_-}\right)\left(\Res_{p=p_+}+\Res_{p=p_-}+\Res_{p=q}+\Res_{p=\sigma(q)}\right).
\end{equation}
Therefore, we find
\begin{equation}
    \left(\Res_{q=p_+}+\Res_{q=p_-}\right)\left(\Res_{p=q}+\Res_{p=\sigma(q)}\right)\omega_{0,1}(q)\cdot\omega_{0,1}(p)\cdot\int^{q}_{\sigma(q)}\int^{p}_{\sigma(p)}\omega_{g,n+2}(\cdot,\cdot,J)=0.\label{I**1}
\end{equation}

Now notice that 
\begin{align}
    \left(\Res_{p=q}+\Res_{p=\sigma(q)}\right)\omega_{0,1}(p)\cdot \int^{p}_{\sigma(p)}\omega_{g,n+2}(\cdot,q,J)=&2\Res_{p=\sigma(q)}\phi(p)\cdot \omega_{g,n+2}(p,q,J)\nonumber\\
    =&2I_{g,n+1}(q,J).
\end{align}
Thus, with the help of Proposition \ref{prop:II}, the left-hand side of \eqref{I**1} can be written as
\begin{align}
    \text{L.H.S. of \eqref{I**1}}=&2\left(\Res_{q=p_+}+\Res_{q=p_-}\right)\omega_{0,1}(q)\cdot\int^{q}_{\sigma(q)}I_{g,n+1}(\cdot,J)\nonumber\\
    =&4t\int^{p_+}_{p_-}I_{g,n+1}(\cdot,J),\label{I**2}
\end{align}
where $t$ is the 3rd kind time at $p_\pm$. Note that there is no contribution from higher order poles of $\omega_{0,1}(q)$ thanks to Proposition \ref{prop:II}. Combining \eqref{I**1} and \eqref{I**2}, we conclude that the right-hand side of \eqref{dFg1} vanishes for the 3rd kind as well.

This completes the proof of Theorem \ref{thm:var} for both $\omega_{g,n+1}$ and $F_g$.

\subsection{Explicit computations}\label{app:rdc}
We will provide explicit computational results for Proposition \ref{prop:rdc1}, \ref{prop:rdc2}, and \ref{prop:QQC}.

\subsubsection{Proof of Proposition \ref{prop:rdc1}}\label{sec:proofrdc1}
We will give computations for the Weber, Whittaker, and Bessel curve as evidence of Proposition \ref{prop:rdc1}. The statement for other curves can be similarly checked. Throughout Section \ref{sec:proofrdc1}, we let $z$ be a coordinate on $\Sigma=\bP^1$, and we parameterise $y(z),x(z)$ such that $\cR=\{1,-1\}$ and $\cP=\{0,\infty\}$, some of which are different from rational expressions in \cite{IKT18-2} but they are related by an appropriate M\"{o}bius transformation. As shown in \cite{IKT18-2}, the corresponding generalised cycle $(\Lambda,\gamma)$ is given such that $\Lambda=1$ and $\gamma$ is a contour from $0$ to $\infty$. We denote by $t$ the corresponding 3rd kind time, which is none other than $\lambda$ in \cite{IKT18-2}. Furthermore, we choose $\cP_+=\{\infty\}$ to define a refined spectral curve and denote by $\mu$ for the associated complex parameter. 

 \begin{description}
 \item[Weber] The underlying curve is given by
 \begin{equation}
     y^2-\frac{x^2}{4}+t=0,\quad y(z)=\frac{\sqrt{t}}{2}\left(z-\frac{1}{z}\right),\quad x(z)=\sqrt{t}\left(z+\frac{1}{z}\right).
 \end{equation}
Then, the variational operator and $\omega_{\frac12,1}$ are respectively given as
\begin{equation}
    \delta_t^{(1)}=\frac{\partial}{\partial t} -\frac{z \left(z^2+1\right)}{2 (z-1) (z+1) t}\frac{\partial}{\partial z}
\end{equation}
\begin{equation}
    \omega_{\frac12,1}(z_0)=\frac{\sQ}{2}\left(-\frac{z_0^2+1}{(z_0-1) z_0 (z_0+1)}-\frac{\mu}{z_0}\right)dz_0.
\end{equation}
$\omega_{\frac12,2}$ itself is lengthy to write down here, but we have
\begin{equation}
    \delta_t^{(1)}*\omega_{\frac12,1}(z_0)=\int_0^{\infty}\omega_{\frac12,2}(\cdot,z_0)= -\sQ\frac{z_0 \left(-\mu +\mu  z_0^2+2 z_0^2+2\right)}{(z_0-1)^3 (z_0+1)^3 t}dz_0.
\end{equation}

     \item[Whittaker] The underlying curve is given by
 \begin{equation}
     xy^2-\frac{x}{4}+t=0,\quad y(z)=\frac{z-1}{2 (z+1)},\quad x(z)=\frac{t (z+1)^2}{z}.
 \end{equation}
Then, the variational operator and $\omega_{\frac12,1}$ are respectively given as
\begin{equation}
    \delta_t^{(1)}=\frac{\partial}{\partial t} -\frac{z (z+1)}{t (z-1)}\frac{\partial}{\partial z}
\end{equation}
\begin{equation}
    \omega_{\frac12,1}(z_0)=\frac{\sQ}{2}\left(\frac{1}{1-z_0^2}-\frac{\mu}{z_0}\right)dz_0.
\end{equation}
$\omega_{\frac12,2}$ itself is lengthy to write down here, but we have
\begin{equation}
    \delta_t^{(1)}*\omega_{\frac12,1}(z_0)=\int_0^{\infty}\omega_{\frac12,2}(\cdot,z_0)= -\sQ\frac{-\mu +\mu  z_0+z_0+1}{t \left(z_0-1\right)^3}dz_0.
\end{equation}

\item[Bessel] The underlying curve is given by
 \begin{equation}
     x^2y^2-\frac{x}{4}-t^2=0,\quad y(z)=-\frac{z^2-1}{16 t z},\quad x(z)=-\frac{16 t^2 z}{(z+1)^2}.
 \end{equation}
Then, the variational operator and $\omega_{\frac12,1}$ are respectively given as
\begin{equation}
    \delta_t^{(1)}=\frac{\partial}{\partial t} +\frac{2 z (z+1)}{t (z-1)}\frac{\partial}{\partial z}
\end{equation}
\begin{equation}
    \omega_{\frac12,1}(z_0)=\frac{\sQ}{2}\left(\frac{z_0^2+1}{z_0(1-z_0^2)}-\frac{\mu}{z_0}\right)dz_0.
\end{equation}
$\omega_{\frac12,2}$ itself is lengthy to write down here, but we have
\begin{equation}
    \delta_t^{(1)}*\omega_{\frac12,1}(z_0)=\int_0^{\infty}\omega_{\frac12,2}(\cdot,z_0)= -2\sQ\frac{-\mu +\mu  z_0+z_0+1}{t \left(z_0-1\right)^3}dz_0.
\end{equation}
     
 \end{description}

\subsubsection{Proof of Proposition \ref{prop:rdc2}}
Recall that the parametrisation of the curve is given in \eqref{rational} as
\begin{equation}
    x(z)=z^2-2q_0,\quad y(z)=2z(z^2-3q_0)=2z(z^2-q_z^2),\quad q_0=\sqrt{-\frac{t}{6}},\quad q_z:=\sqrt{3q_0}.
\end{equation}
Then, we can explicitly construct the variational operator and $\omega_{\frac12,1}$ as
\begin{equation}
    \delta_t^{(1)}=\frac{\partial}{\partial t} -\frac{1}{2z\sqrt{6t}}\frac{\partial}{\partial z}
\end{equation}
\begin{equation}
    \omega_{\frac12,1}(z_0)=\sQ\left(-\frac{1}{2 z_0}+\frac{\mu -1}{2 \left(z_0-q_z\right)}+\frac{-\mu -1}{2 \left(q_z+z_0\right)}\right)dz_0.
\end{equation}
$\omega_{\frac12,2}$ itself is complicated to write down here, but we have

\begin{align}
  \delta_t^{(1)}\omega_{\frac12,1}(z_0)=& \sQ\bigg(-\frac{2 (\mu +2) z_0 q_z^2+2 (2 \mu +1) z_0^2 q_z+2 q_z^3+(2 \mu +1) z_0^3}{8 z_0^3 q_z^3 \left(q_z+z_0\right){}^2}\nonumber\\
  &\qquad+\frac{(\mu -1) \left(q_z^2+z_0^2\right)}{8 q_z^3 \left(q_z^2-z_0^2\right)^2}\bigg)dz_0,\label{dw1/2,1}\\
  \Res_{z=\infty}\Lambda_t(z)\cdot\omega_{\frac12,2}(z,z_0)=&\sQ\left(-\frac{2 (\mu +2) z_0 q_z^2+2 (2 \mu +1) z_0^2 q_z+2 q_z^3+(2 \mu +1) z_0^3}{8 z_0^3 q_z^3 \left(q_z+z_0\right)^2}\right)dz_0.
\end{align}
Therefore, they become the same if and only if $\mu=1$. It is worth noting that the second term in \eqref{dw1/2,1} is singular at $z_0=-q_z$ whereas $\omega_{\frac12,2}$ will never have a pole at $\pm q_z$ due to Theorem \ref{thm:RTR} and Lemma \ref{lem:varpole}. This is a clear contrast from hypergeometric type curves that $\omega_{\frac12,1}$ has residue at $\sigma(\cP_+)$, but its variation $\delta_t*\omega_{\frac12,1}$ is regular as shown in Section \ref{sec:proofrdc1}. It is interesting to investigate whether this phenomenon arises from the difference between $\cP^0$ and $\cP^\infty$, or difference between 2nd kind and 3rd kind.

\subsubsection{Proof of Proposition \ref{prop:QQC}}
Without explicit computation, it is not hard to see from the definition of $R_{1,1}^{\sQ\text{-{\rm top}}}(z)$ that it has a triple pole at $p=p_\pm$ whose coefficient is proportional to $(\mu-1)(\mu-3)$. This can be checked by looking at the contribution of $\omega_{\frac12,1}$ whose pole structure is given in \eqref{mu}. What is less straightforward without explicit computation is to show that the subleading order coefficient is still proportional to $(\mu-1)$ but not to $(\mu-3)$ anymore. By explicit computations, we have
\begin{align}
    R_{1,1}^{\sQ\text{-{\rm top}}}(z)=&\bigg(\frac{(1-\mu) \left(6 \mu  z^2 q_z^2-18 z^2 q_z^2+9 \mu  q_z^4-11 q_z^4-7 \mu  z^4+5 z^4\right)}{64 q_z^4
   \left(q_z-z\right){}^3 \left(q_z+z\right){}^3}\nonumber\\
    &\quad-\frac{15 \mu ^2+7}{64 q_z^4 \left(q_z-z\right) \left(q_z+z\right)}\bigg)dz.\label{R11top}
\end{align}
This clearly shows that the $\sQ$-top quantisation condition is satisfied only if $\mu=1$. Once we set $\mu=1$, then ${\rm Rec}_{g,1}^{\sQ\text{-{\rm top}}}(z)$ is regular at $z=q_z$ for all $g>1$ so is $R_{g,1}^{\sQ\text{-{\rm top}}}(z)$ because $\omega_{\frac12,1}(z)$ is regular at $z=q_z$. This completes the proof of Proposition \ref{prop:QQC}.

\subsection{Proof of Theorem \ref{thm:QC}}\label{app:proofQC}
We will consider the contribution of $F_g$ for each $g\geq0$.

\subsubsection{$F_{0}$}
It is already shown in \cite{IS15} that 
\begin{equation}
    F_0=-\frac{48}{5}q_0^5,
\end{equation}
from which we find
\begin{equation}
    \frac{\partial F_0}{\partial t}=4q_0^3.
\end{equation}
This is consistent with the $\epsilon^0_1$ term in Theorem \ref{thm:Qtop}, and also consistent with the unrefined quantum curve in \cite{IS15}. 

\subsubsection{$F_{\frac12}$}
We take the definition of $F_{\frac12}$ as in Theorem \ref{thm:QC}, which gives
\begin{equation}
    \frac{\partial^2}{\partial t^2}F_{\frac12}^{\sQ\text{-{\rm top}}}:=\Res_{z_0=0}\Res_{z_1=0}\cdot \Lambda_t(z_0)\cdot \Lambda_t(z_1)\cdot\varpi_{\frac12,2}(z_0,z_1)=\frac{1}{4}\left(-\frac32\right)^{\frac14}t^{\frac14}.\label{dFdt2}
\end{equation}
Then from the boundary condition set in Theorem \ref{thm:QC}, one finds that
\begin{equation}
     F_{\frac12}^{\sQ\text{-{\rm top}}}=\frac{4}{5}\left(-\frac32\right)^{\frac14}t^{\frac54},\quad Q_1^{\sQ\text{-{\rm top}}}(z_0)\big|_{\mu=1}=2\frac{\partial  F_{\frac12}^{\sQ\text{-{\rm top}}}}{\partial t}.
\end{equation}

One may wonder why we do not consider a solution $\tilde F_{\frac12}^{\sQ\text{-{\rm top}}}$ by respecting the variational formula for $\omega_{\frac12,1}$ as:
\begin{equation}
    \frac{\partial \tilde F_{\frac12}}{\partial t}:=\Res_{z=\infty}\Lambda_t(z)\cdot\varpi_{\frac12,1}(z)=\mu\cdot q_z,\label{F1/2,1}
\end{equation}
with unfixed $\mu$. In fact, the condition $\tilde F_{\frac12}(0)=0$ implies that
\begin{equation}
    \tilde F_{\frac12}^{\sQ\text{-{\rm top}}}=\frac{4}{5}\left(-\frac32\right)^{\frac14}t^{\frac54}\cdot\mu,\qquad Q_1^{\sQ\text{-{\rm top}}}(z_0)=2\frac{\partial \tilde F_{\frac12}}{\partial t},
\end{equation}
for any value of $\mu$. One of the issues of taking this definition is that it does not work for the 3rd kind times, because $\omega_{\frac12,1}$ has a pole at the end points of the associated path $\gamma$. Another problem is that if we take \eqref{F1/2,1} as the defining equation for a general spectral curve $\cS_{\bm\kappa,\bm\mu}(\bm t)$, then one cannot show that 
\begin{equation}
    \frac{\partial^2 F_{\frac12}^{\sQ\text{-{\rm top}}}}{\partial t_k\partial t_l}=\oint_{p_0\in\gamma_l}\delta_{t_l}^{(1)}*\left(\Lambda_k(p_0)\cdot\varpi_{\frac12,1}(p_0)\right)
\end{equation}
is symmetric in $k\leftrightarrow l$ or not.

\begin{rem}\label{rem:F1/2}
    The above observation may motivate one to propose a definition of $F_\frac12$ as:
    \begin{equation}
        \forall k,l\in\{1,..,|\bm t|\}\qquad\frac{\partial^2 F_\frac12}{\partial t_k\partial t_l}:=\int_{p\in\gamma_k}\int_{q\in\gamma_l}\Lambda_k(p)\cdot\Lambda_l(q)\cdot\omega_{\frac12,2}(p,q).\label{defF1/2}
    \end{equation}
    for a general refined spectral curve $\cS_{\bm\kappa,\bm\mu}(\bm t)$ satisfying the refined deformation condition. The above definition makes sense, that is, one can show that it is symmetric under $k\leftrightarrow l$ by utilising \eqref{dpoles1} and the anti-invariantness of generalised cycles under the involution $\sigma$. Indeed, this definition works for all hypergeoemtric curves as well as this example. Therefore, we propose that $F_\frac12$ is defined as \eqref{defF1/2}, which is defined uniquely up to a constant and linear dependence in $\bm t$. Similarly, up to constant, we propose that $F_1$ for a refined spectral curve satisfying the refined deformation condition is defined by
    \begin{equation}
        \forall k\in\{1,..,|\bm t|\}\qquad\frac{\partial F_1}{\partial t_k}:=\int_{p\in\gamma_k}\Lambda_k(p)\cdot\omega_{1,1}(p).\label{defF1}
    \end{equation}
    \end{rem}

\subsubsection{Proof of Theorem \ref{thm:QC}.}
We will show that
\begin{equation}
    g\geq1\qquad\frac{\partial F_g}{\partial t}=\Res_{z=\infty}\Lambda_t(z)\cdot\omega_{g,1}(z)=\frac12Q_{2g\geq2}^{\sQ\text{-{\rm top}}}(z_0),\label{QCF}
\end{equation}
where the first equality is merely the variational formula. The second equality means that $Q_{2g\geq2}^{\sQ\text{-{\rm top}}}$ is indeed a constant which we will show below.

The proof is similar to that in \cite{IS15}. For a refined spectral curve $\cS_{\mu=1}(t)$ satisfying the $\sQ$-top quantisation condition, $\omega_{\frac12,1}$ and ${\rm Rec}_{g,1}^{\sQ\text{-{\rm top}}}(z)$ are regular at $z=q_z$, hence \eqref{Qk>1} implies that there should exist a function $R_g(t)$ such that
\begin{equation}
    R_{g,1}^{\sQ\text{-{\rm top}}}(z_0)=R_g(t)\cdot\frac{dz_0}{z_0^2-q_z^2}.
\end{equation}
Thus, by using the explicit rational expression of $x(z)$ and $y(z)$ given in \eqref{rational}, we have
\begin{equation}
    Q_{2g}^{\sQ\text{-{\rm top}}}(z)=2R_g(t).
\end{equation}
Finally, since $\omega_{0,1}(z_0)$ has a 5-th order pole at $z_0=\infty$, the $\sQ$-top loop equation (the $\sQ$-top degree part of the refined loop equation \eqref{RLE}) implies that 
\begin{equation}
    \Res_{z=\infty}\Lambda_t(z)\cdot\omega_{g,1}(z)=\Res_{z=\infty}\Lambda_t(z)\cdot R_{g,1}(z)=R_g(t).
\end{equation}
Note that from \eqref{R11top}, one finds that
\begin{equation}
    \frac{\partial F_1}{\partial t}=-\frac{11}{48t},\quad F_1=-\frac{11}{48}\log t
\end{equation}
Therefore, \eqref{QCF} holds, and this completes the proof.

\newpage

\printbibliography

\end{document}